\numberwithin{equation}{section}
 \newcommand{\be}{\begin{equation}}
 \newcommand{\ee}{\end{equation}}
 \newcommand{\taubar}{{\bar \tau}}
 \newcommand{\wtilde}{{\tilde w}}
 \newcommand{\Dtilde}{{\tilde D}}
 \newcommand{\rv}{{\bf r}}
 \newcommand{\pv}{{\bf p}}
 \newcommand{\somma}[3]{{\sum_{{#1}={#2}}^{#3}}}
\theoremstyle{plain}
\newtheorem{theorem}{Theorem}[section]
\newtheorem{prop}[theorem]{Proposition}
\newtheorem{defn}[theorem]{Definition}
\newtheorem{cor}[theorem]{Corollary}
\newtheorem{lem}[theorem]{Lemma}
\theoremstyle{remark}
\newtheorem*{oss}{{\bf Remark}}
\newcommand{\uhat}{{\hat{\mathbf u}}}
\newcommand{\utilde}{{\tilde{\mathbf u}}}
\newcommand{\uv}{{\mathbf u}}
\newcommand{\y}{{\mathbf y}}
\newcommand{\nuhat}{{{\hat \nu}_{t,x}}}
\newcommand{\nutilde}{{{\tilde \nu}_{t,x}}}
\newcommand{\E}{{\mathbb E}}
\newcommand{\R}{{\mathbb R}}
\newcommand{\Pbb}{{\mathbb P}}
\newcommand{\Pbbtilde}{{\tilde{\mathbb P}}}
\newcommand{\lamb}{{\lambda_{\beta, \bar p, \tau}}}
 \newcommand{\e}[1]{e^{\scalebox{0.7}{$\displaystyle #1$}}}
\newcommand{\limN}{{\lim_{N \to \infty}}}
\newcommand{\weak}{\rightharpoonup}
\newcommand{\weakstar}{\overset{*}{\rightharpoonup}}
\newcommand{\normp}[2][]{\ensuremath{\left\lVert #1 \right\rVert}_{L^{#2}(Q_T)}}
\newcommand{\expnu}[1][]{\langle #1, \nutilde \rangle}
\begin{document}
\title{Hydrodynamic Limits and Clausius inequality
   for Isothermal Non-linear Elastodynamics with Boundary Tension}
\author{
  Stefano Marchesani\\
Stefano Olla}
\date{\today}
\maketitle

{\bf Keywords:} hydrodynamic limits, isothermal Euler equation,
non-linear wave equation, weak solutions, boundary conditions, entropy solutions,
Clausius inequality, compensated compactness.

{\bf AMS Mathematics Subject Classification:} 82C22, 60K35, 35L50

\begin{abstract}
  We consider a chain of % $N+1$
  particles connected by %$N$
  an-harmonic springs,
  with a boundary force (tension) acting on the last particle, while the first particle is kept pinned
  at a point. The particles are in contact with stochastic heat baths,
  whose action on the dynamics conserves the volume and the momentum,
  while energy is exchanged with the heat baths in such way that, in equilibrium,
  the system is at a given temperature $T$. We study the space empirical
  profiles of volume stretch and momentum under hyperbolic
  rescaling of space and time as the size of the system growth to
  be infinite, with
  the boundary tension changing slowly in the macroscopic time scale.
  We prove that the probability distributions of these profiles concentrate on $L^2$-valued
  weak solutions of the isothermal Euler equations
  (i.e. the non-linear wave equation, also called p-system), satisfying
  the boundary conditions imposed by the microscopic dynamics.
  Furthermore, the weak solutions obtained satisfy the Clausius inequality between
  the work done by the boundary force and the change of the total free energy in the system.
  This result includes the shock regime of the system. 
\end{abstract}

\section{Introduction}
\label{sec:intro}

Boundary conditions in hyperbolic systems of conservation laws introduce challenging
mathematical problems, in particular for weak solutions that are not of bounded variations.
The solution may depend on the particular approximation used, and reflects different
microscopic origins of the equation. Recently (cf. \cite{marchesani2018note})
we have considered $L^2$-valued weak solution to the
isothermal Euler equation in Lagrangian coordinates on $[0,1]$
(also called in the literature non-linear wave equation or p-system):
\begin{equation} \label{eq:p-intro}
\begin{cases}
 \partial_t  r  -  \partial_x  p =0
 \\
 \partial_t  p - \partial_x \tau( r)=0
\end{cases},
\end{equation}
with the following boundary conditions: $p(t,0)=0$ (the material is attached to a fixed point
on the left side),
$\tau(r(t,1)) = \bar \tau(t)$ (a time dependent force $\bar \tau(t)$ is acting on the right  side).
The precise sense an $L^2$-valued solution satisfies the boundary condition is given in Definition
\ref{def:1}. In \cite{marchesani2018note} we consider viscous approximations
of \eqref{eq:p-intro}, that requires two extra boundary conditions.
We choose these extra boundary conditions to be of Neumann type (i.e. \emph{conservative}).
Adapting the $L^2$-version of the compensated-compactness argument of Shearer \cite{Shearer1}
and \cite{SerreShearer}, we prove in \cite{marchesani2018note}  the existence
of vanishing viscosity solutions to the p-system. Furthermore, these solutions satisfy the usual
Lax-entropy production characterisation and the thermodynamic
Clausius inequality, which relates the change
of the total free energy to the work done by the boundary force
(see Section \ref{sec:clausius-inequality}). We call such weak solutions
\emph{thermodynamic entropy solutions}.

In the present article we study the microscopic \emph{statistical mechanics} origin of \eqref{eq:p-intro}.
We want to understand how equation \eqref{eq:p-intro} emerges in a \emph{hydrodynamic limit},
i.e. a hyperbolic space-time rescaling of a microscopic dynamics.
We consider a chain of $N+1$ particles connected by $N$ anharmonic springs
(see Figure \ref{fig:figura}). The first particle on the left is fixed at a point, while on
the rightmost particle is acting a time-dependent force (tension). 
The Hamiltonian dynamics of this system is perturbed by the action of stochastic
heat baths at temperature $T$. Each heat bath is acting, independently from the others,
between two springs connected by a particle, randomly exchanging momenta and volume stretch.
The energy of the particles is not conserved but exchanged
with the heath baths in such a way that, in equilibrium, the system is at temperature $T$.
The intensity of the action of the heat baths is such that it does not affect the
macroscopic equation directly, but sufficiently strong to provide the required regularity at certain
microscopic scales and establish an \emph{isothermal} macroscopic evolution.
In this sense these heat baths act like a \emph{stochastic viscosity},
vanishing after the space-time hyperbolic rescaling. The conservative nature of these stochastic
heat baths also provides at the boundaries the analogue of the extra Neumann conditions
as used in \cite{marchesani2018note}. 

\begin{figure}
\centering
        \includegraphics[totalheight=4cm]{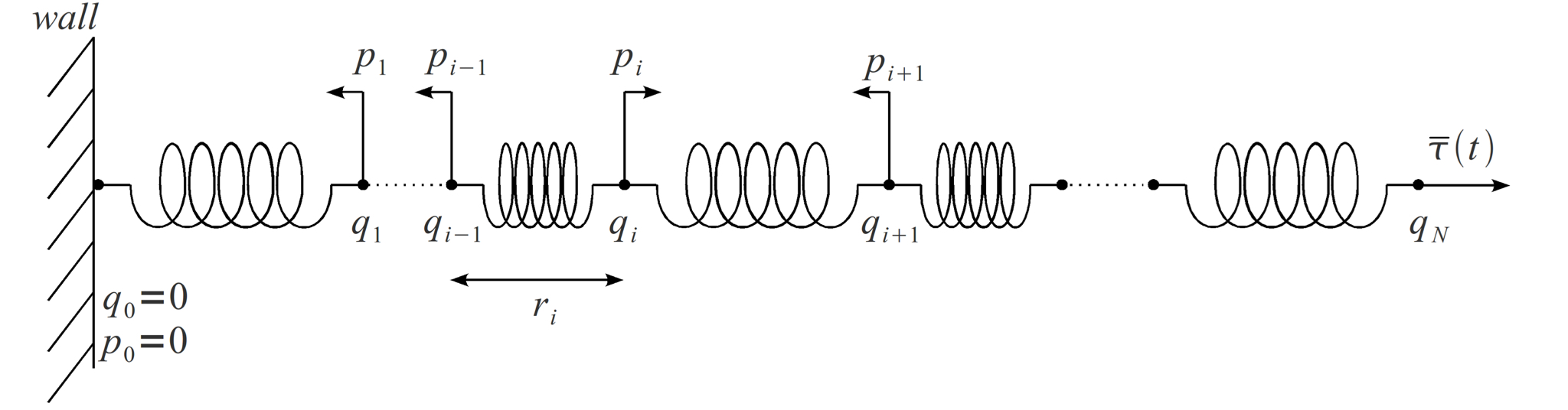}
        \caption{Microscopic model}
    \label{fig:figura}
\end{figure}

We rescale space and time using $N$ as parameter in such a way that the time-dependent external force is changing on the macroscopic scale (i.e. very slowly on the microscopic time scale).
We prove that the probability distributions of the random profiles of volume stretch and momenta
concentrate on the  $L^2$-valued weak solutions of \eqref{eq:p-intro}
(in the sense of Definition \ref{def:1}), that satisfy the Clausius inequality.
Proving uniqueness would complete the convergence theorem.
Unfortunately, uniqueness for such weak solutions is a well known and challenging open problem.

The proof of the convergence to the weak solutions is adapted from the stochastic
version of compensated compactness developed by Fritz in \cite{Fritz1} for the same dynamics but without boundaries (see also Fritz and Toth \cite{toth2004} for a different two component dynamics). 
In a previous work \cite{marchesani2018hydrodynamic},
we considered the same problem as here, but we proved that \eqref{eq:p-intro} were satisfied only in
the bulk by the limit profiles, without giving any information of the boundary conditions,
nor on the \emph{entropic} properties of these solutions.

The main new contributions of the present article are the followings:
\begin{itemize}
\item the limit profiles obtained are $L^2$-valued weak solutions that satisfy the boundary conditions,
  in the sense of Definition \ref{def:1};
  \item the work done by the boundary force is larger than the change in the total free energy (Clausius inequality).  
\end{itemize}

The proof of the Clausius inequality is the content of Section \ref{sec:clausius-inequality}.
It uses the variational characterisation of the (microscopic) relative entropy in order
to connect it to the macroscopic free energy and estimate its time derivative.
In other words, Clausius inequality follows from the microscopic entropy production. 

\section{The Model and the Main Theorem} \label{sec:mainthm}
We study a one-dimensional Hamiltonian system of $N+1$
particles of unitary mass. The position of the $i$-th particle ($i =0,1,\dots, N$)
is denoted by $q_i \in \mathbb R$ and its momentum
by $p_i \in \mathbb R$. We assume that particle $0$ is kept fixed,
i.e. $(q_0,p_0) \equiv (0,0)$, while on particle $N$ is applied a
time-dependent force, $\bar \tau(t)$.%,  {bounded, with bounded derivative.}

Denote by ${\bf q} =(q_0, \dots, q_N)$ and ${\bf p} =(p_0,\dots, p_N)$.
The interaction between particles $i$ and $i-1$ is described by
the potential energy $V(q_i-q_{i-1})$ of an anharmonic spring, where
$V$ is a  {uniformly convex function that grows
quadratically at infinity}: there exist constants $c_1$ and $c_2$ such
that for any $r \in \mathbb R$: 
\begin{equation}  \label{eq:Vconvex}
0<c_1 \le V''(r) \le c_2.
\end{equation}
Moreover, there are some positive constants $V''_+, V''_-, \alpha$ and $R$ such that
\begin{align} \label{eq:AN}
&\left| V''(r) - V''_+ \right| \le e^{-\alpha r}, \quad r > R
\\
& \left| V''(r) - V''_- \right| \le e^{\alpha r}, \quad r < -R.
\end{align}
For $\tau \in \mathbb R$ and $\beta >0$ we define the canonical Gibbs function as
\begin{equation} \label{eq:Gdefn}
 { G(\tau):= \log \int_{-\infty}^{+\infty} \e{
  -\beta V(r) +\beta \tau r  }\; dr.}
\end{equation}
For $\ell \in \mathbb R$, the free energy is given by the Legendre transform of $G$:
\begin{equation}\label{eq:freedef}
F( \ell) := \sup_{\tau \in \mathbb R} \left\{ \tau \ell - \beta^{-1}G( \tau)\right\},
\end{equation}
so that its inverse is
\begin{equation}\label{eq:legInv}
G(\tau) = \beta \sup_{\ell \in \mathbb R} \left\{ \tau \ell -  F(\ell)\right\}.
\end{equation}
Note that we neglect to write the dependence of $F$ and $G$ on $\beta$, as it shall be fixed throughout the paper.

We denote by $\ell(\tau)$ and $\tau(\ell)$  the corresponding convex conjugate variables, that depend parametrically on $\beta$ and satisfy
\begin{equation}\label{eq:gamrho}
\ell(\tau) = \beta^{-1} G'(\tau), \quad  \tau(\ell)  =  F'(\ell).
\end{equation}
We identify $\tau(\ell)$  with the equilibrium tension of the system of length $\ell$,
and we assume that the potential $V$, besides satisfying the assumptions above,
is such that $\tau$ is strictly convex (i.e. $\tau''(\ell) >0$ for all $\ell \in \mathbb R$).
\begin{oss}
  {At the present time, we do not know a general condition on $V$
    that yield a strictly convex tension, but as example {(cf \cite{marchesani2018hydrodynamic}, Proposition A.7)} one 
    % and we can only offer special forms of $V$ which have such a feature.
    % For example, one
    may take $V$ to be a mollification of the function} 
\begin{equation} \label{eq:Vvera}
r \longmapsto \frac{1}{2}(1-\kappa)r^2+\frac{1}{2}\kappa r|r|_+,
\end{equation}
 {where $|r|_+=\max\{r,0\}$ and $\kappa \in (0,1/3)$.} 
\end{oss}
Define the Hamiltonian:
\begin{equation} 
\label{eq:hamiltonian}
\mathcal H_N({\bf q}, {\bf p},t):= \sum_{i =0}^N \left( \frac{p_i^2}{2}+V(q_i-q_{i-1})\right) - q_N \taubar(t),
\end{equation}
where $\taubar(t)$ is the external tension. Since the interaction depends only on the distance between particles, we define
\begin{equation}
r_i := q_i-q_{i-1}, \quad i = 1,\dots, N.
\end{equation}
Consequently,  recalling that $p_0 \equiv 0$, the configuration of the system is given by  
$({\bf  r}, {\bf p} ):=(r_1, \dots, r_N, p_1,\dots,p_N)$ and the
  phase space is  $\mathbb R^{2N}$. Thus, the Hamiltonian reads
\begin{align}
\mathcal H_N({\bf r}, {\bf p},t) = \sum_{i=1}^N \left( \frac{p_i^2}{2}+V(r_i)- \taubar(t)r_i \right).
\end{align}
 {We add to the Hamiltonian dynamics physical and artificial noise. Thus, the full dynamics of the system is determined by the generator}
\begin{equation} \label{eq:fullgen}
 \mathcal G_N^{\bar \tau(t)}:= N L_N^{\bar \tau(t)}+ N{\sigma} 	\left( S_N + \tilde  S_N^{\taubar(t)} \right).
\end{equation}
 {$\sigma=\sigma(N)$ is a positive number that tunes the strength of the noise. We take it such that}
\begin{equation} \label{eq:siglim}
\lim_{N\to \infty} \frac{{\sigma}}{N}= \lim_{N \to \infty} \frac{N}{{\sigma}^2} = 0.
\end{equation}
The Liouville operator $L_N^{\bar \tau(t)}$ is given by
\begin{equation} \label{eq:hamgen}
L_N^{\bar \tau(t)}  = \sum_{i = 1}^N (p_i-p_{i-1}) \frac \partial {\partial r_i} +
\sum_{i =1}^{N-1} \left(V'(r_{i+1}) -V'(r_i) \right) \frac \partial {\partial p_i} +(\bar \tau(t) - V'(r_N)) \frac \partial {\partial p_N},
\end{equation}
together with $p_0 \equiv 0$. Note that the time
scale in the tension is chosen such that it changes smoothly on the
macroscopic scale. 

The operators $S_N$ and $\tilde S_N^{\taubar(t)}$ generate the stochastic part of the dynamics,
{modelling the interaction with a heat bath at constant temperature $\beta^{-1}$,} and are defined by
\begin{equation} \label{eq:SN}
S_N := -\sum_{i=0}^{N-1}  D^*_i D_i, \quad \tilde S_N^{\taubar(t)} := -\sum_{i=1}^N  \tilde D^*_i \tilde D_i,
\end{equation}
where, for $i=1,\dots, N-1$,
\begin{align} \label{eq:DD*}
& D_i := \frac{\partial }{\partial p_{i+1}}-\frac{\partial }{\partial p_i}, \qquad D^*_i := p_{i+1}-p_i - \beta^{-1}D_i
\\
&\tilde D_i  := \frac{\partial }{\partial r_{i+1}}-\frac{\partial }{\partial r_i}, \qquad \tilde D^*_i := V'(r_{i+1})- V'(r_i) -  \beta^{-1}\tilde D_i.
\end{align}
The extra boundary operators were first considered in \cite{MarchesaniDiffusive} and are given by
\begin{align}
D_0 &:= \frac{\partial}{\partial p_1}, \qquad D_0^* := p_1 - \beta^{-1} D_0,
\\
\Dtilde_N &:= -\frac{\partial}{\partial r_N}, \qquad \Dtilde_N^* := \taubar(t)-V'(r_N)-\beta^{-1}\Dtilde_N.
\end{align}

On the one-particle state space $\mathbb R^2$ we define a family of probability measures
\begin{equation} \label{eq:lambdagamdef}
\lamb (d r, dp) :=\frac{1}{\sqrt{2\pi\beta^{-1}}}  \e{-\frac{\beta}{2}(p-\bar p)^2-\beta V(r) +\beta\tau r - G(\beta,\tau) }\; dr \, dp.
\end{equation}
The mean elongation and momentum are
\begin{equation}\label{eq:rhopi}
{\int r \ d\lamb  % = \E_\lamb[r] 
  = \ell(\tau), \quad \int p \ d\lamb = \bar p.}
\end{equation}
We also have the relations
\begin{equation}\label{eq:temper}
{ \int V'(r) \ d\lamb = \tau, \quad \int p^2 \ d\lamb -\bar p^2 = \beta^{-1},}
\end{equation}
that identify  $\tau$ as the tension and $\beta^{-1}$ as the temperature.

Define the family of product measures $ \lambda^N_t = \lambda^N_{\beta,0, \bar \tau(t)}$ where
\begin{equation}\label{eq:invmeas}
  \lambda^N_{\beta,\bar p,  \tau}(d{\bf r}, d{\bf p}) = \prod_{i=1}^N\lambda_{\beta,\bar p,  \tau}(dr_i, dp_i).
\end{equation}
Notice that $L_N^{\bar \tau(t)}$ is antisymmetric with respect to $ \lambda^N_t $, while
$S_N$ and $\tilde S_N^{\taubar(t)}$ are symmetric. It follows that, in the case  $\bar \tau$ is constant
in time, $\lambda^N_{\beta,0, \bar \tau}$ is the unique stationary measure for the dynamics.
% For constant $  \bar \tau$ in the dynamics, the family of product measures
% \begin{equation}\label{eq:invmeas}
% \lambda^N_{\beta,0, \bar \tau}(d{\bf r}, d{\bf p}) = \prod_{i=1}^N\lambda_{\beta,0, \bar \tau}(dr_i, dp_i)
% \end{equation}
% is stationary.
This is the canonical Gibbs measure
at a temperature $\beta^{-1}$, pressure $ \bar \tau$ and velocity $0$. 

% They conserve total length and momentum but not energy. The temperature
% is fixed to the constant value $\beta^{-1}$, in the sense that the
% only stationary measures of the stochastic dynamics  generated by $S_N
% +\tilde S_N^{\taubar(t)}$ are given by the corresponding canonical Gibbs measure
% at temperature $\beta^{-1}$, see definition below. 

Define the discrete gradients and Laplacian by
\begin{align}
\nabla a_i := a_{i+1}-a_i, \qquad \nabla^* a_i := a_{i-1}-a_i,
\\
\Delta a_i := -\nabla \nabla^* a_i = -\nabla^* \nabla a_i= a_{i+1}+a_{i-1}-2a_i.
\end{align}
The time evolution of the system is described by the following
system of stochastic differential equations
\begin{equation} \label{eq:SDE}
\begin{cases}
d r_1 =  Np_1 d t + \sigma N \nabla V'(r_1) d t - \sqrt{2 \beta^{-1}\sigma N}  \,  d \widetilde w_1
\\
d r_i =  -N\nabla^* p_i- d t + \sigma N \Delta V'(r_i) d t + \sqrt{2 \beta^{-1} \sigma N} \,
\nabla^* d\widetilde w_i, & 2 \le i \le N-1\\
d r_N =  -N\nabla^* p_N dt + \sigma N \left(\taubar(t) + V'(r_{N-1}) - 2V'(r_N)\right)
+ \sqrt{2 \beta^{-1} \sigma N} \, \nabla^*  d\widetilde w_N,
\\
d p_1 = N \nabla V'(r_1) d t + \sigma  N\left(p_2 - 2p_1\right) dt + \sqrt{2 \beta^{-1} \sigma N}
\,  \nabla^* dw_{1},\\
d p_j =  N\nabla V'(r_j) d t +\sigma N \Delta p_j  d t + \sqrt{2 \beta^{-1} \sigma N}
\,  \nabla^* dw_j, & 2 \le j \le N-1
\\
d p_N =N (\taubar(t)-V'(r_N)) d t -\sigma  N\nabla p_{N-1}d t
+ \sqrt{2\beta^{-1}\sigma N }\, d  w_{N-1}
\end{cases}
\end{equation}
Here  $\{w_i\}_{i=0}^\infty$, $\{\widetilde w_i\}_{i=1}^\infty$ are
independent families of independent Brownian motions on a
common probability space $(\Omega, \mathbb P)$.
The expectation with respect to $\Pbb$ is denoted by $\E$.

Notice that the noise introduced by the heat bath respects the \emph{boundary conditions}
$V'(r_{N+1}(t)) = \bar\tau(t)$ and $p_0(t) = 0$ already present in the Hamiltonian part
of the dynamics, while it introduces the \emph{Neumann type boundary conditions}
$r_0(t) = r_1(t)$ and $p_{N+1}(t) = p_N(t)$. In this sense these boundary conditions
are the microscopic analogous of those taken in the viscous approximation used in reference
\cite{marchesani2018note}.

Thanks to the assumptions we made on the interaction $V$, it is possible to show (cf \cite{Fritz1} or Appendix A of \cite{marchesani2018hydrodynamic}) the following
\begin{prop}
For any fixed $\beta>0$, the application $\tau : \mathbb R \to \mathbb R$ is smooth and has the following properties:
\begin{itemize}
\item[i)] $c_2^{-1} \le \tau(\ell) \le c_1^{-1}$ for all $\ell \in \mathbb R$;
%\item[ii)] $\tau''(\ell) \neq 0$ for all $\ell \in \mathbb R$;
\item[ii)] $\tau'', \tau''' \in L^2(\mathbb R) \cap L^\infty(\mathbb R)$.
\end{itemize}
\end{prop}
Furthermore, we assume $\tau''(\ell) >0$ for all $\ell \in \R$.
\begin{oss}
The condition $\tau' \ge c_2^{-1}>0$ is a condition of strict hyperbolicity. On the other hand,  $\tau'' >0$ is a condition of genuine nonlinearity, and it is easy to see that it rules out symmetric interactions ($V(-r)=V(r)$). Nevertheless, such a condition may be relaxed as in \cite{SerreShearer} and we can allow $\tau''$ to vanish at most at one point, which is compatible with having a symmetric interaction.
\end{oss}

%We need Gibbs measures with average velocity different from $0$ and we
%use the following notation: 
%\begin{equation}\label{eq:lamN}
%\lambN(d{\bf r}, d{\bf p}) = \prod_{i=1}^N\lamb(dr_i, dp_i).
%\end{equation}
%Observe that, $S_N$ and $\tilde S_N^{\taubar(t)}$ are symmetric in $L^2\left(\lambda^N_{\beta, \bar p, \taubar(t)}\right)$, while $L_N^{\taubar(t)}$ is antisymmetric.

Denote by $\mu_t^N$ the probability measure of the system a time $t$. 
%We set $ \lambda^N_t (d\rv, d \pv)= \lambda_t^N(\rv, \pv) d \rv d \pv$, where
%\begin{align}
%\lambda_t^N(\rv, \pv) &= \frac{\e{-N G(\beta,\taubar(t))}}{\left(2 \pi \beta^{-1}\right)^{N/2}} \e{\beta\sum_{i=1}^N \left(\taubar(t) r_i -\left( \frac{p^2_i}{2} +V(r_i)\right) \right)  }
%\\
%& = \frac{\e{\beta \taubar(t)q_N -N G(\beta,\taubar(t))}}{\left(2 \pi \beta^{-1}\right)^{N/2}} \e{-\beta\sum_{i=1}^N \left( \frac{p^2_i}{2} +V(r_i)\right)   } \nonumber
%\end{align}
%where we have used
%\begin{equation}
%q_N = \sum_{i=1}^N r_i.
%\end{equation}
Then, the density $f_t^N$ of $\mu_t^N$ with respect to  $\lambda^N_t $ solves the Fokker-Plank equation
\begin{equation}\label{eq:ftN}
\frac{\partial}{\partial t} \left( f_t^N \lambda_t^N \right)= \left(\mathcal G_N^{\bar \tau(t),*} f_t^N \right) \lambda^N_t.
\end{equation}
Here 
\begin{equation}
\mathcal G_N^{\bar \tau(t),*}  = -N L_N^{\bar \tau(t)}  +N \sigma \left(S_N+ S_N^{\taubar(t)}\right)
\end{equation}
is the adjoint of $\mathcal G_N^{\bar \tau(t)}$ with respect to $\lambda^N_t$.

{
We define the relative entropy
\begin{equation} \label{eq:HN}
H_N(t) :=\int f_t^N \log f_t^N d \lambda^N_t,
\end{equation}
and require that the initial distribution $f_0^N$ is such that $H_N(0) \le CN$ for some $C$ independent of $N$.
}

% Since the noise does not conserve the energy,
We are interested in the
macroscopic behaviour of the volume stretch and momentum of the particles,
at time $t$, as $N \to \infty$. Note that $t$ is already the
macroscopic time, as we have already multiplied by $N$ in the
generator. We shall use Lagrangian coordinates, that is our space
variables will belong to the lattice $\{1/N, \dots, (N-1)/N, 1\}$. 

 {Consequently, we set ${\bf u}_i := (r_i, p_i)$. For a fixed macroscopic time $T$, we  introduce the empirical measures on $[0,T] \times [0,1]$ representing the space-time distributions on the interval $[0,1]$ of volume stretch and momentum: }
\begin{equation}\label{eq:empmeas}
{\pmb \zeta}_N(dx,dt) := \frac{1}{N} \sum_{i=1}^N \delta \left(x-\frac{i}{N}\right)  \uv_i(t) dx\; dt.
\end{equation}
We expect that the measures ${\pmb \zeta}_N(dx,dt)$ converge, as $N \to
\infty$ to an absolutely continuous measure with density $(r(t,x),p(t,x))$, satisfying the
following system of conservation laws: 
\begin{equation} \label{eq:psystem}
\begin{cases}
 \partial_t  r (t,x) -  \partial_x  p(t,x)=0
 \\
 \partial_t  p(t,x)- \partial_x \tau( r(t,x))=0
\end{cases},
\qquad \begin{matrix} p(t,0)=0, & \tau(r(t,1)) = \bar \tau(t) \\ r(0,x) = r_0(x) & p(0,x)=p_0(x)\end{matrix}.
\end{equation}
Since \eqref{eq:psystem} is a hyperbolic system of nonlinear partial
differential equations, its solutions may develop shocks in a finite
time, even if smooth initial data are given. Therefore, we shall
look for \emph{weak} solutions, which are defined even if
discontinuities appear.

\begin{defn} \thlabel{def:1}
Fix $T >0$ and let $Q_T = [0,T]\times [0,1]$. We say that $(r, p) \in L^2(Q_T)$
is a \emph{weak solution} of % the p-system
\eqref{eq:psystem} provided
\begin{equation} \label{eq:maineqr1}
\int_0^1\varphi(0,x)r_0(x)dx+\int_0^T \int_0^1 \left(  r\partial_t \varphi  - p \partial_x\varphi  \right) dx dt=0
\end{equation}
\begin{equation} \label{eq:maineqp1}
\int_0^1\psi(0,x) p_0(x)dx+  \int_0^T \int_0^1    \left( p\partial_t\psi - \tau(r) \partial_x\psi \right) dxdt
  + \int_0^T  \psi(t,1) \taubar(t) dt=0
\end{equation}
for all  functions $\varphi, \psi \in C^2(Q_T)$ such that $\varphi(t,1)  = \psi(t,0) = 0$ for all $ t \in [0,T]$.
\end{defn}

Denote by $\frak Q_N$ the probability distribution of ${\pmb \zeta}_N$ on $\mathcal M(Q_T)^2$. Observe that  ${\pmb \zeta}_N\in  C([0,T], \mathcal M([0,1])^2)$, where $\mathcal M([0,1])$ 
is the space of signed measures on $[0,1]$, endowed by the weak topology.
Our aim is to show the convergence
\begin{equation}\label{eq:conv}
{\pmb \zeta}_N(J) \to \left( \int_0^T \int_0^1 J(t,x)r(t,x) dx dt, \int_0^T\int_0^1 J(t,x) p(t,x) dx dt \right),
\end{equation}
where $r(t,x)$ and $p(t,x)$ satisfy \eqref{eq:maineqr1}-\eqref{eq:maineqp1}. 
 Since we do not have uniqueness for the solution of these equations, we need 
a more precise statement. 
\begin{theorem}[Main theorem] \thlabel{thm:main}
Assume that the initial distribution satisfies the entropy condition $H_N(0) \le C N$. Then sequence $\mathfrak Q_N$ is compact and any limit point of $\mathfrak Q_N$ 
has support on absolutely continuous measures with densities $r(t,x)$ and $p(t,x)$
solutions of \eqref{eq:maineqr1}-\eqref{eq:maineqp1} and belonging to $L^\infty(0,T;L^2(0,1))$.

 Moreover, if the system at time $t=0$ is at a local equilibrium, namely if
\begin{align}
f_0^N({\bf r}, {\bf p}) :=\prod_{i=1}^N\e{\beta\left[\left(\tau\left( r_0 \left(\frac i N \right)\right)-\bar \tau(0) \right)r_i+ p_0\left( \frac i N \right)p_i \right]-G \left( \tau \left(r_0 \left( \frac i N\right) \right) \right)+G(\bar \tau(0)) -\frac \beta 2 p_0\left(\frac i N \right)^2},
\end{align}
then we have the following Clausius inequality
\begin{equation}
  \begin{split}
   \mathbb E^{\mathfrak Q}\left(\int_0^T  \int_0^1 [\mathcal F(t,y)  -  \mathcal F(0,y)] dy  \right)
    \le  \mathbb E^{\mathfrak Q}\left( \int_0^TW(t)dt \right),
  \end{split}
\end{equation}
where
\begin{align}
\mathcal F(t,x) := \frac{p(t,x)^2}{2}+F(r(t,x))
\end{align}
is the free energy and
\begin{equation}
 W(t) := - \int_0^t \bar \tau'(s) \int_0^1 r(s,y)dy+ \bar \tau(t) \int_0^1 r(t,y)dy- \bar \tau(0) \int_0^1 r_0(y)dy
\end{equation}
is the work done by the tension $\bar\tau$.
\end{theorem}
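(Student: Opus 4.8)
The plan is to follow the stochastic compensated-compactness strategy of Fritz, adapted to the bounded interval with the two boundary conditions built into the microscopic dynamics, and then to extract the Clausius inequality from a relative-entropy (free-energy) balance. First I would establish the necessary a priori bounds. Starting from the entropy hypothesis $H_N(0) \le CN$ and using the entropy production of the generator $\mathcal G_N^{\bar\tau(t)}$ together with the fact that $L_N^{\bar\tau(t)}$ is antisymmetric and $S_N,\tilde S_N^{\bar\tau(t)}$ symmetric with respect to $\lambda_t^N$, one gets $H_N(t) \le CN$ uniformly on $[0,T]$ and, crucially, a bound on the time-integrated Dirichlet forms of order $N/\sigma$. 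This yields uniform control of the second moments $\E\!\left[\frac1N\sum_i(r_i^2 + p_i^2)\right]$ (hence tightness of $\mathfrak Q_N$ in $C([0,T],\mathcal M([0,1])^2)$ and the $L^\infty(0,T;L^2(0,1))$ membership of any limit density) and, via the condition $N/\sigma^2 \to 0$, shows that the stochastic-viscosity terms vanish in the macroscopic equations while still providing the gradient regularity needed for block averaging.

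Next I would pass to the limit in the balance equations. Applying \Ito's formula to $\frac1N\sum_i \varphi(t,i/N)\, r_i(t)$ and $\frac1N\sum_i \psi(t,i/N)\, p_i(t)$ for test functions vanishing at the appropriate endpoint ($\varphi(\cdot,1)=0$, $\psi(\cdot,0)=0$), the Hamiltonian part reproduces the discrete versions of $\partial_t r = \partial_x p$ and $\partial_t p = \partial_x\tau(r)$; the boundary term $\bar\tau(t)$ appears precisely from the last-particle forcing $(\bar\tau(t)-V'(r_N))\partial_{p_N}$, giving the $\int_0^T\psi(t,1)\bar\tau(t)\,dt$ term, while the pinning $p_0\equiv 0$ makes the left boundary contribution drop in the first equation. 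The martingale terms are $O(\sqrt{\sigma/N}) \to 0$ and the viscous corrections are $O(\sqrt{N}/\sigma)\to 0$ in $L^2(\Pbb)$. The one genuinely nonlinear object is $\tau(r_i)$: one must replace the microscopic current $V'(r_i)$ by $\tau$ evaluated at a local average of $r$, i.e. a one-block/two-block estimate showing $\frac1N\sum_i \psi(i/N)\big(V'(r_i) - \tau(\overline r_i^{(k)})\big)\to 0$. This is where stochastic compensated compactness enters: using the entropy-flux pairs (Lax entropies) for the $p$-system together with the Div-Curl lemma in the $L^2$ setting à la Shearer, and the genuine nonlinearity $\tau''>0$ (or $\tau''$ vanishing at one point as in \cite{SerreShearer}), one shows the Young measure of the empirical fields is a Dirac mass, so products converge to products of limits and $\tau(r)$ is identified. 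I expect this compactness/Young-measure reduction, carried out in the presence of boundaries, to be the main technical obstacle — in particular controlling the entropy-production error terms near $x=0$ and $x=1$ and ensuring the logarithmic Sobolev / spectral-gap type estimates for the heat-bath operators remain uniform up to the boundary.

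Finally, for the Clausius inequality I would start from a local-equilibrium initial datum $f_0^N$ (so $H_N(0) = o(N)$, in fact the relative entropy with respect to the inhomogeneous product measure is small) and use the variational formula for relative entropy, $H_N(t) = \sup_{g}\big\{\int g f_t^N d\lambda_t^N - \log\int e^{g} d\lambda_t^N\big\}$, to bound the microscopic free energy from below by the macroscopic one. Differentiating $H_N(t)$ in time and using $\mathcal G_N^{\bar\tau(t),*}$, the non-positive part is the entropy production (Dirichlet form) and the only surviving term comes from the explicit time dependence of $\lambda_t^N$ through $\bar\tau(t)$, namely $-\beta\,\bar\tau'(t)\sum_i \E[r_i]$; this is exactly (the microscopic analogue of) $\beta$ times the work increment $dW(t)$. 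Integrating in $t$, dropping the negative entropy-production term, and passing to the limit using the already-established convergence of $\frac1N\sum_i r_i$ and the lower-semicontinuity of the free-energy functional $\int_0^1 \big(\tfrac{p^2}{2}+F(r)\big)dx$ under weak $L^2$ convergence, one obtains
\begin{equation}
\E^{\mathfrak Q}\!\left(\int_0^1[\mathcal F(T,y)-\mathcal F(0,y)]\,dy\right) \le \E^{\mathfrak Q}\!\left(\int_0^T W'(t)\,dt\right),
\end{equation}
which is the asserted inequality after integrating the work density. The subtlety here is that the macroscopic free energy $F(r)$ is recovered from the microscopic entropy only as a \emph{lower} bound (Jensen/lower-semicontinuity), which is precisely what turns the microscopic entropy balance into a one-sided Clausius inequality rather than an equality — consistent with the presence of shocks.
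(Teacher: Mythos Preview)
Your overall strategy matches the paper's: entropy and Dirichlet-form bounds (giving the energy estimate, tightness of $\mathfrak Q_N$, and $L^\infty_t L^2_x$ regularity); Itô plus summation by parts with test functions satisfying $\varphi(\cdot,1)=0$, $\psi(\cdot,0)=0$ to produce the weak form with the boundary term $\int_0^T\psi(t,1)\bar\tau(t)\,dt$; the one-block estimate and Fritz's stochastic compensated-compactness (Shearer entropies, Tartar commutation) to collapse the Young measure and identify $\tau(r)$; and finally the relative-entropy balance for Clausius. The scaling estimates for the martingale and viscous remainders are slightly off but inessential, since both vanish under the assumptions $\sigma/N\to 0$ and $N/\sigma^2\to 0$.

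There is, however, one genuine error in the Clausius step. Your claim that $H_N(0)=o(N)$ under the local-equilibrium initial datum is false: $H_N$ is the relative entropy with respect to the \emph{homogeneous} reference $\lambda^N_{\beta,0,\bar\tau(0)}$, and a direct computation gives
\[
\frac{H_N(0)}{N}\ \longrightarrow\ \beta\int_0^1\bigl[\mathcal F(0,y)-\bar\tau(0)\,r_0(y)\bigr]\,dy\;+\;G(\bar\tau(0)),
\]
which is generically nonzero. The role of the local-equilibrium assumption is not to make $H_N(0)/N$ small but to make this limit an \emph{equality}, whereas for $t>0$ the variational formula only yields the one-sided bound
\[
\liminf_{N\to\infty}\frac{H_N(t)}{N}\ \ge\ \beta\,\mathbb E^{\mathfrak Q}\!\int_0^1\bigl[\mathcal F(t,y)-\bar\tau(t)\,r(t,y)\bigr]\,dy\;+\;G(\bar\tau(t)).
\]
Subtracting these, combining with the entropy-production inequality $H_N(t)-H_N(0)\le -\beta\int_0^t\bar\tau'(s)\sum_i\mathbb E[r_i]\,ds + N\beta\int_0^t\bar\tau'(s)\ell(\bar\tau(s))\,ds$, and passing to the limit is what produces the Clausius inequality with the correct $\mathcal F(0,y)$ term. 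If you actually used $H_N(0)=o(N)$ you would lose that term. Note also that the paper obtains the lower bound at time $t$ directly from the variational characterisation of relative entropy with Lipschitz test profiles $(\tilde\tau,\tilde p)$, then optimises; this is more robust than invoking weak-$L^2$ lower semicontinuity of $\int\mathcal F$, which would still require linking $H_N(t)/N$ to the block-averaged free energy.
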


Notice that in the case the total length $L(t) = \int_0^1 r(t,y)dy$ is time differentiable, the definition of
work coincide with the usual one: $W(t) := \int_0^t \bar \tau(s) L'(s) ds$.

\section{Some bounds from Relative entropy and Dirichlet forms}
\label{sec-bounds}

Define the Dirichlet forms 
\begin{align} \label{eq:DN}
\begin{split}
& \mathcal D_N(t) := \int {f_t^N} (-S_N \log {f_t^N}) d\lambda^N_t =  \int \frac{1}{f_t^N}  \left[
\left(\frac{\partial f_t^N}{\partial p_1}\right)^2 +  \sum_{i=1}^{N-1}
\left(\frac{\partial f_t^N}{\partial p_{i+1}}- \frac{\partial f_t^N}{\partial p_i}\right)^2\right] d\lambda^N_t,
\\
& \widetilde {\mathcal D}_N(t) := \int {f_t^N} (-\tilde S_N^{\bar\tau(t)} \log {f_t^N}) d\lambda^N_t =
\int \frac{1}{f_t^N} \left[ \sum_{i=1}^{N-1}
\left(\frac{\partial f_t^N}{\partial r_{i+1}}- \frac{\partial f_t^N}{\partial r_i}\right)^2 +   
\left(\frac{\partial f_t^N}{\partial r_N}\right)^2 \right] d\lambda^N_t.
\end{split}
\end{align}
\begin{prop} \thlabel{prop:relative}
The following inequality holds for any $t \ge 0$:
\begin{align} \label{eq:entr0}
H_N(t)-H_N(0)\le 
                        - \beta \int_0^t \taubar'(s) \int \sum_{i=1}^N r_i(s)  f_s^N  d \lambda^N_s+N\beta \int_0^t \bar \tau'(s)\ell(\bar\tau(s))ds.
\end{align}
% Assume $H_N(0)\le C_0N$ assume let $\taubar \in C^1(\mathbb R_+)$ to be bounded and such that $\int_0^\infty |\tau'(t)|dt = C_1 < \infty$.

% %and constant on $[T_\star, \infty)$ for some $T_\star >0$.
% Then, t
Moreover, there is  $C(t)$ independent of $N$ such that
\begin{align} \label{eq:entr00}
H_N(t) +N \sigma \beta^{-1} \int_0^t\left( \mathcal D_N(s)+\tilde{\mathcal D}_N(s) \right)ds \le C(t) N.
\end{align}
\end{prop}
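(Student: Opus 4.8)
The plan is to differentiate the relative entropy $H_N(t)$ in time and read both inequalities off the resulting identity. Writing the density of $\mu_t^N$ with respect to Lebesgue measure as $g_t = f_t^N\lambda_t^N$, one has $H_N(t) = \int g_t\log g_t - \int g_t\log\lambda_t^N$; since $\frac{d}{dt}\int g_t = 0$ and $\partial_t g_t = (\mathcal G_N^{\bar\tau(t),*}f_t^N)\lambda_t^N$ by the Fokker--Planck equation \eqref{eq:ftN}, the definition of the adjoint yields
\begin{equation*}
\frac{d}{dt}H_N(t) = \int f_t^N\,\mathcal G_N^{\bar\tau(t)}\!\big(\log f_t^N\big)\,d\lambda_t^N \;-\; \int f_t^N\,\big(\partial_t\log\lambda_t^N\big)\,d\lambda_t^N .
\end{equation*}
For the first term I would split $\mathcal G_N^{\bar\tau(t)} = NL_N^{\bar\tau(t)} + N\sigma(S_N + \tilde S_N^{\bar\tau(t)})$: since $L_N^{\bar\tau(t)}$ is antisymmetric in $L^2(\lambda_t^N)$ and annihilates constants, its contribution $\int L_N^{\bar\tau(t)}f_t^N\,d\lambda_t^N$ vanishes, while the symmetric parts give, after integration by parts against $\lambda_t^N$ and the definition \eqref{eq:DN} of the Dirichlet forms, exactly $-N\sigma\beta^{-1}(\mathcal D_N(t) + \tilde{\mathcal D}_N(t))$. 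For the second term, from \eqref{eq:lambdagamdef}--\eqref{eq:invmeas} and the relation $G'(\tau) = \beta\ell(\tau)$ one computes $\partial_t\log\lambda_t^N = \beta\bar\tau'(t)\big(\sum_{i=1}^N r_i - N\ell(\bar\tau(t))\big)$. Collecting the two terms gives
\begin{equation*}
\frac{d}{dt}H_N(t) = -N\sigma\beta^{-1}\big(\mathcal D_N(t) + \tilde{\mathcal D}_N(t)\big) - \beta\bar\tau'(t)\int\sum_{i=1}^N r_i\,f_t^N\,d\lambda_t^N + N\beta\,\bar\tau'(t)\,\ell(\bar\tau(t)),
\end{equation*}
and integrating from $0$ to $t$ while discarding the nonpositive Dirichlet term yields \eqref{eq:entr0}.

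For \eqref{eq:entr00} I would instead keep the Dirichlet term and estimate the rest of the right-hand side. Since $\bar\tau\in C^1([0,T])$, both $|\bar\tau'|$ and $|\ell(\bar\tau)|$ are bounded on $[0,T]$, so it is enough to control $\int|\sum_{i=1}^N r_i|\,f_s^N\,d\lambda_s^N$. By the entropy inequality, for any $a>0$,
\begin{equation*}
\int\Big|\sum_{i=1}^N r_i\Big|\,f_s^N\,d\lambda_s^N \;\le\; \frac1a\Big(H_N(s) + \log\int e^{a|\sum_i r_i|}\,d\lambda_s^N\Big) \;\le\; \frac1a\big(H_N(s) + CN\big),
\end{equation*}
the last bound because under $\lambda_s^N$ the $r_i$ are i.i.d. with Gaussian-type tails, uniformly in $s$ (since $V$ grows quadratically and $\bar\tau(s)$ stays in a compact set), so the relevant moment generating function is finite. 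Inserting this into the integrated identity, using $H_N(0)\le CN$ and dropping the nonnegative Dirichlet contribution, gives $H_N(t)\le C(t)N + \frac{C}{a}\int_0^t H_N(s)\,ds$; Gronwall then yields $H_N(t)\le C(t)N$, and feeding this bound back into the identity gives the claimed bound on $N\sigma\beta^{-1}\int_0^t(\mathcal D_N(s) + \tilde{\mathcal D}_N(s))\,ds$.

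I expect the main obstacle to be not the algebra but its rigorous justification: one must verify that $H_N(t)$ and the Dirichlet forms are finite and that $t\mapsto H_N(t)$ is absolutely continuous (this is where the regularizing noise $\sigma>0$ is used), and---most importantly---that the integrations by parts defining the entropy production produce no boundary contributions. This last point is precisely what the pinning $p_0\equiv 0$ and the conservative boundary operators $D_0$, $\tilde D_N$ are designed to ensure, so that the antisymmetry of $L_N^{\bar\tau(t)}$ and the symmetry of $S_N$, $\tilde S_N^{\bar\tau(t)}$ with respect to $\lambda_t^N$ noted in Section \ref{sec:mainthm} hold exactly and only the explicit $\bar\tau'$ term survives. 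A secondary but necessary ingredient is the uniform-in-$N$ and uniform-in-$s$ exponential moment bound used to close the Gronwall step.
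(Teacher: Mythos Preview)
Your argument is correct and follows the same route as the paper: differentiate $H_N(t)$, use antisymmetry of $L_N^{\bar\tau(t)}$ and symmetry of $S_N,\tilde S_N^{\bar\tau(t)}$ with respect to $\lambda_t^N$ to obtain the identity $\tfrac{d}{dt}H_N(t)=-N\sigma\beta^{-1}(\mathcal D_N+\tilde{\mathcal D}_N)-\beta\bar\tau'(t)\int(q_N-N\ell(\bar\tau(t)))\,d\mu_t^N$, then integrate for \eqref{eq:entr0} and close a Gronwall loop for \eqref{eq:entr00}. The only minor difference is that the paper computes the exponential moment exactly via the Gibbs function, $\log\int e^{-\alpha\beta\bar\tau'(t)[q_N-N\ell(\bar\tau(t))]}\,d\lambda_t^N = N\big[G(\bar\tau(t)-\alpha\bar\tau'(t))-G(\bar\tau(t))+\alpha\bar\tau'(t)\ell(\bar\tau(t))\big]\le CN\alpha\bar\tau'(t)^2$, whereas you bound $\int e^{a|\sum r_i|}d\lambda_s^N$ by $(\int e^{a|r|}d\lambda_s)^N$ using independence and Gaussian tails; both give an $O(N)$ bound and lead to the same Gronwall conclusion.
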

\begin{proof}
\begin{align}
  \frac{d H_N(t)}{dt} =& \int \partial_t (f_t^N  \lambda^N_t)\log f_t^N d \rv d \pv
                        + \int \lambda_t^N \partial_t f_t^N d \rv d \pv
\\
                    =& \int \left( \mathcal G_N^{\taubar(t),*} f_t^N \right) \log f_t^N d \lambda^N_t
                      - \int f_t^N \partial_t \lambda^N_t d \rv d \pv + \partial_t \int f_t d \lambda^N_t 
\\
                       =& \int f_t^N \mathcal G_N^{\taubar(t)} \log f_t^N d \lambda^N_t-
                        \beta \taubar'(t) \int \left[q_N - N\ell(\bar\tau(t))\right] f_t^N  d \lambda^N_t 
\\
                       =& -  N \sigma \beta^{-1} \left( \mathcal D_N(t) + \tilde{\mathcal D}_N(t) \right)
                        - \beta \taubar'(t) \int \left[q_N - N\ell(\bar\tau(t))\right] f_t^N  d \lambda^N_t ,
\end{align}
since $\int f_t^N L_N^{\taubar(t)} \log f_t^N d \lambda^N_t =0$, by the antisymmetry of $L_N^{\taubar(t)}$.
Thus, \eqref{eq:entr0} follows after an integration in time and recalling that $q_N = \sum_{i=1}^N r_i$ and that the Dirichlet forms are non-negative.

By the entropy inequality and the strict convexity of $G(\tau)$ we have, for any $\alpha >0$,
\begin{align}
  - \beta \taubar'(t) \int &\left[q_N - N\ell(\bar\tau(t))\right] f_t^N  d \lambda^N_t \le
  \alpha^{-1} H_N(t) + \alpha^{-1} \log \int e^{ - \alpha\beta \taubar'(t) \left[q_N - N\ell(\bar\tau(t))\right]} d \lambda_t^N \nonumber
\\
  &= \alpha^{-1} H_N(t) + \alpha^{-1}N \left[ G(\bar\tau(t) -\alpha\bar\tau'(t)) - G(\bar\tau(t)) + \alpha\bar\tau'(t) \ell(\bar\tau'(t))\right] \nonumber
  \\
  &\le  \alpha^{-1} H_N(t) + \alpha\bar\tau'(t)^2 N C_G.  
\end{align}
By choosing $\alpha = |\bar\tau'(t)|^{-1}$ we obtain the bound
\begin{equation}
  \label{eq:2}
  \begin{split}
    \frac{d}{dt} H_N(t) + N \sigma \beta^{-1}\left( \mathcal D_N(t) + \tilde{\mathcal D}_N(t) \right) 
    \le |\bar\tau'(t)| \left(H_N(t) + NC_G\right).
\end{split}
\end{equation}
By Gronwall's inequality we get
\begin{align}
  \label{eq:3}
  H_N(t) + N \sigma \beta^{-1} \int_0^t \left( \mathcal D_N(s) + \tilde{\mathcal D}_N(s) \right) ds
&  \le H_N(0) e^{\int_0^t |\tau'(s)| ds} + NC_G \int_0^t  |\tau'(s)| e^{\int_s^t |\tau'(u)| du} ds\nonumber
   \\
 & \le C(t) N.
\end{align}

\end{proof}

Observe that $C(t)$ in this proposition is equal to $C_0$ if $\bar\tau'(t) = 0$,
and that  can be chosen independent of $t$ if
$\bar\tau'(t) = 0$ for $t> t_0$ for some $t_0$. 

The energy bound is a standard consequence of the bound on the relative entropy.
\begin{prop}[Energy estimate] \thlabel{prop:energy}
  For any $t \ge 0$
\begin{align}
 \int \left[ \sum_{i=1}^N \left( \frac{p_i^2}{2}+V(r_i)\right)\right] f^N_t d\lambda^N_t \le C(t) N,
\end{align}
and the constant $C(t)$ can be chosen independent of $t$ in the case $\bar\tau'(t) = 0$ for $t\ge t_0$.
\end{prop}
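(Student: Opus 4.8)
The plan is to obtain the energy estimate directly from the relative-entropy bound \eqref{eq:entr00} of Proposition \ref{prop:relative} together with the entropy inequality, exactly as suggested by the remark preceding the statement. Writing $g({\bf r},{\bf p}) := \sum_{i=1}^N\bigl(p_i^2/2+V(r_i)\bigr)$, the entropy inequality applied to the density $f_t^N$ of $\mu_t^N$ with respect to $\lambda^N_t$ gives, for every $\alpha>0$,
$$\int g\, f_t^N\, d\lambda^N_t \le \frac{1}{\alpha} H_N(t) + \frac{1}{\alpha}\log\int e^{\alpha g}\, d\lambda^N_t .$$
Since $\lambda^N_t=\lambda^N_{\beta,0,\bar\tau(t)}$ is a product measure, the last term equals $\frac{N}{\alpha}\log I(\alpha,\bar\tau(t))$, where $I(\alpha,\bar\tau(t)):=\int_{\mathbb R^2} e^{\alpha(p^2/2+V(r))}\,\lambda_{\beta,0,\bar\tau(t)}(dr,dp)$ is a single‑site integral.

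The first thing to check is that $I(\alpha,\bar\tau(t))$ is finite for a suitable fixed $\alpha$ and controlled uniformly in the relevant range of $\bar\tau$. Taking $\alpha=\beta/2$ and using the explicit form \eqref{eq:lambdagamdef} of $\lambda_{\beta,0,\bar\tau(t)}$, the integral splits into a Gaussian factor $\int e^{-\beta p^2/4}\,dp$, which is a finite constant, and the factor $e^{-G(\beta,\bar\tau(t))}\int e^{-\beta V(r)/2+\beta\bar\tau(t)r}\,dr$; the latter converges because assumption \eqref{eq:Vconvex} forces $V$ to grow at least quadratically, so $-\beta V(r)/2+\beta\bar\tau(t)r\to-\infty$, and it is a continuous (indeed smooth) function of $\bar\tau(t)$. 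As $\bar\tau$ is continuous, hence bounded on the compact $[0,t]$, we get $\log I(\beta/2,\bar\tau(s))\le c(t)$ for all $s\in[0,t]$ with $c(t)$ depending only on $\max_{[0,t]}|\bar\tau|$; and when $\bar\tau'\equiv0$ for $t\ge t_0$ the range of $\bar\tau$ is bounded on all of $[0,\infty)$, so $c(t)$ can be chosen independent of $t$.

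Combining the two displays with $H_N(t)\le C(t)N$ from Proposition \ref{prop:relative} then yields $\int g\, f_t^N\, d\lambda^N_t\le \frac{2}{\beta}\bigl(C(t)+c(t)\bigr)N$, which is the claimed estimate; the uniformity in $t$ when $\bar\tau'$ eventually vanishes follows from the corresponding uniformity of $C(t)$ (noted in the remark after Proposition \ref{prop:relative}) and of $c(t)$. I do not expect a genuine obstacle here: the only slightly delicate point is the uniform control of the one‑site exponential moment $I(\beta/2,\bar\tau(t))$, for which the quadratic growth of $V$ and the choice $\alpha<\beta$ (so that both the $p$‑ and the $r$‑integrands decay) are precisely what is needed.
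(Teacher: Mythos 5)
Your proposal is correct and follows essentially the same route as the paper: the entropy inequality with respect to $\lambda^N_t$, the product structure reducing the exponential moment to a single-site integral, a choice $0<\alpha<\beta$ making that integral finite and bounded in $t$, and the bound $H_N(t)\le C(t)N$ from Proposition \ref{prop:relative}. Your explicit verification of the one-site integral (quadratic growth of $V$, continuity in $\bar\tau$, boundedness of $\bar\tau$ on compacts) just spells out what the paper asserts in one line.
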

\begin{proof}
By the entropy inequality and for  $0<\alpha  < \beta$,
\begin{align}
  \int \left[ \sum_{i=1}^N \left( \frac{p_i^2}{2}+V(r_i)\right)\right] f_t d\lambda_t  
 \le \frac{1}{\alpha} H_N(t)+\frac{1}{\alpha}\log \int \e{\alpha\sum_{i=1}^N \left(\frac{p_i^2}{2}+V(r_i) \right)} d \lambda^N_t
 \\ \nonumber
  =  \frac{1}{\alpha} H_N(t)+\frac{N}{\alpha}\log \int \e{(\alpha-\beta)\left( \frac{p_1^2}{2}+ V(r_1) \right)
   +\beta\taubar(t)r_1- G(\taubar(t)) }dr_1 \frac{d p_1}{\sqrt{2\pi \beta^{-1}}}.
\end{align}
Note that thanks to our choice of $\alpha$ the last integral is convergent and bounded with respect to $t$.
Thus, the conclusion follows as a consequence of \thref{prop:relative}.
\end{proof}

\section{The hydrodynamic limit}

\subsection{Microscopic solutions}
%Set $V'_i = V'(r_i)$ and define
%\begin{align}
%&\nabla a_i = a_{i+1}-a_i, \qquad \nabla^* a_i = -\nabla a_{i-1}= a_{i-1}-a_i
%\\
%& \Delta a_i = - \nabla^* \nabla a_i = - \nabla \nabla^* a_i = a_{i+1}+a_{i-1}-2a_i.
%\end{align}
Let $(r_i(t), p_i(t))_{i=1}^N$ be solutions of  \eqref{eq:SDE}.
% \begin{equation} \label{eq:SDE}
% \begin{cases}
% d r_1 = Np_1 d t +N \sigma \nabla V'_1d t - \sqrt{2 \beta^{-1}N\sigma}d \widetilde w_1
% \\
% d r_i = -N \nabla^*  p_i d t +N \sigma \Delta V'_i d t + \sqrt{2 \beta^{-1} N \sigma}
% \nabla^* d \widetilde w_i
% \\
% d r_N = -N\nabla^* p_N d t +N \sigma(\taubar(t)+V'(r_{N-1})-2V'(r_N))  d t + 
% \sqrt{2\beta^{-1} N\sigma } d \nabla^* \widetilde w_N
% \\
% d p_1 = N \nabla V'_1 d t +N {\sigma} ( p_2-2p_1) d t 
% + \sqrt{2\beta^{-1}N {\sigma}}  { d \nabla^* w_1}
% \\
% d p_i = N \nabla V'_i d t +N \sigma \Delta p_i d t + \sqrt{2\beta^{-1}N \sigma }d \nabla^*  w_i
% \\
% d p_N = N(\bar \tau(t)-V'(r_N)) d t - N \sigma \nabla^* p_N d t + \sqrt{2\beta^{-1}N \sigma }d   w_{N-1},
% \end{cases},
% \end{equation}
% with $\sigma /N \to 0$ and $\sigma^2/N \to +\infty$. Here $\{w_i\}_{i=0}^\infty$ and $\{ \tilde w_i\}_{i=1}^\infty$ are independent families of independent Brownian motions on a common probability space $(\Omega, \mathbb P)$.
Let $\varphi,\psi  \in C^2(Q_T)$ be such that
$\varphi(t,1)=\psi(t,0)=0$ for all $t \in [0,T]$, and let
\begin{equation}
\varphi_i(t):= \varphi \left( t, \frac{i}{N} \right), \quad \psi_i(t):= \psi \left( t, \frac{i}{N} \right), \qquad i=1,\dots, N.
\end{equation}
We set $V'_i:=V'(r_i)$ and evaluate
\begin{align}
 &\frac{1}{N}\somma{i}{1}{N} \varphi_i(T)r_i(T)-\frac{1}{N} \somma{i}{1}{N}\varphi_i(0)r_i(0) = \int_0^T\frac{1}{N}\somma{i}{1}{N}\dot \varphi_i(t) r_i(t) dt+
\\
&+  \int_0^T  \left(p_1 \varphi_1- \somma{i}{2}{N-1} \nabla^* p_i\varphi_i +\nabla^* p_N\varphi_N\right) dt+ 
\nonumber\\
&+ \sigma \int_0^T \left(\varphi_1\nabla V'_1+\somma{i}{2}{N-1}\varphi_i \Delta V'_i+\varphi_N(\taubar(t)+V'_{N-1}-2V'_N) \right) dt+
\nonumber \\
&+ \sqrt{2\beta^{-1}  \frac{\sigma}{N}} \int_0^T \left( -\varphi_1 d \wtilde_1+ \somma{i}{2}{N-1}\varphi_i \nabla^* d \wtilde_i+ \varphi_N \nabla^* d\wtilde_N\right). \nonumber
\end{align}
We use the summation by parts formula
\begin{align}
\somma{i}{2}{N-1} \varphi_i \nabla^* a_i & =  \somma{i}{1}{N-1}a_i\nabla \varphi_i+\varphi_1 a_1
- \varphi_N a_{N-1}%\somma{i}{2}{N-1}\left[ (\varphi_i-\varphi_{i+1})a_i-\varphi_ia_{i-1}+\varphi_{i+1}a_i \right] \nonumber
%\\
%& = -\somma{i}{2}{N-1}( \varphi_{i+1}-\varphi_i) a_i+ \varphi_N a_{N-1}- \varphi_2 a_1 
%\\
\end{align}
with $a_i = p_i$, $a_i = V'_{i+1}- V'_i$ and $a_i = d \wtilde_i$ in order to obtain
\begin{align}
& \frac{1}{N}\somma{i}{1}{N} \varphi_i(T)r_i(T)-\frac{1}{N} \somma{i}{1}{N}\varphi_i(0)r_i(0) \nonumber
\\
 =& \int_0^T\frac{1}{N}\somma{i}{1}{N}\dot \varphi_i(t) r_i(t) dt - \int_0^T \somma{i}{1}{N-1} \nabla p_i \varphi_i  dt+ \int_0^T \varphi_N p_N dt+
\\
&  - \sigma \int_0^T \somma{i}{1}{N-1} \nabla \varphi_i\nabla V'_i dt +\sigma \int_0^T \varphi_N(\taubar(t)-V'_N)dt+
\nonumber \\
& +\sqrt{2 \beta^{-1}\frac{\sigma}{N}} \int_0^T \somma{i}{1}{N-1}\nabla \varphi_i d \wtilde_i
\nonumber \\
=& \int_0^T\frac{1}{N}\somma{i}{1}{N}\dot \varphi_i(t) r_i(t) dt- \int_0^T \somma{i}{1}{N-1} \nabla \varphi_i p_i  dt - \sigma \int_0^T \somma{i}{1}{N-1}\nabla \varphi_i \nabla V'_i dt + \label{eq:prova}
\\
& +\sqrt{2 \beta^{-1} \frac{\sigma}{N}} \int_0^T \somma{i}{1}{N-1}\nabla \varphi_id \wtilde_i,
\nonumber
\end{align}
as $\varphi_N(t) = \varphi(t,1)=0$. After a second summation by parts we obtain
\begin{align}
-  \sum_{i=1}^{N-1}\nabla \varphi_i \nabla V'_i = \sum_{i=1}^{N-1}V'_i\Delta \varphi_i  + V'_1\nabla \varphi_0-V'_N \nabla \varphi_{N-1}
\end{align}
so that we can write
\begin{align}
 \frac{1}{N}\somma{i}{1}{N} \varphi_i(T)r_i(T)-\frac{1}{N} \somma{i}{1}{N}\varphi_i(0)r_i(0)  =  \int_0^T\frac{1}{N}\somma{i}{1}{N}\dot \varphi_i(t) r_i(t) dt - \int_0^T \somma{i}{1}{N-1} p_i \nabla \varphi_i   dt + \tilde R_N,
\end{align}
with
\begin{align} \label{eq:Rtilde}
\tilde R_N =& \sigma \int_0^T  \sum_{i=1}^{N-1}V'_i(t) \Delta \varphi_i dt+\sqrt{2 \beta^{-1} \frac{\sigma}{N}} \int_0^T \somma{i}{1}{N-1}\nabla \varphi_i(t) d \wtilde_i(t)
\\
&+ \sigma\int_0^T V'_1(t) \nabla \varphi_0 dt -\sigma\int_0^T V'_N(t) \nabla \varphi_{N-1} dt.  \nonumber
\end{align}
\begin{lem} \thlabel{lem:rtilde}
$\mathbb E[|\tilde R_N|^2] \to 0$.
\end{lem}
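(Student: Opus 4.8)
The plan is to treat the four summands of $\tilde R_N$ in \eqref{eq:Rtilde} separately — the bulk term $\sigma\int_0^T\sum_{i=1}^{N-1}V'_i\Delta\varphi_i\,dt$, the stochastic integral $\sqrt{2\beta^{-1}\sigma/N}\int_0^T\sum_{i=1}^{N-1}\nabla\varphi_i\,d\wtilde_i$, and the two boundary pieces $\sigma\int_0^T V'_1\nabla\varphi_0\,dt$ and $-\sigma\int_0^T V'_N\nabla\varphi_{N-1}\,dt$ — and to show that the second moment of each vanishes, which suffices since $\E[|\tilde R_N|^2]\le 4\sum_k\E[|(\text{$k$-th piece})|^2]$.

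For the bulk term, $\varphi\in C^2(Q_T)$ gives $|\Delta\varphi_i(t)|\le N^{-2}\|\partial_x^2\varphi\|_{L^\infty(Q_T)}$, so by Cauchy--Schwarz in $i$ and in $t$ its second moment is bounded by $C\sigma^2N^{-2}\int_0^T\E\big[\tfrac1N\sum_{i=1}^{N-1}V'(r_i(t))^2\big]\,dt$; since the bounds \eqref{eq:Vconvex} on $V$ make $\E\big[\sum_i V'(r_i(t))^2\big]\le CN$ uniformly in $t$ by \thref{prop:energy}, this is $O(\sigma^2/N^2)$, which $\to0$ by \eqref{eq:siglim}. For the stochastic integral, the Brownian motions $\wtilde_i$ are independent, so It\^o's isometry gives its second moment as $2\beta^{-1}\tfrac\sigma N\int_0^T\sum_{i=1}^{N-1}(\nabla\varphi_i(t))^2\,dt\le C\tfrac{\sigma}{N}\cdot\tfrac1N$, again $\to0$ by \eqref{eq:siglim}.

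The two boundary pieces are the heart of the matter. Since $\varphi$ is $C^1$ and $\varphi(t,1)=0$, both $\nabla\varphi_0$ and $\nabla\varphi_{N-1}$ are $O(1/N)$ uniformly in $t$ (and $\nabla\varphi_0$ vanishes outright if one extends $\varphi$ to the ghost site by the reflecting rule $\varphi_0:=\varphi_1$, matching $r_0\equiv r_1$), so by Cauchy--Schwarz it is enough to prove
\[
\sup_N\int_0^T\E\big[V'(r_1(t))^2+V'(r_N(t))^2\big]\,dt<\infty ,
\]
since the boundary pieces are then $O(\sigma^2/N^2)\to0$. For the right end, writing $V'(r_N)-\bar\tau(t)=-\tilde D_N\psi_t(r_N)$ with $\psi_t(r):=V(r)-\bar\tau(t)r$ — uniformly convex by \eqref{eq:Vconvex} — and noting that $\tilde D_N=-\partial_{r_N}$ is one of the directions appearing in $\tilde{\mathcal D}_N$, the entropy inequality together with the logarithmic Sobolev inequality (Bakry--\'Emery) for the single-site measure proportional to $e^{-\beta\psi_t}$ yields $\E\big[(V'(r_N(t))-\bar\tau(t))^2\big]\le C\,\tilde{\mathcal D}_N(t)+C$; integrating in $t$ and using $\int_0^T\tilde{\mathcal D}_N\,dt\le C/\sigma$ from \thref{prop:relative} gives the bound for $r_N$. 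For the left end, the pinning $p_0\equiv0$ makes $D_0=\partial_{p_1}$ a genuine thermostat for $p_1$, so the same argument via $\mathcal D_N$ controls $\int_0^T\E[p_1^2]\,dt$; one controls in addition the discrete gradients $V'(r_{i+1})-V'(r_i)$ and $p_{i+1}-p_i$ through $\tilde D_i$ and $D_i$, and then integrates the identity for $\E\big[\mathcal G_N^{\bar\tau(t)}\psi_t(r_1)\big]$ against these bounds and \thref{prop:energy}; here the second limit $\lim N/\sigma^2=0$ of \eqref{eq:siglim} is exactly what absorbs the remainder terms of order $\sigma^{-2}\int_0^T\E[(p_i-p_{i-1})^2]\,dt$.

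The hard part is this last estimate for $V'(r_1)$: because the leftmost particle is pinned rather than force-driven, $r_1$ is not directly coupled to a heat bath, so — unlike at the right end, where $\tilde D_N$ provides a one-site dissipative term — a naive chain of energy inequalities for $\psi_t(r_i)$ leaks into the bulk and does not close, and one must instead use the pinning $p_0\equiv0$ together with the $D_i,\tilde D_i$ gradient estimates in a more global way, which is precisely where both limits in \eqref{eq:siglim} enter.
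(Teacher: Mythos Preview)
Your treatment of the bulk term, the stochastic integral, and the right boundary piece $V'(r_N)$ is essentially the same as the paper's (the paper does the $r_N$ estimate by a direct integration by parts rather than invoking LSI, but the output is the same bound $\int (V'_N)^2 f_t^N d\lambda_t^N \le C(1+\tilde{\mathcal D}_N(t))$).

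The gap is your handling of $V'(r_1)$. You set yourself the target
\[
\sup_N\int_0^T\E\big[V'(r_1(t))^2\big]\,dt<\infty,
\]
and then sketch an argument via ``the identity for $\E[\mathcal G_N^{\bar\tau(t)}\psi_t(r_1)]$'' which you yourself concede ``leaks into the bulk and does not close''. That is not a proof; the generator identity for $\psi_t(r_1)$ produces cross terms $Np_1\psi_t'(r_1)$ and $N\sigma(V'(r_2)-V'(r_1))\psi_t'(r_1)$ with no sign, and controlling them by $\mathcal D_N,\tilde{\mathcal D}_N$ alone does not yield a uniform-in-$N$ bound. Moreover the uniform bound is stronger than what is needed.

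The paper's argument is much simpler and does not aim so high. After the same integration-by-parts step
\[
\int (V'_1)^2 f_t^N d\lambda_t^N \le C'\Big(1+\int \frac{1}{f_t^N}\Big(\frac{\partial f_t^N}{\partial r_1}\Big)^2 d\lambda_t^N\Big),
\]
one telescopes
\[
\frac{\partial f_t^N}{\partial r_1}= -\sum_{j=1}^{N-1}\Big(\frac{\partial f_t^N}{\partial r_{j+1}}-\frac{\partial f_t^N}{\partial r_j}\Big)+\frac{\partial f_t^N}{\partial r_N},
\]
applies Cauchy--Schwarz (losing a factor $N$), and lands exactly inside $\tilde{\mathcal D}_N$. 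This gives only $\int_0^T\E[V'(r_1)^2]\,dt\le C(T)N/\sigma$, but combined with the prefactor $\sigma^2/N^2$ the left boundary piece is $O(\sigma/N)\to 0$. Note that only the first limit $\sigma/N\to0$ of \eqref{eq:siglim} is used here; your invocation of the second limit $N/\sigma^2\to0$ is a symptom of the approach not closing.
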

\begin{proof}
  Since $\varphi \in C^2([0,1])$,  we can estimate
  $|\Delta \varphi_i| \le \| \partial^2_{xx}\varphi\|_\infty N^{-2}$.
% \begin{align}
% |\Delta \varphi_i| \le \frac{\| \partial^2_{xx}\varphi\|_\infty}{N^2}.
% \end{align}
  Moreover, since $c_1 \le V''(r) \le c_2$ we have $|V'(r)|^2 \le C \left(1+r^2 \right)$, and
\begin{align}
  \mathbb E \left[\left | \int_0^T \sigma \sum_{i=1}^{N-1} V'_i \Delta \varphi_i dt \right |^2 \right]\le
  \sigma^2T \left(\sum_{i=1}^{N-1} (\Delta \varphi_i)^2\right)
  \int_0^T \mathbb E \left[ \sum_{i=1}^{N-1} (V'_i(t))^2 \right] dt \\ \nonumber
  \le  C\frac{\sigma^2T}{N^3}
  \int_0^T \mathbb E \left[ \sum_{i=1}^{N-1}\left(1+r_i^2(t) \right) \right] dt \le C(T) \frac{\sigma^2}{N^2}.
\end{align}
Since the Brownian motions $d \wtilde_i$ are independent, we evaluate
\begin{align}
  \mathbb E \left[ \left |\sqrt{\frac{2\sigma}{\beta N}}
  \int_0^T \somma{i}{1}{N-1}\nabla \varphi_id \wtilde_i \right|^2 \right]
\le  \frac{2\sigma}{\beta N} \int_0^T \somma{i}{1}{N-1}(\nabla \varphi_i)^2d t  \nonumber
\le \frac{C T \sigma}{N^2} \nonumber
\end{align}
In order to evaluate the boundary terms in \eqref{eq:Rtilde}, we estimate, for any $i =1$ and $i= N$,
\begin{align}
  \int (V'_i)^2 f_t^N d \lambda_t^N &
= \int \bar \tau(t) V'_i f_t^N d\lambda_t^N+ \int (V'_i-\bar \tau(t)) V'_i f_t^N d\lambda_t^N
\\
                                    & = \int \bar \tau(t) V'_i f_t^N d\lambda_t^N + \int V''(r_i) f_t^N d \lambda_t^N
                                      + \int V'_i \frac{\partial f_t^N}{\partial r_i}d \lambda_t^N \nonumber
\\
                                    & \le C + \alpha \int (V'_i)^2 f_t^N d \lambda_t^N
                                      + \int \left( \frac{\partial f_t^N}{\partial r_i} \right)^2 \frac{1}{f_t^N} d \lambda_t^N,
                                      \nonumber
\end{align}
where we have used Cauchy-Schwartz twice and used the boundedness of $\bar \tau(t)$ and $V''$.
Here we can choose $C>0$ and $0< \alpha<1$ such that do not depend on $t$ or $N$,
so that we have
\begin{equation*}
  \int (V'_i)^2 f_t^N d \lambda_t^N \le C' \left( 1
    + \int \left( \frac{\partial f_t^N}{\partial r_i} \right)^2 \frac{1}{f_t^N} d \lambda_t^N\right)
\end{equation*}
For $i=N$ this gives directly
$(1-\alpha) \int (V'_i)^2 f_t^N d \lambda_t^N \le C' \left( 1 +  \tilde{\mathcal D}_N(t)\right)$.
For $i=1$, by writing
\begin{align}
  \left(\frac{\partial f_t^N}{\partial r_1} \right)^2 &= \left(-\sum_{j=1}^{N-1}
   \left( \frac{\partial f_t^N}{\partial r_{j+1}}- \frac{\partial f_t^N}{\partial r_j} \right)
    + \frac{\partial f_t^N}{\partial r_N} \right)^2
\\
& \le N\left( \sum_{j=1}^{N-1} \left( \frac{\partial f_t^N}{\partial r_{j+1}}- \frac{\partial f_t^N}{\partial r_j} \right)^2 + \left(\frac{\partial f_t^N}{\partial r_N}\right)^2 \right) \nonumber
\end{align}
we obtain
\begin{align}
(1-\alpha) \int (V'_1)^2 f_t^N d \lambda_t^N \le C + N \tilde{\mathcal D}_N(t),
\end{align}
{
and, in turn,
\begin{align}
\int_0^T \int (V'_1)^2 f_t^N d \lambda_t^N  \le C(T) \left( 1 + \frac{N}{\sigma}\right) \le C'(T) \frac N \sigma.
\end{align}
This allows us to estimate
\begin{align}
  \mathbb E \left[ \left(\sigma \int_0^T V'_1(t) \nabla \varphi_0 dt\right)^2 \right] &\le
 T\frac{\sigma^2}{N^2} \mathbb E \left[ \int_0^T (V'_1(t))^2 dt \right]
                                                                                        \le C_T \frac{\sigma}{N},
\end{align}
}
which vanishes as $N \to \infty$. In a similar way we estimate the boundary term involving $V'_N$.

% Integrating by parts twice, we write
% \begin{align}
% \mathbb E[ (V'_N)^2] &= \int V'_NV'_N f_t^N d \lambda_t^N
% \\
% & =\beta^{-1} \int V''(r_N) f_t d \lambda_t^N + \beta^{-1}\int V'_N \taubar(t) f_t^N d \lambda_t^N + \beta^{-1}\int V'_N \frac{\partial f_t^N}{\partial r_N}d \lambda_t^N
% \\
% & \le C + \frac{1}{4} \mathbb E[ (V'_N)^2] + \beta^{-2} \int \taubar(t)^2 f_t^N d \lambda_t^N +\frac{1}{4} \mathbb E[ (V'_N)^2]  + \beta^{-2} \int \frac{1}{f_t^N} \left( \frac{\partial f_t^N}{\partial r_N} \right)^2 d \lambda_t^N
% \\
% & \le \frac{1}{2} \mathbb E[ (V'_N)^2] + C + \beta^{-2} \tilde {\mathcal D}_N(t)
% %& =\beta^{-1} \int( V''(r_N) +\beta^{-1}\taubar(t)^2)f_t^N d \lambda_t^N +\beta^{-1}\int(V'(r_N)+ \beta^{-1}\taubar(t)) \frac{\partial f_t^N}{\partial r_N} d\lambda_t^N
% %\\
% %& \le C + \left( \int (V'(r_N)+ \beta^{-1})^2 f_t^N d \lambda_t^N \right)^{1/2} \left( \int \frac{1}{f_t^N}\left(\frac{\partial f_t^N}{\partial r_N}\right)^2 d \lambda_t^N \right)^{1/2}
% %\\
% %& \le C + \left(C \int \sum_{i=1}^N (1+r_i^2) f_t^N d \lambda_t^N \right)^{1/2} \left(\tilde{ \mathcal D}_N(t) \right)^{1/2}
% %\\
% %& \le C \left(1+ \sqrt \frac{N}{\sigma}\right).
% \end{align}
% Thus, we obtain
% \begin{align}
% \int_0^T \mathbb E [ |\sigma V'_N \nabla \varphi_{N-1}|]  dt \le C\frac{\sigma}{N}  \left( T + \int_0^T \tilde{\mathcal D}_N(t)dt \right)^{1/2} \le C \frac{\sigma}{N} \left( T+ \frac{1}{\sigma}\right)^{1/2} \le C \frac{\sigma}{N}.
% \end{align}
\end{proof}
Thus, we have obtained the following
\begin{prop} \label{prop:eqapprr}
Let $\varphi \in C^2(Q_T)$ such that $\varphi(t,1)=0$ for all $t \in [0,T]$. Then
\begin{equation}
 \frac{1}{N}\somma{i}{1}{N} \varphi_i(T)r_i(T)-\frac{1}{N} \somma{i}{1}{N}\varphi_i(0)r_i(0) - \int_0^T\frac{1}{N}\somma{i}{1}{N}\dot \varphi_i(t) r_i(t) dt + \int_0^T \somma{i}{1}{N-1} p_i\nabla \varphi_i dt \to 0
\end{equation}
in probability as $N \to \infty$
\end{prop}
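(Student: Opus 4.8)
The plan is to obtain the statement as an immediate consequence of the exact microscopic identity already derived above together with \thref{lem:rtilde}. First I would integrate the equations for $dr_i$ in \eqref{eq:SDE} against the test sequence $\varphi_i(t)$ and carry out the two successive discrete summations by parts exactly as above: the first one (applied with $a_i = p_i$, $a_i = V'_{i+1}-V'_i$ and $a_i = d\wtilde_i$) moves the discrete gradient off the $p$'s and off the noise onto $\varphi$, and the second one moves the remaining discrete Laplacian off $V'$ onto $\varphi$. Since $\varphi_N(t) = \varphi(t,1) = 0$, all the right-endpoint boundary terms produced along the way cancel, and one is left with the exact equality
\begin{equation*}
\frac{1}{N}\somma{i}{1}{N} \varphi_i(T)r_i(T)-\frac{1}{N} \somma{i}{1}{N}\varphi_i(0)r_i(0) - \int_0^T\frac{1}{N}\somma{i}{1}{N}\dot \varphi_i(t) r_i(t) dt + \int_0^T \somma{i}{1}{N-1} p_i\nabla \varphi_i\, dt = \tilde R_N,
\end{equation*}
with $\tilde R_N$ the remainder \eqref{eq:Rtilde}.

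Then it suffices to quote \thref{lem:rtilde}, according to which $\mathbb E[|\tilde R_N|^2] \to 0$; since convergence in $L^2(\mathbb P)$ implies convergence in probability, the left-hand side above — which is precisely the quantity in the statement — tends to $0$ in probability, which is the claim.

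The only place where something genuine happens is hidden inside \thref{lem:rtilde}, and it is worth recalling why it works, since it is the crux of the boundary analysis. The bulk viscous term $\sigma\int_0^T \sum_i V'_i \Delta\varphi_i\, dt$ is $O(\sigma/N)$ in $L^2(\mathbb P)$ using $|\Delta\varphi_i| \le \|\partial^2_{xx}\varphi\|_\infty N^{-2}$, the quadratic growth of $V'$ and the energy estimate \thref{prop:energy}; the martingale contribution is $O(\sqrt{\sigma}/N)$ by the Itô isometry together with independence of the $\wtilde_i$; both vanish under \eqref{eq:siglim}. The delicate terms are the two boundary fluxes $\sigma\int_0^T V'_1 \nabla\varphi_0\, dt$ and $\sigma\int_0^T V'_N \nabla\varphi_{N-1}\, dt$: here $\nabla\varphi_0$ and $\nabla\varphi_{N-1}$ are only $O(N^{-1})$, which by itself is not enough, so one must pair this with the bound $\int_0^T \int (V'_1)^2 f_t^N\, d\lambda_t^N \le C(T) N/\sigma$ (and likewise for $i=N$), obtained from the Dirichlet-form control \eqref{eq:entr00} of \thref{prop:relative} after using Cauchy--Schwarz to absorb $\int V'_i\, \partial_{r_i} f_t^N\, d\lambda_t^N$; this makes each boundary flux $O(\sigma/N)$, which again vanishes by \eqref{eq:siglim}. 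So no new obstacle arises here beyond assembling these estimates; the only real subtlety, as always in this type of argument, is the bookkeeping that balances a regularizing noise strong enough to dominate the boundary fluxes against one weak enough ($\sigma/N \to 0$) to disappear in the hyperbolic limit — exactly the tension encoded in the scaling hypothesis \eqref{eq:siglim}.
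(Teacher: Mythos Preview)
Your proposal is correct and matches the paper's approach exactly: the proposition is stated immediately after the preceding calculations and \thref{lem:rtilde} with the words ``Thus, we have obtained the following'', so its proof is precisely the exact identity equating the expression in the statement to $\tilde R_N$, followed by \thref{lem:rtilde} and the implication $L^2(\mathbb P)$-convergence $\Rightarrow$ convergence in probability. Your recap of the mechanism inside \thref{lem:rtilde} (bulk viscous term, martingale, and boundary fluxes controlled via the Dirichlet-form bound) is accurate; the only very minor slip is that the boundary flux is $O(\sqrt{\sigma/N})$ in $L^2(\mathbb P)$ (its second moment is $O(\sigma/N)$), but this of course still vanishes under \eqref{eq:siglim}.
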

From similar calculations and recalling that $\psi_0 =0$, we evaluate
\begin{align}
&\frac{1}{N} \somma{i}{1}{N} \psi_i(T)p_i(T) - \frac{1}{N} \somma{i}{1}{N} \psi_i(0) p_i(0) 
\\
&= \int_0^T \frac{1}{N} \somma{i}{1}{N} \dot \psi_i(t) p_i(t)dt - \somma{i}{1}{N-1} V'_i\nabla \psi_i  dt + \int_0^T \psi_N \taubar(t) dt + R_N, \nonumber
%&+ \int_0^T \left((V'_2-V'_1)\psi_1 + \somma{i}{2}{N-1} \psi_i (V'_{i+1}-V'_i)+ \psi_N (\taubar(t)-V'_N) \right)dt+
%\nonumber \\
%&+ \sigma \int_0^T \left( \psi_1 (p_2-2p_1)+ \somma{i}{2}{N-1} \psi_i(p_{i+1}+p_{i-1}-2p_i) + \psi_N(p_{N-1}-p_N) \right)dt+
%\nonumber \\
%&+ \sqrt{2\beta^{-1}  \frac{\sigma}{N}} \int_0^T \left( -\psi_1 d w_1- \somma{i}{2}{N-1}\psi(d w_i-dw_{i-1})+ \psi_N dw_{N-1}\right)
%\nonumber \\
\end{align}
where
\begin{align}
R_N=&  \sigma \int_0^T\sum_{i=1}^{N-1}p_i  \Delta \psi_i dt+\sqrt{2 \beta^{-1} \frac{\sigma}{N}} \int_0^T \somma{i}{1}{N-1}\nabla \psi_id w_i+
\\
& -\sigma \int_0^T p_N\nabla \psi_{N-1} dt+\sqrt{2 \beta^{-1} \frac{\sigma}{N}} \int_0^T \varphi_1 d w_0, \nonumber
\end{align}
Similarly to \thref{lem:rtilde}, we prove the following
\begin{lem}
$\mathbb E[| R_N|^2] \to 0$.
\end{lem}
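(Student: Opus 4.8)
The plan is to mirror, term by term, the proof of \thref{lem:rtilde}, since $R_N$ has exactly the same structure as $\tilde R_N$ with the roles of $r$ and $p$ (and of $V'$ and $1$) interchanged, together with the extra boundary term $\sqrt{2\beta^{-1}\sigma/N}\int_0^T \varphi_1\, dw_0$. I would split $\E[|R_N|^2]$ into four pieces using $|a_1+\dots+a_4|^2 \le 4\sum |a_j|^2$ and bound each separately.

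First I would treat the bulk drift term $\sigma\int_0^T \sum_{i=1}^{N-1} p_i \Delta\psi_i\, dt$. Using $|\Delta\psi_i| \le \|\partial^2_{xx}\psi\|_\infty N^{-2}$ and Cauchy--Schwarz in time and in $i$, this is bounded by $C\sigma^2 N^{-3}\int_0^T \E[\sum_i p_i^2(t)]\, dt$, which by the energy estimate \thref{prop:energy} (which controls $\E[\sum p_i^2/2 + V(r_i)] \le C(t) N$) is $\le C(T)\sigma^2 N^{-2} \to 0$ by the first condition in \eqref{eq:siglim}. Second, the two martingale terms $\sqrt{2\beta^{-1}\sigma/N}\int_0^T \sum_{i=1}^{N-1}\nabla\psi_i\, dw_i$ and $\sqrt{2\beta^{-1}\sigma/N}\int_0^T \varphi_1\, dw_0$: by \Ito isometry and independence of the Brownian motions, their second moments are $\le \frac{C\sigma}{N}\int_0^T\sum_i (\nabla\psi_i)^2\, dt \le \frac{C T\sigma}{N^2}$ and $\le \frac{C\sigma}{N}\int_0^T \varphi_1^2\, dt \le \frac{CT\sigma}{N}$ respectively, both vanishing since $\sigma/N \to 0$. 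Note the last one uses only $\sigma/N \to 0$, not the faster rate, which is fine.

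The remaining piece is the boundary drift term $-\sigma\int_0^T p_N \nabla\psi_{N-1}\, dt$, and this is where the argument needs the analogue of the Dirichlet-form bound used for $V'_1$ and $V'_N$ in \thref{lem:rtilde}. Here $|\nabla\psi_{N-1}| \le \|\partial_x\psi\|_\infty N^{-1}$, so I need $\E[\int_0^T p_N^2(t)\, dt]$ to grow no faster than $N/\sigma$. To get this I would integrate $\partial_{p_N}$ by parts against $\lambda^N_t$ exactly as in the $V'$ computation: $\int p_N^2 f_t^N\, d\lambda^N_t = \int (p_N - \bar p) p_N f_t^N d\lambda_t^N + \bar p\int p_N f^N_t d\lambda_t^N$ (with $\bar p = 0$ here), and since $\partial_{p_N}^*$ acting on $\lambda_t^N$ produces $\beta(p_N - \bar p)$, one rewrites $\int (p_N-\bar p)p_N f_t^N d\lambda_t^N = \beta^{-1}\int \partial_{p_N}(p_N f_t^N) d\lambda_t^N = \beta^{-1}\int f_t^N d\lambda_t^N + \beta^{-1}\int p_N \partial_{p_N} f_t^N\, d\lambda_t^N$, and Cauchy--Schwarz on the last term gives $\int p_N^2 f^N_t d\lambda^N_t \le C(1 + \int (\partial_{p_N}f^N_t)^2/f^N_t\, d\lambda^N_t)$. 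Then, writing $\partial_{p_N} = \partial_{p_1} + \sum_{j=1}^{N-1}(\partial_{p_{j+1}} - \partial_{p_j})$ and using Cauchy--Schwarz over the $N$ summands (the trick in \eqref{eq:Rtilde}'s proof for $V'_1$), one gets $\int (\partial_{p_N} f^N_t)^2/f^N_t\, d\lambda^N_t \le N\,\mathcal D_N(t)$, hence $\int_0^T\int p_N^2 f^N_t d\lambda^N_t\, dt \le C(T)(1 + N\int_0^T \mathcal D_N(s)ds) \le C'(T) N/\sigma$ by \eqref{eq:entr00}. Therefore $\E[(\sigma\int_0^T p_N\nabla\psi_{N-1}\,dt)^2] \le T\sigma^2 N^{-2}\cdot C'(T)N/\sigma = C''(T)\sigma/N \to 0$.

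The main obstacle is precisely this boundary term: one must be careful that the $p_N$ appearing in $dp_N$ in \eqref{eq:SDE} has no heat-bath damping of its own beyond $-\sigma N\nabla p_{N-1}$, so the relevant Dirichlet form is $\mathcal D_N$ (which includes the $(\partial_{p_1})^2$ boundary piece but the sum runs only to $N-1$ in the gradient terms) — hence the need for the telescoping-plus-Cauchy--Schwarz step that costs a factor $N$, exactly compensated by the $N/\sigma$ entropy bound and the $\sigma/N \to 0$ scaling. Everything else is a routine transcription of \thref{lem:rtilde} with $p$ in place of $r$. Collecting the four bounds, $\E[|R_N|^2] \le C(T)(\sigma^2/N^2 + \sigma/N) \to 0$, which proves the lemma.
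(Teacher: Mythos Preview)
Your proof is correct and follows exactly the approach indicated by the paper, which simply says ``Similarly to \thref{lem:rtilde}'' without spelling out the details. In particular, you correctly identify that the telescoping--Cauchy--Schwarz trick must now be applied to $p_N$ (since $\mathcal D_N$ contains the boundary term $(\partial_{p_1} f_t^N)^2$ rather than $(\partial_{p_N} f_t^N)^2$), mirroring the role of $V'_1$ in \thref{lem:rtilde}.
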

Thus, we have proved
\begin{prop} \label{prop:eqapprp}
Let $\psi \in C^2(Q_T)$ such that $\psi(t,0)= 0$ for all $t \in [0,T]$. Then
\begin{align}
& \frac{1}{N}\somma{i}{1}{N} \psi_i (T)p_i(T)-\frac{1}{N} \somma{i}{1}{N}\psi_i (0)p_i(0) - \int_0^T\frac{1}{N}\somma{i}{1}{N}\dot \psi_i(t) p_i(t) dt +  
\\
&+\int_0^T \somma{i}{1}{N-1} V'_i\nabla \psi_i  dt +\int_0^T \psi(t,1) \taubar(t) dt  \to 0  \nonumber
\end{align}
in probability as $N \to \infty$.
\end{prop}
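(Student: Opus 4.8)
The plan is to mirror the computation already carried out for the $r$-equation in Proposition~\ref{prop:eqapprr}, using the SDE for $dp_i$ in \eqref{eq:SDE} in place of the one for $dr_i$, and then to reduce the statement to showing that the error term $R_N$ is negligible in $L^2(\Pbb)$. First I would apply the Ito formula to $\psi_i(t)p_i(t)$, writing
\[
d\bigl(\psi_i(t)p_i(t)\bigr) = \dot\psi_i(t) p_i(t)\, dt + \psi_i(t)\, dp_i(t),
\]
and substitute the expressions for $dp_i$ from \eqref{eq:SDE}, distinguishing the bulk indices $2\le i\le N-1$ from the two boundary indices $i=1$ and $i=N$. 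Summing over $i$ and dividing by $N$, the Hamiltonian drift contributes $\frac1N\sum_i \psi_i N\nabla V_i'$ in the bulk, together with the boundary term $\psi_N N(\taubar(t)-V_N')$ coming from $dp_N$; the noise strength $\sigma N$ multiplied by the prefactor $1/N$ leaves an $O(\sigma)$ viscous drift and an $O(\sqrt{\sigma/N})$ martingale part, exactly as in the $r$-case.

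Next I would perform the summation by parts that moves the discrete gradient off $V_i'$ and onto $\psi_i$, using the identity
\[
\somma{i}{2}{N-1}\psi_i \nabla V_i' = -\somma{i}{1}{N-1} V_i'\nabla\psi_i + \psi_N V_{N-1}' - \psi_1 V_1',
\]
and combine the resulting boundary contributions with the $\psi_N(\taubar(t)-V_N')$ term coming directly from the equation for $dp_N$ and with the $\psi_1$ contribution arising from $dp_1$; since $\psi_0=\psi(t,0)=0$, the only surviving non-vanishing boundary term at the left end disappears, and the term $V_N'$ at the right end telescopes against the $-V_N'$ produced above, leaving exactly $+\psi_N\taubar(t)\,dt = \psi(t,1)\taubar(t)\,dt$ as the clean boundary output. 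Collecting what remains, the identity takes the asserted form with
\[
R_N = \sigma\int_0^T \somma{i}{1}{N-1} p_i \Delta\psi_i\, dt
  - \sigma\int_0^T p_N \nabla\psi_{N-1}\, dt
  + \sqrt{2\beta^{-1}\tfrac{\sigma}{N}}\int_0^T \somma{i}{1}{N-1}\nabla\psi_i\, dw_i
  + \sqrt{2\beta^{-1}\tfrac{\sigma}{N}}\int_0^T \psi_1\, dw_0.
\]

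Finally I would invoke the lemma $\E[|R_N|^2]\to 0$, whose proof is completely parallel to that of Lemma~\ref{lem:rtilde}: the bulk viscous term is bounded by $C\sigma^2 N^{-3}\int_0^T\E[\sum_i p_i^2]\,dt$, which is $O(\sigma^2/N^2)$ by the energy estimate of Proposition~\ref{prop:energy}; the bulk martingale term has variance $O(\sigma/N^2)$ by Ito isometry and $|\nabla\psi_i|\le \|\partial_x\psi\|_\infty/N$; and the two right-boundary terms involve $p_N$ rather than $V_N'$, which is actually easier than in Lemma~\ref{lem:rtilde} since $\E[p_N^2 f_t^N]$ is controlled directly through the momentum Dirichlet form $\mathcal D_N(t)$ by the same telescoping estimate $(\partial f_t^N/\partial p_1)^2 \le N(\cdots)$ used there, giving $\int_0^T\E[p_N^2]\,dt = O(N/\sigma)$ and hence an $O(\sigma/N)$ contribution; the left-boundary martingale term with $\psi_1 dw_0$ has variance $O(\sigma/N)$ trivially. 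The only mild subtlety — and the place I expect to spend the most care — is the bookkeeping of the four boundary contributions (from $dp_1$, from $dp_N$, and from the two ends of the summation by parts) to check that they combine precisely into the single term $\psi(t,1)\taubar(t)$ with no leftover; this is the microscopic reflection of the Neumann-type boundary condition $p_{N+1}=p_N$ built into the dynamics, and getting the signs right is the one genuinely model-specific step.
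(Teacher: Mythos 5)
Your proposal follows essentially the same route as the paper: Ito formula on $\psi_i(t)p_i(t)$, substitution of the SDE for $dp_i$, one summation by parts to move the gradient from $V'$ onto $\psi$, and then $\E[|R_N|^2]\to 0$ with exactly the paper's three ingredients (energy estimate for the bulk viscous term, Ito isometry for the martingales, and the Dirichlet-form/telescoping bound for the boundary terms, here with $p_N$ and $\mathcal D_N$ in place of $V'_N$ and $\tilde{\mathcal D}_N$); your final $R_N$ coincides with the paper's. One caveat on the bookkeeping you yourself flag: the summation-by-parts identity you display is mis-indexed (e.g.\ for $N=3$ the two sides differ), and the exact Abel summation pairs $V'_{i+1}$ with $\nabla\psi_i$ (equivalently produces $\nabla^*\psi_i$), so the identity before $R_N$ acquires extra terms of the form $\sum_i V'_i\Delta\psi_i$, $V'_N\nabla\psi_{N-1}$, $V'_1\nabla\psi_1$ and $\psi_1 V'_1$ without a factor $\sigma$; these are harmless, being controlled by precisely the estimates you already list ($\Delta\psi_i=O(N^{-2})$ with the energy bound, and $\int_0^T\E[(V'_i)^2]dt\le CN/\sigma$ at the boundary via the Dirichlet form), but they must be kept, since the cancellation is not as clean as stated.
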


%section{Convergence of the empirical process} \label{sec:lax}
\subsection{Mesoscopic solutions}

For any sequence $(a_i)_{i \in \mathbb N}$ and any $l \in \mathbb N$,
smoother block averages are defined as
\begin{align}
\hat a_{l,i} := \frac{1}{l} \sum_{|j| < l} \frac{l-|j|}{l} a_{i-j}, \quad i \ge l, 
\end{align}
 where we will choose 
\begin{align}
l = l(N) := \left[ N^\frac{1}{4} \sigma(N)^\frac{1}{2} \right].
\end{align}
\begin{oss}
Since $\sigma/N$ and $N/\sigma^2$ vanish as $N \to \infty$, we may choose $\sigma = N^{\frac 1 2+\alpha}$, with $\alpha \in(0,1/2)$. This means $l$ is of order $N^{\frac 1 2+\frac \alpha 2}$.
\end{oss}
Let $1_{N,i}(x)$ be the indicator function of the interval
$\left[ \dfrac{i}{N}-\dfrac{1}{2N}, \dfrac{i}{N} + \dfrac{1}{2N} \right]$,
then we define functions on $[0,1]$ from the sequence $(a_i)_{i \in \mathbb N}$ as
\begin{align}
\hat a_N(x) := \sum_{i=l+1}^{N-l} 1_{N,i}(x) \hat a_{l,i}, \qquad x \in [0,1].
\end{align}
Notice that for any function $f$ on $\mathbb R$ and $\varphi(x) \in L^1([0,1])$ we have
\begin{align}
  \label{eq:4}
  \int_0^1 f(\hat a_N(x)) \varphi(x) dx  =& \frac 1N \sum_{i=l+1}^{N-l}  f(\hat a_{l,i}) \bar\varphi_i+
  \\ &+f(0) \left(\int_0^{l/N+ 1/(2N)} \varphi(x) dx + \int_{1- l/N -1/(2N)}^1 \varphi(x) dx\right) \nonumber %\\
  %= \frac 1N \sum_{i=l+1}^{N-l}  F(\hat a_{l,i}) \bar\varphi_i + F(0) O\left(\frac{l}{N}\right).
\end{align}
where
\begin{equation}\label{eq:barphi}
  {\bar \varphi_i = N \int_0^1 \varphi(x)1_{N,i}(x) dx = N \int_{i/N-1/(2N)}^{i/N+1/(2N)}\varphi(x)dx},
  \qquad i=1,\dots,N-1.
\end{equation}
We have, by Cauchy-Schwarz inequality,
\begin{align}
  \label{eq:5}
  \int_0^1 \hat a_N(x)^2 dx &
% = \frac 1 N \sum_{i=l+1}^{N-l}\hat a_{l,i}^2
%  \\
%&  \le \frac 1N \sum_{i=l+1}^{N-l} \widehat{(a^2)}_{l,i} \nonumber
  %\\
  %& = \frac{1}{N} \sum_{i=l+1}^{N-l} \frac 1 l \sum_{|j|<l} \frac{l-|j|}{l} a^2_{i-j}  \nonumber
  %\\
%&  = \frac 1 N \frac 1 l \sum_{|j|<l} \frac{l-|j|}{l} \sum_{k=l+1-j}^{N-l-j} a^2_k \nonumber
%\\
%& \le \frac 1 N \frac 1 l \sum_{|j|<l} \frac{l-|j|}{l} \sum_{k=2}^{N-1} a^2_k \nonumber
%\\&
 \le \frac 1 N \sum_{k=2}^{N-1} a^2_k \nonumber
\end{align}
Recalling that ${\bf u}_i =(r_i, p_i)$, we define 
\begin{align}
\hat {\bf u}_N(t,x) = \sum_{i=l+1}^{N-l} 1_{N,i}(x) \hat {\bf u}_{l,i}(t), \qquad (t,x) \in Q_T.
\end{align}

As a consequence of \eqref{eq:5} and the energy estimate given by Proposition \ref{prop:energy}, we have that
\begin{equation}
  \label{eq:l2bound}
  \mathbb E\left( \int_{Q_T} |\hat {\bf u}_N(t,x)|^2\right) \le C_T,
\end{equation}
i.e. almost surely $\hat {\bf u}_N(t,x)$ is uniformly bounded in $L^2(Q_T)$ and is therefore weakly convergent, up to a subsequence.
The following proposition ensures us that  ${\pmb \zeta}_N(dx,dt)$ has the same weak limit points as
$\hat {\bf u}_N(t,x)$.
\begin{prop} \label{prop:replace}
  For any function $\varphi \in  C^1([0,1])$ we have that
  \begin{equation}
    \label{eq:6}
    \lim_{N\to\infty} \mathbb E \left(\left| \int_0^1 \hat {\bf u}_N(t,x) \varphi(x)dx -
      \frac 1N \sum_{i=1}^N {\bf u}_i(t) \varphi\left(\frac i N \right)\right| \right) = 0.
  \end{equation}
\end{prop}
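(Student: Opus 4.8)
The strategy is to peel the difference inside the expectation into a handful of error terms, each estimated by Cauchy--Schwarz against the uniform second-moment bound $\E\big(\tfrac1N\sum_{i=1}^N|{\bf u}_i(t)|^2\big) \le C(t)$ — which follows from \thref{prop:energy} together with the convexity of $V$ (giving $V(r)\ge c_1 r^2/2 - C$) — and against the smoothing inequality \eqref{eq:5}, which yields $\tfrac1N\sum_{i=l+1}^{N-l}|\hat{\bf u}_{l,i}(t)|^2 \le \tfrac1N\sum_k|{\bf u}_k(t)|^2$. First, since here the relevant ``function of the block average'' is the identity and hence vanishes at $0$, the boundary term in \eqref{eq:4} disappears and
\[
\int_0^1 \hat{\bf u}_N(t,x)\,\varphi(x)\,dx \;=\; \frac1N\sum_{i=l+1}^{N-l}\hat{\bf u}_{l,i}(t)\,\bar\varphi_i ,
\]
with $\bar\varphi_i$ as in \eqref{eq:barphi}. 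Since $\varphi\in C^1$ one has $|\bar\varphi_i - \varphi(i/N)|\le \tfrac12\|\varphi'\|_\infty N^{-1}$, so replacing $\bar\varphi_i$ by $\varphi(i/N)$ introduces an error bounded in absolute value by $\tfrac{C}{N}\cdot\tfrac1N\sum_i|\hat{\bf u}_{l,i}(t)|$, which by Cauchy--Schwarz and the two bounds above is $O(1/N)$ in expectation.

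Next I would reindex the resulting sum by Fubini: writing $\hat{\bf u}_{l,i} = \tfrac1l\sum_{|j|<l}\tfrac{l-|j|}{l}{\bf u}_{i-j}$ and summing over $i$ first, the coefficient of ${\bf u}_k$ equals $\tfrac1N\hat\varphi_{l,k}$, with $\hat\varphi_{l,k} := \tfrac1l\sum_{|j|<l}\tfrac{l-|j|}{l}\varphi((k+j)/N)$, for all $k$ in the bulk $2l<k\le N-2l$, whereas for the $O(l)$ indices $k$ near either end of the summation range the coefficient is a different bounded number (the constraint $l+1\le i\le N-l$ becomes active). The contribution of these boundary indices is $\tfrac1N$ times a bounded quantity summed over $O(l)$ sites, hence $O(\sqrt{l/N})$ in expectation by Cauchy--Schwarz and \thref{prop:energy}. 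In the bulk, $|\hat\varphi_{l,k}-\varphi(k/N)|\le \|\varphi'\|_\infty\, l/N$ because $\hat\varphi_{l,k}$ is a convex combination of the values $\varphi((k+j)/N)$ with $|j|<l$, so replacing $\hat\varphi_{l,k}$ by $\varphi(k/N)$ costs at most $\tfrac{Cl}{N}\cdot\tfrac1N\sum_k|{\bf u}_k(t)|$, again $O(l/N)$ in expectation. Finally one adds back the tails $\tfrac1N\sum_{i\le l}{\bf u}_i(t)\varphi(i/N)$ and $\tfrac1N\sum_{i>N-l}{\bf u}_i(t)\varphi(i/N)$ to complete the Riemann sum $\tfrac1N\sum_{i=1}^N{\bf u}_i(t)\varphi(i/N)$; each such tail is $O(\sqrt{l/N})$ in expectation by the same estimate.

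Collecting the errors, the left-hand side of \eqref{eq:6} is $O(\sqrt{l/N})$, and this vanishes since $l/N = N^{1/4}\sigma^{1/2}/N \to 0$, a consequence of $\sigma/N\to 0$ in \eqref{eq:siglim}. I do not expect a genuinely hard step here: everything reduces to Cauchy--Schwarz plus the energy estimate. The only point needing care is the bookkeeping around the boundary of the smoothing window $\{l+1,\dots,N-l\}$, and the verification that every error term scales as a positive power of $l/N$, so that no control on the ratios $N/\sigma$ or $\sigma/N$ beyond \eqref{eq:siglim} is needed.
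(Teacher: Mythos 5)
Your proposal is correct and follows essentially the same route as the paper's proof: discard the $O(l)$ boundary/tail indices via Cauchy--Schwarz and the energy estimate (each contributing $O(\sqrt{l/N})$), replace $\bar\varphi_i$ by $\varphi(i/N)$ at cost $O(1/N)$, and control the smoothing error by the Lipschitz bound $\|\varphi'\|_\infty\, l/N$ over the averaging window. Your Fubini step, which transfers the kernel onto $\varphi$ to form $\hat\varphi_{l,k}$, is just the dual bookkeeping of the change of variables in \eqref{eq:replace}, so the estimates and their sources are the same.
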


  \begin{proof}
  By \eqref{eq:4} with $f(\hat {\bf u}_N)=\hat{ \bf u}_N$ we have
  \begin{align}
  \int_0^1 \hat {\bf u}_N(t,x) \varphi(x)dx = \frac 1 N \sum_{i=l+1}^{N-l} \hat {\bf u}_{l,i} \bar \varphi_i.
  \end{align}
  Next, we note that we can neglect the first and the last $l$ points at the boundaries. Namely, we have
  \begin{align}
\left |  \frac 1 N \sum_{i=1}^l {\bf u}_i(t) \varphi \left( \frac i N \right) \right |& \le \left( \frac 1 l \sum_{i=1}^l \varphi \left( \frac i N \right)^2 \right)^{1/2} \left( \frac 1 l \sum_{i=1}^l \frac{l^2}{N^2} |{\bf u}_i(t)|^2 \right)^{1/2}
\\
& \le C \sqrt \frac l N \left( \frac 1 N \sum_{i=1}^N  |{\bf u}_i(t)|^2 \right)^{1/2},
  \end{align}
so that
\begin{align}
\mathbb E \left[ \left |  \frac 1 N \sum_{i=1}^l {\bf u}_i(t) \varphi \left( \frac i N \right) \right | \right] \to 0.
\end{align}
Similarly, we have
\begin{align}
\mathbb E \left[ \left |  \frac 1 N \sum_{i=N-l+1}^N {\bf u}_i(t) \varphi \left( \frac i N \right) \right | \right] \to 0.
\end{align}
Therefore, we evaluate
\begin{align}
\frac 1 N \sum_{i=l+1}^{N-l} \hat {\bf u}_{l,i} \bar \varphi_i - \frac 1N \sum_{i=l+1}^{N-l} {\bf u}_i \varphi\left(\frac i N \right) &=\frac 1 N \sum_{i=l+1}^{N-l} (\hat {\bf u}_{l,i}-{\bf u}_i) \varphi \left( \frac i N \right)+
\\
&+ \frac 1 N \sum_{i=l+1}^{N-l} \hat {\bf u}_{l,i} \left( \bar \varphi_i -\varphi \left( \frac i N \right) \right). \nonumber
\end{align}
The last summation is estimated by noting that there is a point $\xi_i \in \left[ \dfrac i N- \dfrac 1 {2N}, \dfrac i N+ \dfrac 1 {2N}\right]$ such that $\bar \varphi_i = \varphi(\xi_i)$. Thus,
\begin{align}
\left|\bar \varphi_i -\varphi \left( \frac i N \right)\right| \le \|\varphi'\|_\infty \left| \xi_i-\frac i N\right | \le \frac {\|\varphi'\|_\infty} {2N}
\end{align}
and therefore
\begin{align}
\mathbb E \left[\left |  \frac 1 N \sum_{i=l+1}^{N-l} \hat {\bf u}_{l,i} \left( \bar \varphi_i -\varphi \left( \frac i N \right) \right) \right |\right] &\le  \frac C N \left(\mathbb E \left[  \frac 1 N \sum_{i=l+1}^{N-l} |\hat {\bf u}_{l,i}|^2  \right]\right)^{1/2} \label{eq:523}
\\
& \le  \frac C N \left(\mathbb E \left[  \frac 1 N \sum_{i=1}^N  |{\bf u}_i|^2  \right]\right)^{1/2} \le \frac C N. \nonumber
\end{align}
Finally, defining $c_j = \dfrac{l-|j|}{l^2}$ and recalling that $\sum_{|j|<l}c_j=1$, we write perform a change of variables and write
\begin{align} \label{eq:replace}
\frac 1 N \sum_{i=l+1}^{N-l} (\hat {\bf u}_{l,i}-{\bf u}_i) \varphi \left( \frac i N \right) & =\sum_{|j|<l} c_j \left[\frac 1 N \sum_{i=l+1-j}^{N-l-j} {\bf u}_i \varphi \left( \frac{i+j}{N}\right)-\frac 1 N \sum_{i=l+1}^{N-l} {\bf u}_i \varphi \left( \frac i N \right) \right]
\\
 & =\sum_{|j|<l} c_j \frac 1 N \sum_{i=l+1}^{N-l} {\bf u}_i\left[ \varphi \left( \frac{i+j}{N}\right)- \varphi \left( \frac i N \right) \right] + \mathcal O \left( \sqrt \frac l N \right). \nonumber
\end{align}
The conclusion then follows similarly to \eqref{eq:523}.
\end{proof}
  
The proposition allows us to replace each ${\bf u}_i$ with their average $\hat {\bf u}_{l,i}$. In the same way we can replace $V'(r_i)$ by the average
\begin{align}
\hat V'_{l,i} :=\frac 1 l \sum_{|j|<l} \frac{l-|j|} l V'(r_{i-j})
\end{align}
and then replace $\hat V'_{l,i}$ by $\tau(\hat r_{l,i})$ via the following proposition, which we shall prove in Section \ref{sec:estimates}.
\begin{prop}[One-block estimate] \label{prop:oneblock}
\begin{align}
\lim_{N \to \infty} \mathbb E \left[ \frac 1 N \sum_{i=l+1}^{N-l}  \int_0^T \left( \hat V'_{l,i}- \tau(\hat r_{l,i}) \right)^2 dt \right] = 0.
\end{align}
\end{prop}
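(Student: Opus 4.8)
The plan is to follow the one-block argument of Fritz \cite{Fritz1} (a stochastic adaptation of the classical Guo--Papanicolaou--Varadhan scheme), here applied to the gradient-type noise $\tilde S_N^{\bar\tau(t)}$ acting on the volume variables. Write the quantity to be estimated as $\int_0^T \int \Phi_N f_t^N\, d\lambda_t^N\, dt$ with $\Phi_N := \frac1N\sum_{i=l+1}^{N-l}(\hat V'_{l,i}-\tau(\hat r_{l,i}))^2\ge 0$. The decisive inputs are the relative entropy bound $H_N(t)\le C(T)N$ and, above all, the Dirichlet-form bound $\int_0^T \tilde{\mathcal D}_N(t)\,dt\le C(T)/\sigma$, both from \thref{prop:relative}. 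Using the entropy inequality and replacing, by convexity and modulo the standard manipulation of the time-dependent reference measure $\lambda_t^N$, the family $f_t^N$ by a single density $f$ with $\int f\log f\, d\lambda^N \le C N$ and $\tilde{\mathcal D}_N(f)\le C/(\sigma T)$, the problem reduces to proving
\[
  \limsup_{N\to\infty}\ \sup\Big\{\, \int \Phi_N f\, d\lambda^N\ :\ f\ge 0,\ \textstyle\int f\, d\lambda^N=1,\ \tilde{\mathcal D}_N(f)\le C/(\sigma T) \,\Big\}=0 .
\]
A preliminary truncation is harmless: since $c_1\le V''\le c_2$ gives $(\hat V'_{l,i}-\tau(\hat r_{l,i}))^2\le C(1+\hat r_{l,i}^2)$, the contribution of $\{\max_i|\hat r_{l,i}|>A\}$ is bounded uniformly in $N$ by the energy estimate \thref{prop:energy} and vanishes as $A\to\infty$, so one may assume the summand is a bounded local functional.

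Next comes the localization. Tile $\{l+1,\dots,N-l\}$ into roughly $N/k$ disjoint blocks of $k\asymp l$ consecutive sites. Each term $\tilde D_j^*\tilde D_j$ of $\tilde{\mathcal D}_N$ belongs to at most the blocks containing $\{j,j+1\}$, so the block Dirichlet forms are subadditive; combining this with the translation invariance of $\lambda^N$ in the bulk and convexity of $f\mapsto \int \Phi_N f$, the supremum above is bounded by
\[
  \sup_{\bar f}\Big\{\, \langle g_l\rangle_{\bar f}-\kappa_N\, \bar{\mathcal D}_l(\bar f)\,\Big\}\ +\ (\text{boundary-block error}),
\]
where $\bar f$ runs over densities on one block with respect to the $r$-marginal of $\lambda$, $g_l=\big(\tfrac1l\sum_j V'(r_j)-\tau(\tfrac1l\sum_j r_j)\big)^2$, $\bar{\mathcal D}_l$ is the single-block Dirichlet form $\sum_{\text{block}}\int (\tilde D_j \bar f)^2/\bar f$, and $\kappa_N$ is the coefficient produced by the bookkeeping. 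Tracking the scales with $l=[N^{1/4}\sigma^{1/2}]$ and \eqref{eq:siglim}, one checks that $\kappa_N$ can be chosen to diverge: it suffices that $k\kappa_N/(N\sigma)\to 0$, and $N\sigma/k\asymp N^{3/4}\sigma^{1/2}\to\infty$. The boundary blocks are treated as in Proposition \ref{prop:replace}, their total weight being $O(l/N)\to 0$.

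It then remains to show $\lim_{N\to\infty}\sup_{\bar f}\{\langle g_l\rangle_{\bar f}-\kappa_N\bar{\mathcal D}_l(\bar f)\}=0$ whenever $\kappa_N\to\infty$. Conditioning on the hyperplanes $\{\sum_j r_j=c\}$ one decomposes $\bar f=\int \bar f^c\,\nu(dc)$; convexity of the Dirichlet form gives $\int \bar{\mathcal D}_l^c(\bar f^c)\,\nu(dc)\le \bar{\mathcal D}_l(\bar f)\le C/\kappa_N\to0$. Because $V$ is uniformly convex, the canonical block measures $\bar\lambda^c$ are uniformly log-concave, so by Bakry--\'Emery they satisfy a Poincaré inequality with a constant bounded uniformly in $l$ and $c$; hence $\bar f^c\to 1$ in $L^2(\bar\lambda^c)$ on $\nu$-average. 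Finally, under $\bar\lambda^c$ the empirical average $\tfrac1l\sum_j V'(r_j)$ obeys a law of large numbers and concentrates on $\tau$ of the empirical stretch (equivalence of ensembles), while $\hat r_{l,i}$ differs from $\tfrac1l\sum_j r_j$ by a quantity that vanishes in the limit; therefore $\langle g_l\rangle_{\bar\lambda^c}\to 0$ uniformly over the truncated range of $c$, and the claim follows.

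The main obstacle is this last step: obtaining a Poincaré inequality for the single-block generator whose constant does not degenerate as $l\to\infty$ — where the uniform convexity \eqref{eq:Vconvex} of $V$ is essential — together with a quantitative equivalence of ensembles identifying $\langle V'\rangle$ under the canonical block measure with $\tau$ evaluated at the empirical stretch. A lesser but unavoidable nuisance is the careful tracking of the three scales $N,\sigma,l$ in the localization step and the treatment of the smoothed (triangular-kernel) block averages and of the boundary blocks.
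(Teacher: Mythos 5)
There is a genuine gap, and it sits exactly at the step you yourself flag as the ``main obstacle''. The block Poincar\'e inequality you invoke is \emph{not} uniform in $l$: Bakry--\'Emery gives a dimension-free constant only with respect to the full tangential gradient on the hyperplane $\Sigma_l(\ell,\bar p)$, whereas the Dirichlet form supplied by the noise contains only the nearest-neighbour difference directions $\tilde D_j=\partial_{r_{j+1}}-\partial_{r_j}$. Passing from the full gradient to $\sum_j(\tilde D_j\cdot)^2$ costs the inverse spectral gap of the discrete Laplacian on the block, i.e.\ a factor of order $l^2$; this is precisely why the paper's inequality \eqref{eq:LSI} carries the constant $C_{lsi}m^2$. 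With the correct constant your bookkeeping cannot close at the mesoscopic scale $l=[N^{1/4}\sigma^{1/2}]$: the single-block variational step (perturbation of the largest eigenvalue of $g_l+\kappa_N L_{\mathrm{block}}$ with $g_l$ truncated to be bounded) requires $\kappa_N\gtrsim l^2$, while the global step requires $l\,\kappa_N/(N\sigma)\to0$, hence $l^3\ll N\sigma$; but by \eqref{eq:siglim} one has $l^3/(N\sigma)=(\sigma^2/N)^{1/4}\to\infty$, so no admissible $\kappa_N$ exists. Your criterion ``$N\sigma/k\to\infty$ suffices'' is only valid under the (false) $l$-uniform gap. Note also that the purely qualitative GPV route (fixed block size, then let it grow) does not apply here, since the statement is needed at the specific diverging block size $l(N)$.

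The paper closes this scale problem differently: it applies the entropy inequality with the exponent blown up by a factor $l$, i.e.\ to $\alpha l(\bar V'_{l,i}-\tau(\bar r_{l,i}))^2$, which is legitimate because of the uniform exponential-moment bound \eqref{eq:8} under the canonical block measures (a CLT-scale concentration statement, obtained by the Gaussian linearisation with $\xi\sim\mathcal N(0,1)$ and $V''\le c_2$). Combined with the LSI at constant $C_{lsi}l^2$ and the Dirichlet bound of \thref{prop:relative}, this yields the quantitative estimate of \thref{prop:1block}, $C(Nt/l+l^2/\sigma)$, which after division by $N$ vanishes like $1/l+1/\sqrt N$; in other words, the gain of the factor $1/(\alpha l)$ from the exponential moments is what replaces (and beats) your ``$\|g\|_\infty\cdot l^2/\kappa_N$'' payment. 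Two further points you pass over too quickly: the replacement of the smoothed averages $\hat V'_{l,i},\hat r_{l,i}$ by the flat ones $\bar V'_{l,i},\bar r_{l,i}$ is not automatic (they do not differ pointwise by a vanishing quantity) and in the paper is a separate Dirichlet-form/exponential-moment estimate (\thref{prop:averages}); and the truncation step needs uniform integrability beyond the second-moment energy bound. As it stands, your argument does not prove the proposition; repaired along the lines above it essentially becomes the paper's proof.
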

Therefore, combining Propositions \ref{prop:eqapprr}, \ref{prop:eqapprp}, \ref{prop:replace} and \ref{prop:oneblock}, we obtain the following
\begin{prop} \label{prop:approssimata}
Let $\varphi, \psi \in C^2(Q_T)$ such that $\varphi(t,1) = \psi(t,0)=0$ for all $t \in [0,T]$ and let ${\bf u}_N(t,x) = (r_N(t,x), p_N(t,x))$. Then, the following convergences happen in probability as $N \to \infty$
\begin{align}
\int_0^1&\left( \varphi(T,x)\hat r_N(T,x)-\varphi(0,x)\hat r_N(0,x) \right) dx +
\\
&- \int_{Q_T} \left(\hat r_N(t,x)\partial_t \varphi(t,x)  + \hat p_N(t,x) \partial_x \varphi(t,x) \right) dx dt \to 0 \nonumber
\end{align}
and
\begin{align}
\int_0^1&\left( \psi(T,x)\hat p_N(T,x)-\psi(0,x)\hat p_N(0,x) \right) dx +
\\
&- \int_{Q_T} \left(\hat p_N(t,x)\partial_t \psi(t,x)  + \tau(\hat r_N(t,x)) \partial_x \psi(t,x) \right) dx dt + \int_0^T \psi(t,1) \bar \tau(t)dt \to 0 \nonumber
\end{align}
\end{prop}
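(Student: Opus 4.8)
The plan is to read the statement off from Propositions~\ref{prop:eqapprr}, \ref{prop:eqapprp}, \ref{prop:replace} and \ref{prop:oneblock}. The first two already provide, in probability, two discrete identities of exactly the announced shape, so the only work is to replace each discrete object by its continuum counterpart and to verify that every replacement error tends to $0$ in probability. Since $L^1(\Pbb)$-convergence implies convergence in probability, and a finite sum of quantities vanishing in probability still vanishes in probability, I would dispose of the terms one at a time. Throughout, the standing a priori input is the energy estimate \ref{prop:energy}, which (the Gronwall constant in \ref{prop:relative} being bounded on $[0,T]$) yields $\sup_{t\in[0,T]}\E\bigl(\tfrac 1N\sum_i|\uv_i(t)|^2\bigr)\le C_T<\infty$; this uniformity is what lets me push the $N\to\infty$ limit through the time integral.

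For the $r$-equation I would start from \ref{prop:eqapprr}. The time-boundary sums $\tfrac 1N\somma{i}{1}{N}\varphi_i(s)r_i(s)$, $s\in\{0,T\}$, become $\int_0^1\varphi(s,x)\hat r_N(s,x)\,dx$ by applying \ref{prop:replace} at the fixed time $s$ to the $C^1([0,1])$ function $x\mapsto\varphi(s,x)$ (the error in that proof being $(O(1/N)+O(\sqrt{l/N}))(\E\tfrac 1N\sum_i|\uv_i(s)|^2)^{1/2}=o(1)$). For the term $\int_0^T\tfrac 1N\sum_i\dot\varphi_i(t)r_i(t)\,dt$ I would apply \ref{prop:replace}, for each $t$, to $x\mapsto\partial_t\varphi(t,x)$: the constants there depend only on $\|\partial_t\varphi(t,\cdot)\|_{C^1}\le\|\varphi\|_{C^2(Q_T)}$ and on $C_T$, so the convergence is uniform in $t\in[0,T]$ and integrates against $dt$. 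For the gradient term I would Taylor-expand $\nabla\varphi_i(t)=\tfrac 1N\partial_x\varphi(t,\tfrac iN)+O(N^{-2})$, discard the $O(N^{-2})$ part using \ref{prop:energy}, note that $\tfrac 1N\somma{i}{1}{N-1}p_i\,\partial_x\varphi(t,\tfrac iN)$ and $\tfrac 1N\somma{i}{1}{N}p_i\,\partial_x\varphi(t,\tfrac iN)$ differ only by the single site $\tfrac 1N p_N(t)\partial_x\varphi(t,1)$, and kill that site by the entropy inequality: for $\gamma<\beta/2$, $\int p_N^2\, f_t^N\,d\lambda_t^N\le\gamma^{-1}H_N(t)+C_\gamma\le C_T N$ by \ref{prop:relative}, whence $\E\int_0^T\tfrac 1N|p_N(t)|\,dt\to0$. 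A final application of \ref{prop:replace} to $x\mapsto\partial_x\varphi(t,x)$ then identifies $\tfrac 1N\somma{i}{1}{N}p_i\,\partial_x\varphi(t,\tfrac iN)$ with $\int_0^1\hat p_N(t,x)\partial_x\varphi(t,x)\,dx$, and assembling the pieces gives the first convergence.

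For the $p$-equation, the terms in $p_i(0)$, $p_i(T)$ and $\dot\psi_i p_i$ are treated exactly as above (with $\psi$ for $\varphi$), and $\int_0^T\psi(t,1)\bar\tau(t)\,dt$ is already in final form, so the one new ingredient is $\int_0^T\somma{i}{1}{N-1}V'_i\nabla\psi_i\,dt$. I would again Taylor-expand $\nabla\psi_i$, discard the $O(N^{-2})$ part using the uniform bound $\E\tfrac 1N\sum_i(V'_i)^2\le C(1+\E\tfrac 1N\sum_i r_i^2)\le C$ on $[0,T]$ (quadratic growth of $V$ plus \ref{prop:energy}), and remove the stray site $\tfrac 1N V'_N(t)\partial_x\psi(t,1)$ using the Dirichlet-form bound on $\E\int_0^T(V'_N)^2\,dt$ from the proof of \ref{lem:rtilde}. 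Then I would replace $V'_i$ by the smoothed block average $\hat V'_{l,i}$, the argument proving \ref{prop:replace} applying verbatim with $\uv_i$ replaced by $V'_i$ (the only input being the $L^2$ bound just quoted). Next, \ref{prop:oneblock} and Cauchy--Schwarz give $\E\int_0^T\tfrac 1N\sum_i|\hat V'_{l,i}-\tau(\hat r_{l,i})|\,dt\le\sqrt T\bigl(\E\tfrac 1N\sum_i\int_0^T(\hat V'_{l,i}-\tau(\hat r_{l,i}))^2dt\bigr)^{1/2}\to0$, so $\hat V'_{l,i}$ can be replaced by $\tau(\hat r_{l,i})$. Finally, \eqref{eq:4} with $f=\tau$ rewrites the resulting sum ($\tau(\hat r_{l,i})$ weighted by the cell averages of $\partial_x\psi(t,\cdot)$) as $\int_0^1\tau(\hat r_N(t,x))\partial_x\psi(t,x)\,dx$ minus the boundary correction $\tau(0)\bigl(\psi(t,\tfrac lN+\tfrac 1{2N})-\psi(t,0)+\psi(t,1)-\psi(t,1-\tfrac lN-\tfrac 1{2N})\bigr)$, which tends to $0$ uniformly in $t$ since $\psi(t,0)=0$, $l/N\to0$ and $\psi$ is uniformly continuous on $Q_T$. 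Collecting the pieces gives the second convergence.

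The only substantial input is \ref{prop:oneblock}, the one-block estimate, proved separately in Section~\ref{sec:estimates}; granted that, the rest is bookkeeping. I expect the two mildly delicate points to be (i) the uniformity in $t$ of the \ref{prop:replace}-estimates, needed to interchange the limit with the time integral --- this is precisely why the bound $H_N(t)\le C_T N$ on all of $[0,T]$, and not merely at isolated times, is used; and (ii) the control of the single boundary sites $p_N(t)$ and $V'_N(t)$, which is not available from the energy estimate alone and instead rests on the entropy inequality and on the Dirichlet-form bound underlying \ref{lem:rtilde}.
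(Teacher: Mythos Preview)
Your proposal is correct and follows exactly the route the paper indicates: the proposition is stated in the paper as an immediate consequence of combining Propositions~\ref{prop:eqapprr}, \ref{prop:eqapprp}, \ref{prop:replace} and \ref{prop:oneblock}, and your plan supplies precisely the bookkeeping needed to make that combination explicit. The two points you flag as delicate (uniformity in $t$ via the $H_N(t)\le C_TN$ bound, and the single-site controls on $p_N$ and $V'_N$) are indeed the only places requiring care, and your handling of them---as well as the $\tau(0)$ boundary correction coming from \eqref{eq:4}---is sound.
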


\subsection{Random Young Measures and Weak Convergence}
The purpose of this section is to prove that any weakly convergent subsequence will converge strongly.
We use the compensated compactness argument of Fritz \cite{Fritz1}, inspired from the work of
Di Perna, Serre and Shearer, properly adapted to the presence of boundaries \cite{marchesani2018note}.

Denote by $\nuhat^N = \delta_{\uhat_N(t,x)}$ the random Young measure on $\R^2$ associated to the {empirical} %typo
 process $\uhat_N(t,x)$:
\begin{equation}
 \int_{\R^2}f(\y)d\nuhat^N(\y) = f(\uhat_N(t,x))
\end{equation}
for any $f : \R^2 \to \R$.  Since $\uhat_N \in L^2(\Omega \times Q_T)$, we say that $\nuhat^N$ is a $L^2$-random Dirac mass. The following
\begin{equation}\label{eq:enyoung1}
{\E \left[ \int_{Q_T} \int_{\R^2}|\y|^2 d \nuhat^N(\y)dxdt \right]= \mathbb E \left[ \int_{Q_T} |\uhat_N(t,x)|^2dxdt \right]} \le C_T
\end{equation}
with $C_T$ independent of $N$, implies that there exists a subsequence of random Young measures $(\nuhat^{N_n})$ and a subsequence of real random variables $(\normp[\uhat_{N_n}]{2})$ that converges in law.

We can now apply the Skorohod's representation theorem to the laws of $( \nuhat^{N_n
},\normp[\uhat_{N_n}]{2})$ and find a common probability space such that the convergence happens almost surely. This proves the following proposition:
\begin{prop}\label{prop:skoro}
There exists a probability space $(\tilde \Omega, \tilde{\mathcal F}, \Pbbtilde)$,  random Young measures $\nutilde^n,\nutilde$ and real random variables $a_n,a$ such that $\nutilde^n$ has the same law of $\nuhat^{N_n}$, $a_n$ has the same law of $\normp[\uhat_{N_n}]{2}$ and $\nutilde^n \weakstar \nutilde$, $a_n \to a$, $\Pbbtilde$-almost surely.
\end{prop}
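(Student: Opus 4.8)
The plan is to obtain the proposition from Prokhorov's compactness theorem together with Skorohod's representation theorem; the only genuinely delicate issue is that $\R^2$ is not compact, so one has to make sure that in the limit no mass escapes to infinity and $\nutilde$ remains a bona fide (probability-measure-valued) Young measure.

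\textbf{A Polish space of Young measures.} First I would fix, as in the compensated-compactness framework of \cite{Fritz1}, a Polish structure on the set of Young measures. A Young measure is an a.e.-equivalence class of a weakly measurable map $(t,x)\mapsto\nu_{t,x}\in\mathcal P(\R^2)$; one embeds the set of such maps valued in sub-probability measures into the dual of the separable Banach space $L^1(Q_T;C_0(\R^2))$ via $\nu\mapsto\big(g\otimes f\mapsto\int_{Q_T}g(t,x)\int_{\R^2}f(y)\,d\nu_{t,x}(y)\,dx\,dt\big)$. This set is weak-$*$ closed in the unit ball of that dual, hence weak-$*$ compact by Banach--Alaoglu, and, the predual being separable, the weak-$*$ topology is metrizable on it; call the resulting compact metric --- in particular Polish --- space $\mathcal Y$. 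Its topology is precisely the convergence $\weakstar$ of the proposition. The object to be studied is then the random element $(\nuhat^N,\normp[\uhat_N]{2})$ of the Polish space $\mathcal Y\times[0,\infty)$.

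\textbf{Tightness, Prokhorov, Skorohod.} Compactness of $\mathcal Y$ makes the laws of $\nuhat^N$ automatically tight. For the norm coordinate, \eqref{eq:enyoung1} (equivalently \eqref{eq:l2bound}, which rests on the energy estimate \thref{prop:energy}) gives $\E\big[\normp[\uhat_N]{2}^2\big]\le C_T$ uniformly in $N$, so Markov's inequality makes the laws of $\normp[\uhat_N]{2}$ tight on $[0,\infty)$ too. Hence the joint laws on $\mathcal Y\times[0,\infty)$ are tight, and by Prokhorov's theorem there is a subsequence $(N_n)$ along which these laws converge weakly. Since $\mathcal Y\times[0,\infty)$ is Polish, Skorohod's representation theorem then supplies a probability space $(\tilde\Omega,\tilde{\mathcal F},\Pbbtilde)$ carrying random variables $(\nutilde^n,a_n)$ and $(\nutilde,a)$ with $(\nutilde^n,a_n)$ distributed as $(\nuhat^{N_n},\normp[\uhat_{N_n}]{2})$ and $(\nutilde^n,a_n)\to(\nutilde,a)$ $\Pbbtilde$-a.s.; this is exactly the asserted $\nutilde^n\weakstar\nutilde$ and $a_n\to a$.

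\textbf{The limit is a genuine Young measure --- the main obstacle.} The step I expect to require real care, and the only one beyond bookkeeping, is checking that $\Pbbtilde$-a.s.\ $\nutilde_{t,x}$ is a probability measure for a.e.\ $(t,x)$ rather than a sub-probability with mass lost at infinity. The bound \eqref{eq:enyoung1} says the averaged measure $\Lambda^N(dy):=\E\big[\int_{Q_T}\nuhat^N_{t,x}(dy)\,dx\,dt\big]$ on $\R^2$ has total mass $T$ and second moment $\le C_T$, so by Chebyshev $\Lambda^N(\{|y|>R\})\le C_T/R^2$ uniformly in $N$: this is the tightness in $y$ bought by the energy estimate. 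Transferring it through the law-preserving Skorohod coupling and testing against cut-offs $\chi_R\in C_c(\R^2)$ with $\mathbf 1_{B_R}\le\chi_R\le\mathbf 1_{B_{2R}}$ (for which $\weakstar$-convergence applies, as $\chi_R\in C_0(\R^2)$), then letting first $n\to\infty$ and afterwards $R\to\infty$, one obtains $\E_{\Pbbtilde}\big[\int_{Q_T}\nutilde_{t,x}(\R^2)\,dx\,dt\big]\ge T$; since a weak-$*$ limit of probability measures always satisfies $\nutilde_{t,x}(\R^2)\le 1$, this forces $\nutilde_{t,x}(\R^2)=1$ for a.e.\ $(t,x)$, $\Pbbtilde$-a.s. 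Apart from this last point the argument is the standard Prokhorov--Skorohod machinery, and the $L^2$/energy bound is precisely what makes it go through.
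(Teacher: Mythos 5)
Your argument is correct and follows essentially the same route as the paper: the uniform bound \eqref{eq:enyoung1} coming from the energy estimate gives tightness of the joint laws of $(\nuhat^{N},\normp[\uhat_{N}]{2})$ on a Polish (indeed compact metrizable) space of Young measures times $[0,\infty)$, Prokhorov yields convergence in law along a subsequence, and Skorohod's representation theorem produces the almost-sure coupling. Your additional verification that no mass escapes to infinity, so that $\nutilde_{t,x}$ is a.e.\ a probability measure, is a worthwhile detail that the paper leaves implicit (it is absorbed there in the later extension of the test-function class via Proposition 4.2 of \cite{StochasticYoung}).
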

\begin{oss}
Since $\nuhat^{N_n}$ is a random Dirac mass and $\nutilde^n$ and $\nuhat^{N_n}$ have the same law, $\nutilde^n$ is a $L^2$-random Dirac mass, too: $\nutilde^n = \delta_{\utilde_n(t,x)}$ for some $\utilde_n \in L^2(\tilde \Omega \times Q_T)$.  $\utilde_n$ and $\uhat_{N_n}$ have the same law. Since $a_n \to a$ almost surely, we have that $(a_n)$ is bounded and so $\utilde_n$ is  uniformly bounded in $L^2(Q_T)$ with $\Pbbtilde$-probability $1$. 
\end{oss}
Since from a uniformly bounded sequence in $L^p$ we can extract a weakly convergent subsequence, we obtain the following proposition:
\begin{prop}\label{prop:weakconv}
There exist $L^2(Q_T)$-valued random variables $(\utilde_n), \utilde$ such that $\utilde_n$ and $\uhat_{N_n}$ have the same law and, $\Pbbtilde$-almost surely and up to a subsequence, $\utilde_n \weak \utilde$ in $L^2(Q_T)$.
\end{prop}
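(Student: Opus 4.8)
The plan is not to extract a weakly convergent subsequence $\omega$ by $\omega$ (which would raise measurability and common-subsequence issues), but to exhibit the weak $L^2$ limit explicitly as the barycenter of the limiting random Young measure $\nutilde$ furnished by Proposition~\ref{prop:skoro}. This identifies the limit as a genuine $L^2(Q_T)$-valued random variable and shows that the whole (deterministic) subsequence $(\utilde_n)$ already converges, so that ``up to a subsequence'' refers only to the thinning already performed in Proposition~\ref{prop:skoro}.

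First I recall from the Remark after Proposition~\ref{prop:skoro} that $\nutilde^n = \delta_{\utilde_n}$ with $\utilde_n \in L^2(\tilde\Omega\times Q_T)$ having the same law as $\uhat_{N_n}$, and that $M := \sup_n \normp[\utilde_n]{2} < \infty$ $\Pbbtilde$-almost surely. Set $\utilde(t,x) := \int_{\R^2}\y\, d\nutilde_{t,x}(\y)$. Applying weak-$*$ lower semicontinuity to $\y\mapsto|\y|^2$ pathwise along $\nutilde^n\weakstar\nutilde$ (monotone approximation from below by functions in $C_0(\R^2)$, using $\int_{Q_T}\int_{\R^2}|\y|^2 d\nutilde^n_{t,x}(\y)\,dx\,dt = \normp[\utilde_n]{2}^2 \le M^2$) gives
\begin{equation*}
  \int_{Q_T}\!\int_{\R^2}|\y|^2\, d\nutilde_{t,x}(\y)\,dx\,dt \;\le\; M^2 < \infty
\end{equation*}
$\Pbbtilde$-a.s.; in particular no mass escapes to infinity, $\nutilde_{t,x}$ is a probability measure for a.e.\ $(t,x)$, and by Jensen $\utilde \in L^2(Q_T)$ with $\normp[\utilde]{2}\le M$. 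Measurability of $\omega\mapsto\utilde(\omega)\in L^2(Q_T)$ follows by testing against a countable dense family in $L^2(Q_T)$, each pairing being a pointwise limit of the measurable maps $\omega\mapsto\langle\utilde_n(\omega),\phi\rangle_{L^2(Q_T)}$ by the next step.

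For the weak convergence, fix $\phi\in C(Q_T)$ and, for $K>0$, a truncation $\chi_K\in C_0(\R^2;\R^2)$ of the identity with $\chi_K(\y)=\y$ on $\{|\y|\le K\}$ and $|\chi_K(\y)|\le|\y|$ everywhere. Since $\nutilde^n=\delta_{\utilde_n}$,
\begin{equation*}
  \int_{Q_T}\phi\,\utilde_n\,dx\,dt \;=\; \int_{Q_T}\phi(t,x)\int_{\R^2}\y\,d\nutilde^n_{t,x}(\y)\,dx\,dt,
\end{equation*}
and replacing $\y$ by $\chi_K(\y)$ costs at most $\norminf[\phi]\,K^{-1}\int_{Q_T}\int_{\R^2}|\y|^2\,d\nutilde^n_{t,x}(\y)\,dx\,dt \le \norminf[\phi]\,M^2/K$, uniformly in $n$ (and the same bound holds with $\nutilde$). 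For fixed $K$, $\phi\,\chi_K \in L^1(Q_T;C_0(\R^2))$ is an admissible test object for $\nutilde^n\weakstar\nutilde$, so the truncated integral converges $\Pbbtilde$-a.s.; sending first $n\to\infty$ and then $K\to\infty$ yields $\int_{Q_T}\phi\,\utilde_n\,dx\,dt \to \int_{Q_T}\phi\,\utilde\,dx\,dt$ $\Pbbtilde$-a.s. Since $\sup_n\normp[\utilde_n]{2}\le M<\infty$ and $C(Q_T)$ is dense in $L^2(Q_T)$, a three-$\varepsilon$ argument upgrades this to $\utilde_n\weak\utilde$ in $L^2(Q_T)$, $\Pbbtilde$-a.s.; the equality in law of $\utilde_n$ and $\uhat_{N_n}$ is inherited from Proposition~\ref{prop:skoro}.

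The crux — and the only place the $L^2$ energy bound \eqref{eq:l2bound} is essential — is passing the \emph{unbounded} linear observable $\y$ to the weak-$*$ limit of Young measures, which a priori controls only bounded continuous observables and could lose mass at infinity; the uniform $L^2$ bound is precisely the uniform integrability that legitimises this, forces $\nutilde_{t,x}$ to be a probability measure, pins the weak $L^2$ limit down as the barycenter of $\nutilde$, and removes any $\omega$-dependence of the subsequence. The remaining points (measurability of the limit, the density/three-$\varepsilon$ upgrade) are routine.
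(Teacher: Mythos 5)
Your proof is correct, but it takes a genuinely different route from the paper's. The paper's own argument is a one-liner: by the remark after Proposition \ref{prop:skoro}, $(\utilde_n)$ is $\Pbbtilde$-a.s.\ uniformly bounded in $L^2(Q_T)$, and from a bounded sequence in the reflexive space $L^2(Q_T)$ one extracts, pathwise, a weakly convergent subsequence — whence the ``up to a subsequence'' in the statement, with the (unaddressed) caveat that this subsequence may depend on $\omega$ and the measurability of the limit is left implicit. You instead construct the limit explicitly as the barycenter $\utilde(t,x)=\int_{\R^2}\y\,d\nutilde_{t,x}(\y)$, using the uniform $L^2$ bound as tightness/uniform integrability to (i) keep $\nutilde_{t,x}$ a probability measure, (ii) pass the unbounded observable $\y$ through the weak-$*$ convergence by truncation (this is exactly the sub-quadratic extension of \eqref{eq:youngf} the paper cites from \cite{StochasticYoung}, which you re-derive), and (iii) conclude that the \emph{whole} Skorohod subsequence converges weakly a.s., with measurability of $\utilde$ obtained by testing against a countable dense family. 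What your approach buys: no $\omega$-dependent extraction, a well-defined $L^2(Q_T)$-valued random limit, and an explicit identification of the weak limit with the barycenter of $\nutilde$ — a fact the paper only establishes later, inside the proof of Lemma \ref{lem:lpstrong}, where the same ``test ${\bf J}(t,x)\cdot\y$ and use the extension of \eqref{eq:youngf}'' computation appears. What the paper's route buys is brevity, at the cost of the subsequence/measurability subtleties you correctly flag; just make sure your fixed-$\phi$ null sets are unified over a countable dense family before the three-$\varepsilon$ upgrade, which you indicate but do not spell out.
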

The condition $\nutilde^n \weakstar \nutilde$ in Proposition \ref{prop:skoro} reads
\begin{equation} \label{eq:youngf}
\lim_{n\to \infty}  \int_{\R^2} f(\y) d\nutilde^n(\y) dxdt =  \int_{\R^2}f(\y) d\nutilde(\y) dxdt
\end{equation}
for all continuous and \emph{bounded} $f : \mathbb R^2 \to \mathbb R$. By a simple adaptation of Proposition 4.2 of \cite{StochasticYoung}, \eqref{eq:youngf} can be extended to a functiion $f : \mathbb R^2 \to \mathbb R$ such that $f(\y)/|\y|^2 \to 0$ as $|\y| \to +\infty$.

Because of Proposition \ref{prop:approssimata} we are interested in the weak limit of $\tau(\hat r_N(t,x))$. Since $\tau$ is linearly bounded, the main Theorem \ref{thm:main} is proved once we show that $\nutilde = \delta_{\utilde(t,x)}$, almost surely and for almost all $(t,x) \in Q_T$.

We shall now prove  that the support of $\nutilde$ is almost surely and almost everywhere a point. The result will then follow from the lemma:
\begin{lem}\label{lem:lpstrong}
$\nutilde = \delta_{\utilde(t,x)}$ almost surely and for almost all $(t,x) \in Q_T$ if and only if the support of $\nutilde$ is a point for almost all $(t,x) \in Q_T$. In this case, $\utilde_n \to \utilde$ in $L^p(Q_T)^2$-strong for all $1 \le p<2$.
\end{lem}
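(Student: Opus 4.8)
The plan is to prove both directions of the equivalence and then deduce the $L^p$-strong convergence. The ``only if'' direction is trivial: if $\nutilde = \delta_{\utilde(t,x)}$ then its support is the single point $\{\utilde(t,x)\}$ by definition. For the ``if'' direction, suppose that for almost every $(t,x)$ (and $\Pbbtilde$-almost surely) the support of $\nutilde$ is a single point, say $\{v(t,x)\}$ for some measurable $v$. Then necessarily $\nutilde = \delta_{v(t,x)}$, since a probability measure supported on one point is the Dirac mass at that point. It remains only to identify $v$ with $\utilde$, the weak $L^2$ limit of $\utilde_n$ from Proposition \ref{prop:weakconv}. For this I would test the weak-$*$ convergence \eqref{eq:youngf} against the coordinate functions $f(\y) = y_k$, $k=1,2$, which are admissible in the extended class ($|\y|/|\y|^2 \to 0$): this gives $\utilde_n = \int \y\, d\nutilde^n(\y) \weak \int \y\, d\nutilde(\y) = v$ in the appropriate sense, so $v = \utilde$ almost everywhere.

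For the $L^p$-strong convergence with $1 \le p < 2$, the argument is the standard one: the hypothesis $\nutilde = \delta_{\utilde}$ upgrades the weak-$*$ Young measure convergence to strong convergence of $\utilde_n$ in measure, and then uniform $L^2$-integrability (which holds $\Pbbtilde$-a.s. because $\sup_n a_n < \infty$, from Proposition \ref{prop:skoro}) promotes this to strong $L^p$ convergence for $p<2$. Concretely, fix a bounded continuous $g : \R^2 \to \R$; applying \eqref{eq:youngf} to $f = g$ and to $f(\y) = |g(\y)|^2$ (bounded, hence admissible) gives $g(\utilde_n) \to g(\utilde)$ and $|g(\utilde_n)|^2 \to |g(\utilde)|^2$ weakly in $L^1(Q_T)$; since $\|g(\utilde_n)\|_{L^2(Q_T)} \to \|g(\utilde)\|_{L^2(Q_T)}$, the sequence $g(\utilde_n)$ converges to $g(\utilde)$ strongly in $L^2(Q_T)$. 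Taking $g = g_M$ a smooth truncation of the identity at level $M$, we get $\utilde_n \mathbf 1_{|\utilde_n|\le M} \to \utilde\,\mathbf 1_{|\utilde|\le M}$ strongly; the tails $\utilde_n \mathbf 1_{|\utilde_n| > M}$ are controlled uniformly in $n$ in $L^p$ for $p<2$ by the $L^2$-bound $\sup_n \|\utilde_n\|_{L^2(Q_T)}^2 = \sup_n a_n^2 < \infty$ via Hölder/Chebyshev, and letting $M \to \infty$ closes the argument.

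The only genuinely delicate point is the identification step $v = \utilde$: one must make sure the coordinate functions $\y \mapsto y_k$ are legitimately in the extended test class for \eqref{eq:youngf} (they are, since $|y_k|/|\y|^2 \to 0$), and that the two notions of limit — the weak $L^2$ limit $\utilde$ from Proposition \ref{prop:weakconv} and the barycenter $\int \y\, d\nutilde(\y)$ of the limiting Young measure — coincide, which follows by pairing both against arbitrary $C_c^\infty(Q_T)$ test functions. Everything else is soft measure-theoretic bookkeeping, and since this equivalence is essentially a restatement of a well-known fact about Young measures (a Young measure is a Dirac mass iff its support is a point iff the generating sequence converges strongly), I do not expect any real obstacle here; the substantive work of the paper lies entirely in subsequently proving that the support of $\nutilde$ is indeed a point via the compensated compactness / div-curl argument.
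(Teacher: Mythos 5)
Your proof is correct, and its first half coincides with the paper's: the paper also identifies the point ${\bf u}^*(t,x)$ carrying $\nutilde$ with the weak $L^2$ limit $\utilde$ by pairing against ${\bf J}(t,x)\cdot \y$, which is linear in $\y$ and hence admissible in the extended class for \eqref{eq:youngf} — exactly your coordinate-function argument. Where you genuinely diverge is the strong $L^p$ convergence. The paper takes $f(\y)=|\y|^p$ (admissible because $p<2$) in \eqref{eq:youngf}, obtains $\|\utilde_n\|_{L^p(Q_T)}\to\|\utilde\|_{L^p(Q_T)}$, and concludes from weak convergence plus uniform convexity of $L^p$ (Radon--Riesz) for $1<p<2$, handling $p=1$ afterwards by H\"older. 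You instead run a truncation scheme: strong $L^2(Q_T)$ convergence of $g_M(\utilde_n)$ to $g_M(\utilde)$ (weak convergence plus convergence of norms in a Hilbert space), and uniform control of the tails by Chebyshev/H\"older using the $\Pbbtilde$-a.s. bound $\sup_n a_n<\infty$ from Proposition \ref{prop:skoro}. Both routes are sound and rest on the same two ingredients (the extension of \eqref{eq:youngf} to sub-quadratic $f$ and the uniform $L^2$ bound); the paper's is shorter, while yours is more elementary — no appeal to uniform convexity — and covers $p=1$ directly rather than by a separate H\"older step. One cosmetic caveat: state the truncated convergence as $g_M(\utilde_n)\to g_M(\utilde)$ rather than $\utilde_n\mathbf 1_{|\utilde_n|\le M}\to \utilde\,\mathbf 1_{|\utilde|\le M}$, since the sharp cut-off is not continuous and therefore not directly admissible as a test function in \eqref{eq:youngf}; your phrase ``smooth truncation'' already indicates the right fix.
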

\begin{proof}
Suppose there is a measurable function ${\bf u}^*: Q_T \to \R^2$ such that $\nutilde = \delta_{{\bf u}^*(t,x)}$ for almost all $(t,x) \in Q_T$. For any test function ${\bf J} : Q_T \to \R^2$ consider the quantity
\begin{equation}
\int_{Q_T} {\bf J}(t,x) \cdot \utilde_n(t,x)dxdt = \int_{Q_T} \int_{\R^2} {\bf J}(t,x) \cdot \y d\nutilde^n(\y)dxdt.
\end{equation}
By taking the limit for $n \to \infty$ {in the sense} of $L^2$-weak first and in the sense of \eqref{eq:youngf} then, we obtain
\begin{equation}
\int_{Q_T} \int_{\R^2} {\bf J}(t,x) \cdot \utilde(t,x)dxdt =\int_{Q_T} \int_{\R^2} {\bf J}(t,x) \cdot \y d\nutilde(\y)dxdt = \int_{Q_T} \int_{\R^2} {\bf J}(t,x) \cdot {\bf u}^*(t,x)dxdt
\end{equation}
almost surely. Then $\utilde(t,x) = {\bf u}^*(t,x)$ for almost all $(t,x) \in Q_T$ follows from the fact that ${\bf J}$ was arbitrary. 

Next, fix $1<p<2$. Taking $f(\y) = |\y|^p$ in \eqref{eq:youngf} gives $\normp[\utilde_n]{p} \to \normp[\utilde]{p}$, which, together with $\utilde_n \weak \utilde$ in $L^p(Q_T)$ and the fact that $L^p(Q_T)$ is uniformly convex for $1<p<\infty$ implies strong convergence.

The case $p=1$ follows from the result for $p>1$ and {H\"{o}lder}'s inequality.
\end{proof}

\subsection{Reduction of the Limit Young Measure}
In this section we prove that the support of $\nutilde$ is almost surely and almost everywhere a point. 

We recall that Lax entropy-entropy flux pair for the system
\begin{equation} \label{eq:psystem1}
\begin{cases}
\partial_t r(t,x)- \partial_x p(t,x) =0
\\
\partial_t p(t,x) - \partial_x\tau(p(t,x))=0
\end{cases}
\end{equation}
is a pair of functions $\eta, q : \R^2 \to \R$ such that
\begin{equation}
\partial_t \eta(\uv(t,x)) + \partial_x q(\uv(t,x)) = 0
\end{equation} 
for any smooth solution $\uv(t,x) = (r(t,x),p(t,x))$ of \eqref{eq:psystem1}. This is equivalent to the following:
\begin{equation}
\begin{cases}
 \partial_r\eta(r,p)+\partial_pq(r,p)=0
\\
\tau'(r)\partial_p\eta(r,p) +   \partial_rq(r,p) =0
\end{cases}.
\end{equation}
Under appropriate conditions on $\tau$, Shearer (\cite{Shearer1}) constructs a family of entropy-entropy flux pairs $(\eta,q)$ such that $\eta,q$, their first and their second derivatives are bounded (cf also \cite{marchesani2018note}). As shown in \cite{marchesani2018hydrodynamic} and \cite{Fritz1}, our choice of the potential $V$ ensures that the tension $\tau$ has the required properties, so the result of Shearer applies to our case.

In particular,  {following Section 5 of }\cite{Shearer1}, we have that the support $\nutilde$ is almost surely and almost everywhere a point provided {Tartar's} %typo
commutation relation
\begin{equation} \label{eq:tartar}
\expnu[\eta_1q_2-\eta_2q_1] = \expnu[\eta_1]\expnu[q_2] - \expnu[\eta_2]\expnu[q_1]
\end{equation}
holds almost surely and almost everywhere for any bounded pairs $(\eta_1,q_1)$, $(\eta_2,q_2)$ with bounded first and second derivatives.

Obtaining \eqref{eq:tartar} in a deterministic setting is standard and relies on the div-curl and {Murat-Tartar} %typo
 lemma. Both of these lemmas have a stochastic extension ({cf Appendix A of }\cite{Divcurl}) and what we ultimately need to prove in order to obtain \eqref{eq:tartar} is that the hypotheses for the stochastic Murat-Tartar lemma are satisfied (cf \cite{StochasticYoung} {, Proposition 5.6}). 
 
 This is ensured next theorem, for which we will give a preliminary definition. Let $(\eta, q) \in C^2(\mathbb R^2)$ be a Lax entropy-entropy flux pair with bounded derivatives. We assume, without loss of generality, $\eta(0,0) =q(0,0)=0$.

For $\varphi \in H^1_0(Q_T) \cap L^\infty(Q_T)$ define the corresponding entropy production functional as
\begin{align}
X_N(\varphi, \eta) = - \int_{Q_T} \left(  \eta(\hat {\bf u}_N) \partial_t \varphi+q(\hat {\bf u}_N)  \partial_x \varphi \right)dxdt.
\end{align}
\begin{theorem} \thlabel{thm:laxproduction}
 The entropy production $X_N$ decomposes as $X_N = Y_N + Z_N$, such that $(Y_N)$ is compact in $H^{-1}(Q_T)$ and $(Z_N)$ is uniformly bounded as a signed measure. Namely,
\begin{align}
\mathbb E \left[\left| Y_N(\varphi, \eta)\right| \right] &\le a_N \| \varphi \|_{H^1(Q_T)}\quad \text{with} \quad \lim_{N \to \infty} a_N = 0, \label{eq:typeY}
\\ 
\mathbb E \left[ \left| Z_N(\varphi, \eta)\right| \right] &\le b_N \| \varphi \|_{L^\infty(Q_T)} \quad \text{with} \quad  \limsup_{N \to \infty} b_N < \infty, \label{eq:typeZ}
\end{align}
where $a_N, b_N > 0$ are  independent of $\varphi$.
\end{theorem}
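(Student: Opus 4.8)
The strategy is to apply the \Ito formula to $\eta(\uhat_N(t,x))$ and organize the resulting terms according to whether they are ``small in $H^{-1}$'' (type $Y$) or ``bounded as measures'' (type $Z$). First I would go back to the SDE system \eqref{eq:SDE} and compute, for each block-averaged coordinate $\uhat_{l,i}$, the stochastic differential $d\eta(\uhat_{l,i}(t))$. This splits naturally into: (a) a Hamiltonian/transport contribution that, after discrete summation by parts against $\partial_t\varphi$ and $\partial_x\varphi$, reproduces $-\int(\eta\partial_t\varphi + q\partial_x\varphi)$ up to replacement errors (the discrete entropy-flux identity, exploiting that $(\eta,q)$ is a Lax pair for \eqref{eq:psystem1}); (b) a martingale term coming from the Brownian noises, with quadratic variation of order $\sigma/N$ per site; (c) an \Ito second-order correction, proportional to $\sigma N \cdot (\text{gradient})^2$, which is where the Dirichlet-form bounds enter; and (d) the ``block replacement'' errors, i.e. the difference between $\eta$ evaluated at the smoothed fields and the exact fields, controlled via Propositions \ref{prop:replace} and \ref{prop:oneblock} together with the energy estimate.

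Then I would sort these four families into $Y_N$ and $Z_N$. Into $Y_N$ go the martingale term (whose $L^2$ norm is $O(\sqrt{\sigma/N}\,\|\varphi\|_{L^2})\to 0$ by \eqref{eq:siglim}), the replacement errors (which go to $0$ in $L^1$ by the one-block estimate and the mesoscopic replacement, and can be paired against $\varphi$ in $H^1_0$), and the part of the \Ito correction that, using $N\sigma\int(\nabla\uhat_N)^2 \lesssim N\sigma \cdot l^{-1}(\ldots)$ and $\E\int_0^t(\mathcal D_N+\tilde{\mathcal D}_N)\le C$ from Proposition \ref{prop:relative}, carries a factor tending to $0$ — here the specific choice $l = [N^{1/4}\sigma^{1/2}]$ is what makes the gradients of the smoothed fields small enough. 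Into $Z_N$ go the genuinely measure-bounded leftover terms: the boundary contributions at $i=1$ and $i=N$ (the fixed endpoint and the tension endpoint), which are bounded in total variation by the energy estimate and the boundary Dirichlet-form bounds already obtained in the proof of Lemma \thref{lem:rtilde}, and the ``diagonal'' part of the \Ito correction, namely $N\sigma$ times a quantity that is $O(\beta^{-1})$ on average per unit macroscopic time (this is the analogue of the positive entropy-production measure in vanishing viscosity), hence bounded uniformly in $N$ when tested against $\|\varphi\|_{L^\infty}$.

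The main obstacle will be step (a)–(c) bookkeeping at the \emph{mesoscopic} (block-averaged) level rather than the microscopic level: the block average does not commute with the nonlinearities $V'$ and $\tau$, so after applying \Ito to $\eta(\uhat_{l,i})$ one must carefully track the cross terms between the smoothing kernel $c_j = (l-|j|)/l^2$ and the discrete Laplacians, and show that the ``error'' between the smoothed Lax identity and the exact one is of the right type. Concretely, the delicate point is estimating $N\sigma$ times the sum of squared discrete gradients of $\hat V'_{l,i}$ and $\hat p_{l,i}$: a naive bound gives $N\sigma/l^2 \times (\text{energy}) $, and one needs $N\sigma/l^2 \to 0$, i.e. $l \gg (N\sigma)^{1/2}$, which is \emph{not} what the chosen $l$ satisfies — so instead one must use the gain from smoothing, namely that $\sum_i (\nabla \hat a_{l,i})^2 \le l^{-2}\sum_i(\nabla a_i)^2$ is too crude and must be replaced by an estimate of the form $\sum_i(\nabla\hat a_{l,i})^2 \lesssim l^{-1}\sum_i (\text{single-step increments})$, combined with the Dirichlet form. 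Getting the exponents to close — balancing the noise strength $\sigma$, the block size $l$, and the entropy bound $C_T N$ — is the crux, and it is precisely dictated by the constraints \eqref{eq:siglim} and the definition $l=[N^{1/4}\sigma^{1/2}]$.
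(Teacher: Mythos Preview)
Your plan is the paper's: after \Ito and discrete integration by parts one writes $X_N = X_{a,N}+X_{s,N}+\tilde X_{s,N}+M_N+\tilde M_N+Q_N$ (transport, two symmetric-noise drifts, two martingales, and a discrete/continuous commutator for the flux $q$), and each piece is then resolved by a further summation by parts together with the one- and two-block estimates (\thref{cor:1block}, \thref{cor:2block}).

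The one concrete slip is the factor of $N$ in your last paragraph. The generator does carry $N\sigma S_N$, but $\int_0^1\varphi\,\partial_t\eta(\hat u_N)\,dx=\tfrac1N\sum_i\bar\varphi_i\,d\eta(\hat u_{l,i})$, and the $1/N$ from the Riemann sum cancels the $N$ from the generator: the coefficient in front of the squared-gradient term is $\sigma$, not $N\sigma$. After one summation by parts the piece in question is
\[
Z^2_{s,N}\;=\;-\sigma\int_0^T\sum_{i}\bar\varphi_i\,\bigl(\nabla\partial_p\eta(\hat u_{l,i})\bigr)\,\nabla\hat p_{l,i}\,dt,
\]
and your own proposed fix --- the identity $\nabla\hat a_{l,i}=l^{-1}(\bar a_{l,i+l}-\bar a_{l,i})$ combined with the Dirichlet-form two-block estimate (\thref{prop:2block}) --- gives exactly $\mathbb E\bigl[\sigma\sum_i\int_0^T(\nabla\hat u_{l,i})^2\,dt\bigr]\le C$. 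So $|Z^2_{s,N}|\le b_N\|\varphi\|_\infty$ with $b_N$ bounded but \emph{not} vanishing: this term (and its $\tilde X_{s,N}$ twin) is the stochastic analogue of the viscous entropy-production measure and is the only spot in the whole decomposition where $b_N\not\to 0$. Your worry that ``the exponents fail to close'' is thus an artifact of the extra $N$; with the correct count they close to a bounded $Z$-term, not a vanishing one. Two smaller corrections: the boundary indices in the block picture are $i\sim l$ and $i\sim N-l$ (not $1$ and $N$), and the martingale and several boundary pieces split between $Y$ and $Z$ after summation by parts rather than landing wholesale on one side.
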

\begin{defn} \thlabel{def:type}
We say that the random variables $Y_N(\varphi, \eta)$ are of \emph{type Y} provided \eqref{eq:typeY} holds for some $a_N$ independent of $\varphi$. We further say that the random variables $Z_N(\varphi, \eta)$ are of \emph{type Z} provided \eqref{eq:typeZ} holds for some $b_N$ independent of $\varphi$.
\end{defn}
By recalling that $\varphi$ vanishes on $\partial Q_T$, a direct calculation involving Ito formula we can integrate by parts in time and obtain
\begin{equation} \label{eq:entropyDec}
X_N = X_{a,N} + X_{s,N}+\tilde X_{s,N} + \mathcal M_N+\tilde{\mathcal M}_N+ Q_N ,
%+ B_N,
\end{equation}
where
\begin{align} \label{eq:XaN}
  X_{a,N}(\varphi,\eta) =&
      \int_0^T   \sum_{i=l+1}^{N-l} \bar \varphi_i \left[  \partial_p\eta(\uhat_{l,i}) \nabla \hat V'_{l,i}
      - \partial_r\eta(\uhat_{l,i}) \nabla^*\hat p_{l,i} \right] dt +
\\
     &+ \int_0^T  \sum_{i=l+1}^{N-l} \bar \varphi_i \left[ \partial_r q(\uhat_{l,i}) \nabla \hat r_{l,i}
       - \partial_pq(\uhat_{l,i})\nabla^* \hat p_{l,i} \right] dt
       , \nonumber
\end{align}
\begin{align} \label{eq:XsN}
X_{s,N}(\varphi,\eta)= \sigma \int_0^T \sum_{i=l+1}^{N-l} \bar \varphi_i    \partial_p\eta(\uhat_{l,i}) \Delta \hat p_{l,i}  dt+ \frac{2\sigma}{l^3} \int_0^T \sum_{i=l+1}^{N-l}  \bar \varphi_i \partial^2_{pp}\eta(\uhat_{l,i}) dt ,
\end{align}
\begin{align} \label{eq:tilXsN}
\tilde X_{s,N}(\varphi,\eta)= \sigma \int_0^T  \sum_{i=l+1}^{N-l} \bar \varphi_i   \partial_r\eta(\uhat_{l,i}) \Delta \hat V'_{l,i}dt+\frac{2\sigma}{l^3} \int_0^T  \sum_{i=l+1}^{N-l} \bar \varphi_i \partial^2_{rr}\eta(\uhat_{l,i}) dt,
\end{align}
\begin{align} \label{eq:MN}
 M_N(\varphi, \eta) = - \sqrt{2\frac{\sigma}{N}} \int_0^T   \sum_{i=l+1}^{N-l}  \bar \varphi_i    \partial_p\eta(\uhat_{l,i}) d \nabla^* \hat w_{l,i},
\end{align}
\begin{align} \label{eq:tildeMN}
\tilde{ M}_N(\varphi, \eta) = - \sqrt{2\frac{\sigma}{N}} \int_0^T   \sum_{i=l+1}^{N-l}  \bar \varphi_i    \partial_r\eta(\uhat_{l,i})   d \nabla^*\hat{\tilde w}_{l,i} 
\end{align}
\begin{align} \label{eq:QN}
 Q_N(\varphi,\eta) &= -\int_0^T \int_0^1  \partial_x\varphi(t,x) q(\uhat_N(t,x)) dx d t
\\
&- \int_0^T  \sum_{i=l+1}^{N-l} \bar \varphi_i \left[ \partial_r q(\uhat_{l,i}) \nabla \hat r_{l,i} - \partial_pq(\uhat_{l,i})\nabla^* \hat p_{l,i} \right] dt. \nonumber
\end{align}
%\begin{align} \label{eq:BN}
%B_N(\varphi,\eta) =& - \eta(0)\int_0^T \left(\int_0^{\frac l N+ \frac 1 {2N}} \partial_t\varphi(t,x) dx + \int_{1- \frac l N -\frac 1 {2N}}^1\partial_t \varphi(t,x) dx\right) dt
%\\
%& - q(0) \int_0^T\left(\int_0^{\frac l N+ \frac 1 {2N}} \partial_x\varphi(t,x) dx + \int_{1- \frac l N -\frac 1 {2N}}^1 \partial_x\varphi(t,x) dx\right) dt \nonumber
%\end{align}
% We have set
% \begin{equation}\label{eq:barphi}
% {\bar \varphi_i(t) = N \int_0^1 \varphi(t,x)1_{N,i}(x) dx = N \int_{i/N-1/(2N)}^{i/N+1/(2N)}\varphi(t,x)dx}
% \end{equation}
% and
% \begin{equation}
% \hat w_{l,i} =  \frac{1}{l}\sum_{|j|<l}\frac{l-|j|}{l} w_{i-j}, \quad \hat{\tilde w}_{l,i} =  \frac{1}{l}\sum_{|j|<l}\frac{l-|j|}{l} \tilde w_{i-j}.
% \end{equation}
We shall prove \thref{thm:laxproduction} via a series of lemmas. We start with two preliminary ones.
\begin{lem} \thlabel{lem:typeZ}
Let $(A_i)_{i \in \mathbb N}$ and $(B_i)_{i \in \mathbb N}$ be families of $L^2(\mathbb R)$-valued random variables such that
\begin{equation}
\limsup_{N \to \infty}\left( \mathbb E \left[\sum_{i=1}^N \int_0^T A_i(s)^2ds \right]\mathbb E \left[\sum_{i=1}^N \int_0^T B_i(s)^2ds \right]\right) < \infty.
\end{equation}
 Let $\varphi \in {L^\infty(Q_T)}$  and let
 \begin{align} \label{eq:barphi2}
 \bar \varphi_i(t) := N\int_0^1 1_{N,i}(x) \varphi(t,x) dx = N\int_{\frac i N- \frac 1 {2N}}^{\frac i N+\frac 1 {2N}} \varphi(t,x)dx
 \end{align}
 Then 
\begin{equation}
\mathbb E \left[\left |\sum_{i=1}^N \int_0^T \bar \varphi_i A_i B_i dt \right | \right] \le   b_N \| \varphi \|_{L^\infty(Q_T)},
\end{equation}
where $ b_N$ is independent of $\varphi$ such that
\begin{equation}
\limsup_{N \to \infty} b_N  < \infty.
\end{equation}
\end{lem}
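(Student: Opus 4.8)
The plan is to reduce the estimate to two applications of the Cauchy--Schwarz inequality, the first in the space--time variable $(i,t)$ and the second in the expectation, after first pulling out the $L^\infty$ norm of $\varphi$.

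First I would observe that, by \eqref{eq:barphi2}, $\bar\varphi_i(t)$ is the average of $\varphi(t,\cdot)$ over an interval of length $1/N$, so that $|\bar\varphi_i(t)| \le \|\varphi\|_{L^\infty(Q_T)}$ for every $i$ and every $t\in[0,T]$. This gives, pointwise in $\omega$,
\[
\left| \sum_{i=1}^N \int_0^T \bar\varphi_i A_i B_i\, dt \right| \;\le\; \|\varphi\|_{L^\infty(Q_T)} \sum_{i=1}^N \int_0^T |A_i(t)|\,|B_i(t)|\, dt .
\]
Then I would apply Cauchy--Schwarz with respect to the product of counting measure on $\{1,\dots,N\}$ and Lebesgue measure on $[0,T]$ to bound $\sum_i \int_0^T |A_i||B_i|\,dt$ by $\big(\sum_i \int_0^T A_i^2\,dt\big)^{1/2}\big(\sum_i \int_0^T B_i^2\,dt\big)^{1/2}$.

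Next I would take expectations; since all the integrands are nonnegative after the absolute value has been brought inside, Tonelli justifies freely interchanging $\E$, $\sum_i$ and $\int_0^T$, and a further application of Cauchy--Schwarz in $\E$ yields
\[
\E\!\left[ \left| \sum_{i=1}^N \int_0^T \bar\varphi_i A_i B_i\, dt \right| \right] \le \|\varphi\|_{L^\infty(Q_T)} \left( \E\!\left[ \sum_{i=1}^N \int_0^T A_i^2\, dt \right] \right)^{1/2} \left( \E\!\left[ \sum_{i=1}^N \int_0^T B_i^2\, dt \right] \right)^{1/2}.
\]
Finally I would set $b_N := \big( \E[\sum_{i=1}^N \int_0^T A_i^2\, dt] \big)^{1/2}\big( \E[\sum_{i=1}^N \int_0^T B_i^2\, dt] \big)^{1/2}$, which does not depend on $\varphi$, and note that the hypothesis of the lemma is exactly $\limsup_{N\to\infty} b_N^2 < \infty$, hence $\limsup_{N\to\infty} b_N < \infty$, as required.

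I do not expect any genuine obstacle here: the argument is a two-step Cauchy--Schwarz, and the only point requiring a little care is the measurability and integrability bookkeeping needed to carry the absolute value inside the sum and integral and to interchange $\E$ with $\sum_i$ and $\int_0^T$, all of which is routine.
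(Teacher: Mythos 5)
Your proposal is correct and is essentially the paper's own argument: bound $|\bar\varphi_i|$ by $\|\varphi\|_{L^\infty(Q_T)}$ and then apply Cauchy--Schwarz with respect to the measure combining expectation, the sum over $i$, and the time integral (your two-step application of Cauchy--Schwarz is just a factored version of the paper's single application), with $b_N$ defined as the same product of square roots.
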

\begin{proof}
\begin{equation}
|\bar \varphi_i| = \left|N \int_0^1 \varphi(t,x)1_{N,i}(x) dx \right| \le \| \varphi \|_{L^\infty(Q_T)}.
\end{equation}
Then, by the Cauchy-Schwarz inequality, we have
\begin{equation}
\mathbb E \left[\left |\sum_{i=1}^N \int_0^T \bar \varphi_i A_i B_i dt\right | \right] \le \|\varphi\|_\infty{  \left( \mathbb E\left[\sum_{i=1}^N \int_0^T A_i^2dt \right]\right)^{1/2} \left(  \mathbb E \left[\sum_{i=1}^N \int_0^T B_i^2dt \right]\right)^{1/2}}.
\end{equation}
\end{proof}
\begin{lem} \thlabel{lem:typeY}
Let $(A_i)_{i \in \mathbb N}$ be a family of $L^2(\mathbb R)$-valued random variables such that
\begin{equation}
\limN\mathbb E \left[ \frac{1}{N}\sum_{i=1}^N \int_0^T A_i(s)^2ds \right] =0.
\end{equation}
Let $\varphi \in H^1(Q_T)$ and $\bar \varphi_i$ as in \eqref{eq:barphi2}. Then
\begin{equation}
\mathbb E \left[ \sum_{i=1}^{N-1} \int_0^T  \left| A_i \left( \bar \varphi_{i+1}-\bar \varphi_i\right) \right| dt\right] \le a_N \|\varphi \|_{H^1(Q_T)},
\end{equation}
where $a_N$ is  independent of $\varphi$  and
\begin{equation}
\lim_{N \to \infty} a_N = 0.
\end{equation}
\end{lem}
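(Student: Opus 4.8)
The plan is to pass the difference $\bar\varphi_{i+1}-\bar\varphi_i$ through an integral representation that makes visible a factor of $1/N$ and a derivative of $\varphi$, and then apply Cauchy--Schwarz in the $i$-sum together with the hypothesis on $(A_i)$. First I would write, using \eqref{eq:barphi2},
\[
\bar\varphi_{i+1}(t)-\bar\varphi_i(t) = N\int_{\frac{i}{N}-\frac{1}{2N}}^{\frac{i}{N}+\frac{1}{2N}}\bigl(\varphi(t,x+\tfrac1N)-\varphi(t,x)\bigr)dx
= \int_{\frac{i}{N}-\frac{1}{2N}}^{\frac{i}{N}+\frac{1}{2N}}\int_0^1 \partial_x\varphi\bigl(t,x+\tfrac{s}{N}\bigr)\,ds\,dx.
\]
By Jensen's inequality this gives $\bigl(\bar\varphi_{i+1}(t)-\bar\varphi_i(t)\bigr)^2 \le N^{-1}\int_{(i-1)/N}^{(i+1)/N}\bigl(\partial_x\varphi(t,x)\bigr)^2 dx$ (after enlarging the domain of integration to absorb the shift), so that summing over $i$ yields $\sum_{i=1}^{N-1}\int_0^T(\bar\varphi_{i+1}-\bar\varphi_i)^2\,dt \le \tfrac{C}{N}\|\partial_x\varphi\|_{L^2(Q_T)}^2 \le \tfrac{C}{N}\|\varphi\|_{H^1(Q_T)}^2$.

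Next I would apply the Cauchy--Schwarz inequality, first in the pairing of $|A_i|$ against $|\bar\varphi_{i+1}-\bar\varphi_i|$ inside the time integral and sum, and then in the expectation:
\[
\mathbb E\left[\sum_{i=1}^{N-1}\int_0^T |A_i|\,|\bar\varphi_{i+1}-\bar\varphi_i|\,dt\right]
\le \left(\mathbb E\left[\sum_{i=1}^{N-1}\int_0^T A_i^2\,dt\right]\right)^{1/2}\left(\sum_{i=1}^{N-1}\int_0^T(\bar\varphi_{i+1}-\bar\varphi_i)^2\,dt\right)^{1/2}.
\]
Inserting the bound on the second factor, the right-hand side is at most
\[
\left(N\,\mathbb E\left[\tfrac1N\sum_{i=1}^{N-1}\int_0^T A_i^2\,dt\right]\right)^{1/2}\cdot\left(\tfrac{C}{N}\right)^{1/2}\|\varphi\|_{H^1(Q_T)}
= \sqrt{C}\left(\mathbb E\left[\tfrac1N\sum_{i=1}^{N}\int_0^T A_i^2\,dt\right]\right)^{1/2}\|\varphi\|_{H^1(Q_T)},
\]
so we may take $a_N := \sqrt{C}\bigl(\mathbb E[\tfrac1N\sum_i\int_0^T A_i^2\,dt]\bigr)^{1/2}$, which is independent of $\varphi$ and tends to $0$ by hypothesis.

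The only delicate point is the first step: one must be careful that the shifted test function $\partial_x\varphi(t,\cdot+s/N)$ is evaluated at arguments lying in $[0,1]$, or else extend $\varphi$ to a slightly larger interval; since the estimate is used with $\varphi \in H^1_0(Q_T)$ in the application (entropy production functional vanishing on $\partial Q_T$), extending by zero causes no loss and the enlargement of the integration interval from width $1/N$ to $2/N$ in the Jensen step is harmless. Everything else is a routine double application of Cauchy--Schwarz, and no probabilistic input beyond the stated $L^2$ convergence of $(A_i)$ is needed.
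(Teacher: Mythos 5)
Your argument is correct and is essentially the paper's own proof: Cauchy--Schwarz to separate the $A_i$ factor (whose normalized $L^2$ norm gives $a_N\to 0$) from the discrete gradient of the cell averages, together with the bound $\sum_i\int_0^T(\bar\varphi_{i+1}-\bar\varphi_i)^2\,dt\le CN^{-1}\|\partial_x\varphi\|_{L^2(Q_T)}^2$. If anything, your localized Jensen step (keeping $\int(\partial_x\varphi)^2$ over an interval of width $O(1/N)$ around $i/N$ before summing in $i$) is slightly more careful than the paper's displayed estimate, which drops the localization in its Cauchy--Schwarz and recovers \eqref{eq:gradphi} only implicitly; your treatment of the boundary shift is also adequately flagged.
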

\begin{proof}
By Cauchy-Schwarz we have
\begin{align} \label{eq:typeYCS}
\mathbb E \left[\sum_{i=1}^{N-1} \int_0^T  \left|A_i\left( \bar \varphi_{i+1}-\bar \varphi_i\right) \right|dt \right] \le& \left(\frac{1}{N}\sum_{i=1}^{N-1}\int_0^T N^2 \left( \bar \varphi_{i+1}-\bar \varphi_i\right)^2 dt \right)^{1/2} \times
\\
&\times \left( \mathbb E \left[\frac{1}{N}\sum_{i=1}^{N-1}\int_0^T A_i^2 dt \right]\right)^{1/2}. \nonumber
\end{align}
We write
\begin{align}
\bar\varphi_{i+1} -\bar \varphi_i &=N \int_0^1 1_{N,i+1}(x) \varphi(t,x)dx- N\int_0^1 1_{N,i}(x) \varphi(t,x)dx
\\
&= N\int_0^1 1_{N,i}(x) \left( \varphi \left(t, x+\frac{1}{N}\right)-\varphi(t,x)\right)dx \nonumber
\\
&=  {N\int_0^1 1_{N,i}(x) \int_x^{x+\frac{1}{N}}\partial_y \varphi(t,y)dydx} \nonumber
\\
&= N \int_0^1 1_{N,i}(x) \int_0^1 1_{\left[x,x+\frac{1}{N}\right]}(y) \partial_y \varphi(t,y) dy dx \nonumber
\end{align}
 {Thus, Cauchy-Schwarz inequality implies}
\begin{align}
 | \bar \varphi_{i+1}-\bar \varphi_i | & \le N \int_0^1 1_{N,i}(x) \left| \int_0^1 1_{\left[x,x+\frac{1}{N}\right]}(y) \partial_y \varphi(t,y) dy \right | dx
 \\
 & \le  N \int_0^1 1_{N,i}(x)  \left( \int_0^1 1_{\left[x,x+\frac{1}{N}\right]}(y) dy \right)^{1/2} \left( \int_0^1|\partial_y \varphi(t,y)|^2 dy \right)^{1/2} dx \nonumber
 \\
 & = \frac{1}{\sqrt N} \left( \int_0^1|\partial_y \varphi(t,y)|^2 dy \right)^{1/2} \nonumber
\end{align}
 {and so}
\begin{align} \label{eq:gradphi}
 \int_0^T\frac{1}{N} \sum_{i=1}^{N-1}N^2\left( \bar \varphi_{i+1}-\bar \varphi_i\right)^2 dt & \le   \int_0^T \int_0^1| \partial_x\varphi(t,x)|^2 dxdt 
 \\
& = \|\partial_x\varphi\|_{L^2(Q_T)}^2 \le  \|\varphi\|_{H^1(Q_T)}^2. \nonumber
\end{align}
\end{proof}
\begin{oss}
The same result applies if we replace $\nabla \bar \varphi_i$ by $\nabla^* \bar \varphi_i$.
\end{oss}
In the following we shall diffusely use the following formulae, which hold for any two sequences $(a_i)_{i \in \mathbb N}$, $(b_i)_{i \in \mathbb N}$.
\begin{align}
\nabla (a_ib_i) &= b_{i+1} \nabla a_i + a_i \nabla b_i
\\
\nabla^*(a_ib_i) &= b_{i-1}\nabla^* a_i + a_i \nabla^* b_i
\\
\sum_{i=l+1}^{N-l} a_i \nabla b_i &= \sum_{i=l+1}^{N-l} b_i \nabla^* a_i + a_{N-l}b_{N-l+1}-a_lb_{l+1}
\end{align}
\begin{lem}
$X_{a,N}$ be defined in \eqref{eq:XaN}. Then it decomposes as $X_{a,N}= Y_{a,N}+Z_{a,N}$ such that $Y_{a,N}$ is of type Y and $Z_{a,N}$ is of type $Z$ in the sense of \thref{def:type}. Moreover, $b_N \to 0$ as $N \to \infty$.
\end{lem}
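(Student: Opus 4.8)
The plan is to use the Lax entropy relations to collapse $X_{a,N}$ into one sum, then peel off a genuinely negligible ``quadratic'' remainder and treat the leading part by a discrete summation by parts. A Lax pair obeys $\partial_r\eta+\partial_p q=0$ and $\tau'(r)\partial_p\eta+\partial_r q=0$; inserting $\partial_p q=-\partial_r\eta$ and $\partial_r q=-\tau'(\hat r_{l,i})\partial_p\eta$ into \eqref{eq:XaN}, the two terms carrying $\nabla^*\hat p_{l,i}$ cancel and one is left with $X_{a,N}(\varphi,\eta)=\int_0^T\sum_{i=l+1}^{N-l}\bar\varphi_i\,\partial_p\eta(\uhat_{l,i})\,[\nabla\hat V'_{l,i}-\tau'(\hat r_{l,i})\nabla\hat r_{l,i}]\,dt$. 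Writing $g_i:=\hat V'_{l,i}-\tau(\hat r_{l,i})$ and Taylor-expanding $\nabla\tau(\hat r_{l,i})=\tau'(\hat r_{l,i})\nabla\hat r_{l,i}+\tfrac12\tau''(\theta_i)(\nabla\hat r_{l,i})^2$, this becomes $X_{a,N}=I_N+II_N$, with $I_N$ carrying the factor $\nabla g_i$ and $II_N$ the factor $\tfrac12\tau''(\theta_i)(\nabla\hat r_{l,i})^2$. Since $\eta$ has bounded first derivatives and $\tau''\in L^\infty$, $|II_N|\le C\|\varphi\|_{L^\infty(Q_T)}\int_0^T\sum_i(\nabla\hat r_{l,i})^2\,dt$; the standard block-average bound $\sum_i(\nabla\hat a_{l,i})^2\le Cl^{-2}\sum_i a_i^2$ and the energy estimate (Proposition~\ref{prop:energy}) then make the expectation $O(NT/l^2)$, which vanishes because $l\sim N^{1/4}\sigma^{1/2}$ and $N/\sigma^2\to0$ by \eqref{eq:siglim}. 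So $II_N$ is of type Z with $b_N\to0$.

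For $I_N=\int_0^T\sum_{i=l+1}^{N-l}h_i\nabla g_i\,dt$, $h_i:=\bar\varphi_i\partial_p\eta(\uhat_{l,i})$, the discrete integration by parts formula recalled above gives $\int_0^T\sum_i g_i\nabla^*h_i\,dt$ plus the boundary terms $\int_0^T(h_{N-l}g_{N-l+1}-h_l g_{l+1})\,dt$. In the bulk I split $\nabla^*h_i=(\nabla^*\bar\varphi_i)\partial_p\eta(\uhat_{l,i-1})+\bar\varphi_i\nabla^*\partial_p\eta(\uhat_{l,i})$. The first summand gives a term of type Y: apply the $\nabla^*$-version of Lemma~\ref{lem:typeY} with $A_i=g_i\partial_p\eta(\uhat_{l,i-1})$, noting $|A_i|\le C|g_i|$ and $\mathbb E[\tfrac1N\sum_i\int_0^T g_i^2\,dt]\to0$ by the one-block estimate, Proposition~\ref{prop:oneblock}. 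For the second summand, boundedness of the second derivatives of $\eta$ gives $|\nabla^*\partial_p\eta(\uhat_{l,i})|\le C|\nabla^*\uhat_{l,i}|$, so the term is bounded by $C\|\varphi\|_{L^\infty(Q_T)}\int_0^T\sum_i|g_i|\,|\nabla^*\uhat_{l,i}|\,dt$; Cauchy--Schwarz, the one-block estimate for the factor $g_i$, and $\sum_i|\nabla^*\uhat_{l,i}|^2\le Cl^{-2}\sum_i|\uv_i|^2$ with the energy estimate make this of type Z with $b_N\to0$, again since $N/l^2\to0$.

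It remains to handle the boundary terms. Since $\varphi\in H^1_0(Q_T)$ we have $\varphi(t,0)=\varphi(t,1)=0$, hence $|\bar\varphi_l(t)|\le C(l/N)^{1/2}\|\partial_x\varphi(t,\cdot)\|_{L^2(0,1)}$ and similarly for $\bar\varphi_{N-l}$. On the other hand $\int_0^T\mathbb E[g_k^2]\,dt$ is bounded uniformly in $k$ and $N$: indeed $|g_k|\le C(1+|\hat V'_{l,k}|)$, while $\int_0^T\mathbb E[(V'_j)^2]\,dt=O(1)$ uniformly in $j$, as follows exactly as in the proof of Lemma~\ref{lem:rtilde} from the Dirichlet-form bound \eqref{eq:entr00} and $N/\sigma\to0$. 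Together with the boundedness of $\partial_p\eta$ and Cauchy--Schwarz in time, $\mathbb E[\int_0^T|h_l g_{l+1}|\,dt]\le C(l/N)^{1/2}\|\varphi\|_{H^1(Q_T)}$, of type Y with $a_N\to0$; the term at $N-l$ is identical. Collecting the type Y pieces (the $(\nabla^*\bar\varphi_i)$-term and the two boundary terms) into $Y_{a,N}$, and $II_N$ together with the $\bar\varphi_i g_i\nabla^*\partial_p\eta$-term into $Z_{a,N}$, yields the decomposition, with every $b_N$ above tending to $0$.

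The hard part is the boundary terms of the previous paragraph: the summation by parts unavoidably evaluates $g$ at the innermost admissible block index, and turning that contribution into a type Y object forces one to use simultaneously the vanishing of $\varphi$ on $\{x=0\}\cup\{x=1\}$ --- the reason for requiring $\varphi\in H^1_0(Q_T)$ --- and a \emph{site-wise} space-time $L^2$ bound on $r_i$, which is not delivered by the relative-entropy or energy estimates alone but only by combining the Dirichlet-form bound with $N/\sigma\to0$.
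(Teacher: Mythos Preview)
Your overall architecture is exactly that of the paper: collapse $X_{a,N}$ via the Lax relations, peel off the quadratic remainder $II_N$, and integrate $I_N$ by parts. The treatment of $II_N$ and of the $(\nabla^*\bar\varphi_i)$--term is correct. Two of your estimates, however, do not close.

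\emph{The bulk piece $\int_0^T\sum_i\bar\varphi_i\,g_i\,\nabla^*\partial_p\eta(\uhat_{l,i})\,dt$.} You control $\sum_i|\nabla^*\uhat_{l,i}|^2$ only through the block-average inequality and the energy estimate, giving $\E\big[\int_0^T\sum_i|\nabla^*\uhat_{l,i}|^2\,dt\big]\le CN/l^2$. After Cauchy--Schwarz with the one-block bound $\E\big[\int_0^T\sum_i g_i^2\,dt\big]\le Cl^2/\sigma$, the product is $CN/\sigma$, so $b_N$ is of order $(N/\sigma)^{1/2}\to\infty$ (recall $\sigma/N\to0$). The reference to ``$N/l^2\to0$'' is misplaced: after Cauchy--Schwarz the two rates multiply, and $l^2/\sigma\cdot N/l^2=N/\sigma$. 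What is needed here is the Dirichlet-form-based two-block estimate (\thref{cor:2block}), which gives the much sharper bound $\E\big[\int_0^T\sum_i|\nabla\uhat_{l,i}|^2\,dt\big]\le C/\sigma$; then $b_N\le Cl/\sigma\to0$, which is how the paper handles $Z^2_{a,N}$.

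\emph{The boundary pieces.} Your claim that $\int_0^T\E[(V'_j)^2]\,dt=O(1)$ uniformly in $j$ rests on ``$N/\sigma\to0$'', which is false: the hypotheses \eqref{eq:siglim} say $\sigma/N\to0$, hence $N/\sigma\to\infty$, and the computation in \thref{lem:rtilde} yields precisely $\int_0^T\E[(V'_1)^2]\,dt\le CN/\sigma$. The correct input is again a Dirichlet-form-based estimate: the single-site one-block bound in \thref{cor:1block} gives $\int_0^T\E[g_k^2]\,dt\le Cl/\sigma$ directly for each fixed $k$. With this, the boundary term is of type~Z with $b_N\le C(l/\sigma)^{1/2}\to0$ (this is what the paper does, treating $Z^b_{a,N}$ as type~Z rather than type~Y). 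Your type-Y route via $|\bar\varphi_l|\le C(l/N)^{1/2}\|\partial_x\varphi\|_{L^2}$ would also work once you plug in $\int_0^T\E[g_k^2]\,dt\le Cl/\sigma$, giving $a_N\le Cl/(N\sigma)^{1/2}\to0$; but it cannot be completed through a uniform pointwise bound on $\E[(V'_j)^2]$.
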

\begin{proof}
Since $(\eta,q)$ is a Lax entropy-entropy flux pair, we have
\begin{align}
\begin{cases}
\partial_r \eta + \partial_p q =0
\\
\tau'(r)\partial_p \eta +  \partial_r q =0
\end{cases},
\end{align}
so that we can write
\begin{align}
X_{a,N}&= \int_0^T \sum_{i=l+1}^{N-l} \bar \varphi_i \partial_p \eta(\hat u_{l,i})\left[\nabla \hat V'_{l,i}- \tau'(\hat r_{l,i}) \nabla \hat r_{l,i}\right]dt
\\
& = X^1_{a,N}+ Z^1_{a,N},
\end{align}
where
\begin{align}
X^1_{a,N} & = \int_0^T \sum_{i=l+1}^{N-l} \bar \varphi_i \partial_p \eta(\hat u_{l,i})\nabla \left( \hat V'_{l,i}- \tau(\hat r_{l,i}) \right)dt
\\
Z^1_{a,N} & = \int_0^T \sum_{i=l+1}^{N-l} \bar \varphi_i \partial_p \eta(\hat u_{l,i})\left[\nabla \tau(\hat r_{l,i})- \tau'(\hat r_{l,i}) \nabla \hat r_{l,i}\right]dt
\end{align}
The term $Z^1_{a,N}$ is of type Z with $b_N \to 0$. This follows from \thref{lem:typeZ}, the fact that $\partial_p\eta$ and $\tau''$ are bounded \thref{cor:2block}, and the following estimate,
\begin{align}
|\nabla \tau(\hat r_{l,i})-\tau'(\hat r_{l,i}) \nabla \hat r_{l,i} | &\le |[\tau'(\alpha \hat r_{l,i}+(1-\alpha)\hat r_{l,i+1})-\tau'(\hat r_{l,i})] \nabla \hat r_{l,i}| 
\\
&\le \| \tau'' \|_\infty(1-\alpha) (\nabla r_{l,i})^2, \nonumber
\end{align}
which holds for some  $\alpha \in (0,1)$.

After a summation by parts, we write
\begin{align}
X^1_{a,N} = Y_{a,N} + Z^2_{a,N} + Z^b_{a,N},
\end{align}
where
\begin{align}
Y_{a,N} =& \int_0^T \sum_{i=l+1}^{N-l} (\nabla^* \bar \varphi_i) \partial_p \eta(\hat u_{l,i-1}) (\hat V'_{l,i}-\tau(\hat r_{l,i}))dt
\\
Z^2_{a,N} = &\int_0^T \sum_{i=l+1}^{N-l}  \bar \varphi_i (\nabla^* \partial_p \eta(\hat u_{l,i})) (\hat V'_{l,i}-\tau(\hat r_{l,i}))dt
\\
Z^b_{a,N}  =& \int_0^T \bar \varphi_{N-l} \partial_p \eta(\hat u_{l,N-l}) ( \hat V'_{l,N-l+1}-\tau(\hat r_{l,N-l+1})dt +
\\
&- \int_0^T\bar \varphi_l \partial_p \eta(\hat u_{l,l})(\hat V'_{l,l+1}-\tau(\hat r_{l,l+1}))dt. \nonumber
\end{align}
$Y_{a,N}$ is of type Y. This follows from \thref{lem:typeY}, the fact that $\partial_p \eta$ is bounded and \thref{cor:1block}. $Z^2_{a,N}$ is of type Z with $b_N \to 0$. This follows from \thref{lem:typeZ}, \thref{cor:1block} and \thref{cor:2block} after writing, for some intermediate $\tilde u_{l,i}$,
\begin{align}
\nabla^* \partial_p \eta(\hat u_{l,i}) =  \partial^2_{rp} \eta(\tilde u_{l,i}) \nabla \hat r_{l,i}+\partial^2_{pp}\eta(\tilde u_{l,i}) \nabla \hat p_{l,i}
\end{align}
and using the fact that the second derivatives of $\eta$ are bounded. Finally, $Z^b_{a,N}$ is of type Z with $b_N \to 0$.This follows from
\begin{align}
|Z^b_{a,N}| \le \|\varphi \|_{L^\infty(Q_T)} \| \partial_p \eta \|_\infty \int_0^T\left( | \hat V'_{l,N-l+1}-\tau(\hat r_{l,N-l+1})|+|\hat V'_{l,l+1}-\tau(\hat r_{l,l+1})|\right)dt
\end{align}
and \thref{cor:1block}.
\end{proof}
\begin{lem} \thlabel{lem:XsN}
Let $X_{s,N}$ be defined in \eqref{eq:XsN}. Then it decomposes as $X_{s,N}= Y_{s,N}+Z_{s,N}$ such that $Y_{s,N}$ is of type Y and $Z_{s,N}$ is of type Z in the sense of \thref{def:type}.
\end{lem}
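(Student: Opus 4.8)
The plan is to split $X_{s,N}$ in \eqref{eq:XsN} into its two summands and treat them separately. The second summand, $\frac{2\sigma}{l^3}\int_0^T\sum_{i=l+1}^{N-l}\bar\varphi_i\,\partial^2_{pp}\eta(\uhat_{l,i})\,dt$, is immediately of type Z: since $\partial^2_{pp}\eta$ is bounded it is dominated by $\frac{2\sigma N T}{l^3}\,\|\partial^2_{pp}\eta\|_\infty\,\|\varphi\|_{L^\infty(Q_T)}$, and the prefactor $\sigma N/l^3$ is of order $N^{1/4}/\sigma^{1/2}$ because $l=[N^{1/4}\sigma^{1/2}]$; this tends to $0$ since $N/\sigma^2\to0$. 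So this piece goes into $Z_{s,N}$ (in fact with $b_N\to0$).

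For the first summand, $\sigma\int_0^T\sum_{i=l+1}^{N-l}\bar\varphi_i\,\partial_p\eta(\uhat_{l,i})\,\Delta\hat p_{l,i}\,dt$, I would write $\Delta\hat p_{l,i}=-\nabla^*\nabla\hat p_{l,i}$ and perform a discrete summation by parts via the identities recalled just before \thref{lem:XsN}, with $a_i=\bar\varphi_i\partial_p\eta(\uhat_{l,i})$ and $b_i=\nabla\hat p_{l,i}$. This produces a bulk term $\sigma\int_0^T\sum_i\nabla\!\big(\bar\varphi_i\,\partial_p\eta(\uhat_{l,i})\big)\nabla\hat p_{l,i}\,dt$ plus boundary contributions at $i=l$ and $i=N-l$. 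In the bulk term I would expand $\nabla(\bar\varphi_i\partial_p\eta(\uhat_{l,i}))=(\nabla\bar\varphi_i)\,\partial_p\eta(\uhat_{l,i+1})+\bar\varphi_i\,\nabla(\partial_p\eta(\uhat_{l,i}))$. The first piece is of type Y by \thref{lem:typeY} with $A_i=\sigma\,\partial_p\eta(\uhat_{l,i+1})\nabla\hat p_{l,i}$: boundedness of $\partial_p\eta$ together with a gradient estimate of the form $\mathbb E[\sigma\int_0^T\sum_i(\nabla\hat p_{l,i})^2\,dt]\le C$ (provided by \thref{cor:2block}) gives $\mathbb E[\tfrac1N\sum_i\int_0^T A_i^2\,dt]\le \tfrac{C\sigma}{N}\to0$ by \eqref{eq:siglim}. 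For the second piece I would Taylor expand at an intermediate point, $\nabla(\partial_p\eta(\uhat_{l,i}))=\partial^2_{rp}\eta(\tilde u_{l,i})\nabla\hat r_{l,i}+\partial^2_{pp}\eta(\tilde u_{l,i})\nabla\hat p_{l,i}$, so that, using boundedness of the second derivatives of $\eta$, one is left with terms of the form $\sigma\int_0^T\sum_i\bar\varphi_i\,(\text{bounded})\,\nabla\hat a_{l,i}\,\nabla\hat p_{l,i}\,dt$ with $a\in\{r,p\}$; these are of type Z by \thref{lem:typeZ} with $A_i=\sqrt\sigma\,(\text{bounded})\,\nabla\hat a_{l,i}$ and $B_i=\sqrt\sigma\,\nabla\hat p_{l,i}$, the required product of second moments being finite by the gradient estimates of \thref{cor:2block} (for $\hat r$ one passes through the one-block estimate, Proposition \ref{prop:oneblock}, and \thref{cor:1block}).

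Finally, the boundary contributions — $-\sigma\int_0^T\bar\varphi_{l}\,\partial_p\eta(\uhat_{l,l})\,\nabla\hat p_{l,l}\,dt$ and its analogue at $i=N-l$ — must be absorbed. Here I would use that $\varphi\in H^1_0(Q_T)$ and $l/N\to0$, so that $|\bar\varphi_l(t)|\le C\sqrt{l/N}\,\|\partial_x\varphi(t,\cdot)\|_{L^2(0,1)}$ (and similarly for $\bar\varphi_{N-l}$) by Cauchy--Schwarz; combined with boundedness of $\partial_p\eta$ and a boundary version of the gradient estimate of \thref{cor:2block} for $\nabla\hat p_{l,l}$ and $\nabla\hat p_{l,N-l}$, these terms are seen to be of type Y (or, if one prefers, of type Z). Collecting the pieces then yields the claimed decomposition $X_{s,N}=Y_{s,N}+Z_{s,N}$.

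The hard part will be the boundary contributions. Unlike the terms $Z^b_{a,N}$ in the previous lemma — which involve the differences $\hat V'_{l,i}-\tau(\hat r_{l,i})$ that the one-block estimate makes small — one here faces a genuine gradient of a block average evaluated at the boundary-adjacent index, so the bound must come from the Dirichlet-form estimate of \thref{prop:relative} rather than from the crude energy/entropy bounds of \thref{prop:energy}. Producing a gradient estimate that is sharp enough (with no spurious power of $N$), both in the bulk and at the boundary, is the technical heart of the argument, and is exactly what \thref{cor:2block} (and \thref{cor:1block}) are meant to deliver.
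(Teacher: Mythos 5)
Your proposal follows essentially the same route as the paper: split off the $2\sigma l^{-3}$ term as type Z, sum the Laplacian term by parts into a $(\nabla\bar\varphi_i)\,\partial_p\eta(\uhat_{l,i+1})\nabla\hat p_{l,i}$ piece (type Y via \thref{lem:typeY} and \thref{cor:2block}), a $\bar\varphi_i\,\nabla\partial_p\eta(\uhat_{l,i})\,\nabla\hat p_{l,i}$ piece (type Z via \thref{lem:typeZ}, with $b_N$ merely bounded), and boundary terms controlled exactly as you suggest by $\varphi(t,0)=\varphi(t,1)=0$, which gives $|\bar\varphi_l|\le C\sqrt{l/N}\,\|\partial_x\varphi(t,\cdot)\|_{L^2}$ and, combined with the single-site gradient estimate of \thref{cor:2block}, a type-Y bound of order $(\sigma/N)^{1/2}$. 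The only caveat is your parenthetical ``or of type Z'' for the boundary terms: an $L^\infty$ bound there would only give $b_N\sim \sigma\,(l\sigma)^{-1/2}=(\sigma^2/N)^{1/4}\to\infty$, so the type-Y treatment through the vanishing of $\varphi$ at the spatial boundary (your main route, and the paper's) is the one that works.
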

\begin{proof}
Write
\begin{align}
X_{s,N} = X^1_{s,N}+Z^1_{s,N}
\end{align}
where
\begin{align}
X^1_{s,N} &=\sigma \int_0^T \sum_{i=l+1}^{N-l} \bar \varphi_i    \partial_p\eta(\uhat_{l,i}) \Delta \hat p_{l,i}  dt
\\
Z^1_{s,N}&= \frac{2\sigma}{l^3} \int_0^T \sum_{i=l+1}^{N-l}  \bar \varphi_i \partial^2_{pp}\eta(\uhat_{l,i}) dt
\end{align}
Since $\partial^2_{pp}\eta$ is bounded, we have that $Z^1_{s,N}$ is of type Z with $b_N \to 0$. In order to evaluate $X^1_{s,N}$ we sum by parts and write
\begin{align}
X^1_{s,N} = Y^1_{s,N} + Z^2_{s,N}+Y^b_{s,N},
\end{align}
where
\begin{align}
Y^1_{s,N}& = - \sigma \int_0^T \sum_{i=l+1}^{N-l} (\nabla \bar \varphi_i)    \partial_p\eta(\uhat_{l,i+1}) \nabla \hat p_{l,i}  dt
\\
Z^2_{s,N}& = - \sigma \int_0^T \sum_{i=l+1}^{N-l} \bar \varphi_i (\nabla  \partial_p\eta(\uhat_{l,i})) \nabla \hat p_{l,i}  dt
\\
Y^b_{s,N} &=  \sigma \int_0^T\bar \varphi_{N-l} \partial_p \eta(\hat u_{l,N-l}) \nabla \hat p_{l,N-l+1}dt-\sigma \int_0^T\bar \varphi_l \partial_p\eta(\hat u_{l,i})\nabla \hat p_{l,l+1}dt.
\end{align}
Thanks to \thref{lem:typeY}, \thref{lem:typeZ} and \thref{cor:2block}, $Y^1_{s,N}$ is of type Y, and $Z^2_{s,N}$ is of type Z. However, we obtain $|Z^2_{s,N}| \le b_N \|\varphi\|_\infty$ with $b_N$ only bounded, since
\begin{align}
\mathbb E \left[  \sum_{i=l+1}^{N-l} \int_0^T (\sigma(\nabla \hat p_{l,i})^2+\sigma(\nabla \hat r_{l,i})^2) dt\right]
\end{align} 
is bounded but does not necessarily vanish as $N \to \infty$. $Y^b_{s,N}$ is of type Y. We focus only on the boundary term in $l$, as the boundary term in $N-l$ is analogous. Since $\varphi(t,0)=0$ we write
\begin{align}
\bar \varphi_l = N \int_0^1 1_{N,l}(x) \varphi(x)dx &= N \int_0^1 1_{N,l}(x) \int_0^x \partial_y \varphi(t,y)dy 
\\
&=N \int_0^1 1_{N,l}(x) \int_0^1 1_{[0,x]}(y) \partial_y \varphi(t,y)dy. \nonumber
\end{align}
By Cauchy-Schwarz, we estimate
\begin{align}
\left | \int_0^1 1_{[0,x]}(y) \partial_y \varphi(t,y)dy \right | \le \sqrt x \left( \int_0^1 | \partial_y \varphi(t,y)|^2 dy \right)^{1/2},
\end{align}
so that
\begin{align}
|\bar  \varphi_l| &\le  \left( \int_0^1 | \partial_y \varphi(t,y)|^2 dy \right)^{1/2} N \int_{\frac{l}{N}-\frac{1}{2N}}^{\frac{l}{N}+\frac{1}{2N}}\sqrt x dx 
\\
&=  \left( \int_0^1 | \partial_y \varphi(t,y)|^2 dy \right)^{1/2} \sqrt \frac{l}{N}  \left( 1+ \mathcal O \left( \frac{l}{N}\right) \right). \nonumber
\end{align}
Thus, we obtain
\begin{align}
\mathbb E \left[\sigma \int_0^T|\bar \varphi_l \partial_p\eta(\hat u_{l,i})\nabla \hat p_{l,l+1}|dt \right]& \le  \|\partial_p \eta\|_\infty \left( \int_0^T |\bar \varphi_l|^2 dt \right)^{1/2} \left( \mathbb E \left[\int_0^T \sigma^2 (\nabla \hat p_{l,i})^2 dt\right] \right)^{1/2} \nonumber
\\
& \le C \left( \int_0^T \int_0^1 |\partial_y \varphi(t,y)|^2 dy \right)^{1/2} \left( \frac{l}{N} \frac{\sigma^2}{l\sigma}\right)^{1/2} 
\\
& \le C \|\varphi \|_{H^1(Q_T)} \left( \frac{\sigma}{N}\right)^{1/2} \nonumber
\end{align}
\end{proof}
Similarly, we have the following
\begin{lem}
Let $\tilde X_{s,N}$ be defined in \eqref{eq:tilXsN}. Then it decomposes as $\tilde X_{s,N}= \tilde Y_{s,N}+\tilde Z_{s,N}$ such that $\tilde Y_{s,N}$ is of type Y and $\tilde Z_{s,N}$ is of type Z in the sense of \thref{def:type}.
\end{lem}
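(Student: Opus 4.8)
The plan is to repeat the proof of \thref{lem:XsN} almost verbatim, replacing throughout the momentum block averages $\hat p_{l,i}$ by the force block averages $\hat V'_{l,i}$ and $\partial_p\eta$ by $\partial_r\eta$ (and $\partial^2_{pp}\eta$ by $\partial^2_{rr}\eta$ in the chain rule). First I would isolate the harmless summand by writing $\tilde X_{s,N}=\tilde X^1_{s,N}+\tilde Z^1_{s,N}$, where
\[
\tilde X^1_{s,N}=\sigma\int_0^T\sum_{i=l+1}^{N-l}\bar\varphi_i\,\partial_r\eta(\uhat_{l,i})\,\Delta\hat V'_{l,i}\,dt,\qquad \tilde Z^1_{s,N}=\frac{2\sigma}{l^3}\int_0^T\sum_{i=l+1}^{N-l}\bar\varphi_i\,\partial^2_{rr}\eta(\uhat_{l,i})\,dt .
\]
Since $\partial^2_{rr}\eta$ is bounded and $\sigma N/l^3\to 0$ by \eqref{eq:siglim} and the choice of $l$, the term $\tilde Z^1_{s,N}$ is of type Z with $b_N\to 0$.

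Next I would perform on $\tilde X^1_{s,N}$ the same discrete summation by parts as in \thref{lem:XsN} ($\sum_i c_i\Delta h_i=-\sum_i(\nabla c_i)(\nabla h_i)+\text{bdry}$ with $c_i=\bar\varphi_i\partial_r\eta(\uhat_{l,i})$, then the product rule on $\nabla c_i$), obtaining $\tilde X^1_{s,N}=\tilde Y^1_{s,N}+\tilde Z^2_{s,N}+\tilde Y^b_{s,N}$ with
\[
\tilde Y^1_{s,N}=-\sigma\int_0^T\sum_{i=l+1}^{N-l}(\nabla\bar\varphi_i)\,\partial_r\eta(\uhat_{l,i+1})\,\nabla\hat V'_{l,i}\,dt,\qquad \tilde Z^2_{s,N}=-\sigma\int_0^T\sum_{i=l+1}^{N-l}\bar\varphi_i\,(\nabla\partial_r\eta(\uhat_{l,i}))\,\nabla\hat V'_{l,i}\,dt ,
\]
and $\tilde Y^b_{s,N}$ the two boundary contributions at $i=l$ and $i=N-l$. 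For $\tilde Y^1_{s,N}$ I would apply \thref{lem:typeY} with $A_i=\sigma\,\partial_r\eta(\uhat_{l,i+1})\,\nabla\hat V'_{l,i}$; since $\partial_r\eta$ is bounded, its hypothesis reduces to $\mathbb E[\tfrac1N\sum_i\int_0^T\sigma^2(\nabla\hat V'_{l,i})^2\,dt]\to 0$, which is the two-block estimate for the force field (cf. \thref{cor:2block}) and follows from the corresponding bound for $\hat r$ together with Proposition \ref{prop:oneblock}, \thref{cor:1block} and the Lipschitz bound $V''\le c_2$. For $\tilde Z^2_{s,N}$ I would expand $\nabla\partial_r\eta(\uhat_{l,i})=\partial^2_{rr}\eta(\tilde u_{l,i})\nabla\hat r_{l,i}+\partial^2_{rp}\eta(\tilde u_{l,i})\nabla\hat p_{l,i}$, use the boundedness of the second derivatives of $\eta$, and conclude by \thref{lem:typeZ} and the two-block estimates for $\sigma(\nabla\hat r_{l,i})^2$, $\sigma(\nabla\hat p_{l,i})^2$ and $\sigma(\nabla\hat V'_{l,i})^2$; exactly as in \thref{lem:XsN} this gives type Z with $b_N$ merely bounded, which is all that is needed.

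The delicate step is the boundary term $\tilde Y^b_{s,N}=\sigma\int_0^T\bar\varphi_{N-l}\,\partial_r\eta(\uhat_{l,N-l})\,\nabla\hat V'_{l,N-l+1}\,dt-\sigma\int_0^T\bar\varphi_l\,\partial_r\eta(\uhat_{l,l})\,\nabla\hat V'_{l,l+1}\,dt$. Here I would exploit that $\varphi$ vanishes at $x=0$ and $x=1$: writing $\bar\varphi_l=N\int_0^1 1_{N,l}(x)\int_0^x\partial_y\varphi(t,y)\,dy\,dx$ and using Cauchy-Schwarz gives $|\bar\varphi_l|\le\sqrt{l/N}\,(1+\mathcal O(l/N))\,\|\partial_x\varphi(t,\cdot)\|_{L^2(0,1)}$, and symmetrically for $|\bar\varphi_{N-l}|$ from $\varphi(t,1)=0$. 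Combined with the boundedness of $\partial_r\eta$, Cauchy-Schwarz in $t$, and a two-block bound on $\sigma^2(\nabla\hat V'_{l,l+1})^2$ and $\sigma^2(\nabla\hat V'_{l,N-l+1})^2$ (obtained as in \thref{lem:XsN} from the bulk two-block estimate, since for $i=l$ and $i=N-l$ these block averages still lie in the bulk of the chain), this bounds $\mathbb E[|\tilde Y^b_{s,N}|]$ by $C(\sigma/N)^{1/2}\|\varphi\|_{H^1(Q_T)}$, which vanishes; hence $\tilde Y^b_{s,N}$ is of type Y. Setting $\tilde Y_{s,N}:=\tilde Y^1_{s,N}+\tilde Y^b_{s,N}$ and $\tilde Z_{s,N}:=\tilde Z^1_{s,N}+\tilde Z^2_{s,N}$ proves the lemma. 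The only genuinely new ingredient relative to \thref{lem:XsN} is transferring the two-block bounds from $\nabla\hat p$ to $\nabla\hat V'$, which is immediate from $c_1\le V''\le c_2$ together with the one-block estimate, and that is the step I would check most carefully.
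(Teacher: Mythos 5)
Your proposal is correct and follows exactly the route the paper intends: the paper omits this proof as ``similar'' to \thref{lem:XsN}, and your argument is precisely that proof with $\hat p_{l,i}$, $\partial_p\eta$ replaced by $\hat V'_{l,i}$, $\partial_r\eta$, with the right treatment of the $2\sigma l^{-3}$ term, the summation by parts, and the boundary terms via $\varphi(t,0)=\varphi(t,1)=0$. The only superfluous step is your ``transfer'' of the two-block bound from $\nabla\hat p$ to $\nabla\hat V'$: \thref{prop:2block} and \thref{cor:2block} already cover $a_i=V'(r_i)$ directly, so no extra argument is needed there.
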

\begin{lem}
Let $ M_N$ be defined in \eqref{eq:MN}. Then it decomposes as $ M_N= Y_{m,N}+Z_{m,N}$ such that $Y_{m,N}$ is of type Y and $Z_{m,N}$ is of type Z in the sense of \thref{def:type}. Moreover, $b_N \to 0$ as $N \to \infty$. 
\end{lem}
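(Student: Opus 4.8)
The plan is to treat $M_N$ as what it is --- a continuous, square-integrable martingale in the time variable --- and to exploit that the smoothed noise differentials $d\nabla^*\hat w_{l,i}$ carry a very small quadratic variation. By Doob's inequality and the It\^o isometry,
\[
\mathbb E\big[|M_N(\varphi,\eta)|\big]\;\le\;\Big(\mathbb E\big[\langle M_N(\varphi,\eta)\rangle_T\big]\Big)^{1/2},
\]
so everything reduces to estimating the predictable bracket. First I would set $c_j:=(l-|j|)/l^2$, so that $\hat w_{l,i}=\sum_{|j|<l}c_j\,w_{i-j}$ with $\sum_{|j|<l}c_j=1$, and record the two facts that drive everything: $\nabla^*\hat w_{l,i}=\sum_k(c_{k-1}-c_k)\,w_{i-k}$, and since $|c_{k-1}-c_k|\le l^{-2}$ with $O(l)$ nonzero coefficients, each $\nabla^*\hat w_{l,i}$ has bracket density $O(l^{-3})$, while $d\langle\nabla^*\hat w_{l,i},\nabla^*\hat w_{l,i'}\rangle$ vanishes unless $|i-i'|<2l$ (for the plain averages $\hat w_{l,i}$ the bracket density is $O(l^{-1})$, same support).

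Next I would estimate directly. With $a_i:=\bar\varphi_i\,\partial_p\eta(\uhat_{l,i})$, the bound $|\bar\varphi_i|\le\|\varphi\|_{L^\infty(Q_T)}$ and the boundedness of $\partial_p\eta$ give $\sum_i a_i(t)^2\le CN\|\varphi\|_{L^\infty(Q_T)}^2$. Feeding this into the bracket estimate (using $2|a_ia_{i'}|\le a_i^2+a_{i'}^2$, the $O(l)$-range correlations and the $O(l^{-3})$ density),
\[
\mathbb E\big[\langle M_N\rangle_T\big]\;\le\;\frac{2\sigma}{N}\cdot C\,\frac{l}{l^{3}}\int_0^T\mathbb E\Big[\sum_i a_i(t)^2\Big]\,dt\;\le\;\frac{C\sigma T}{l^{2}}\,\|\varphi\|_{L^\infty(Q_T)}^2 .
\]
Since $l=[N^{1/4}\sigma^{1/2}]$ yields $l^2=N^{1/2}\sigma(1+o(1))$, one has $\sigma/l^2=N^{-1/2}(1+o(1))\to0$. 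Thus $M_N$ is in fact already of type $Z$ in the sense of \thref{def:type}, with $b_N=C(\sigma/l^2)^{1/2}\to0$, and one may simply put $Y_{m,N}=0$ and $Z_{m,N}=M_N$.

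If a genuine decomposition paralleling the preceding lemmas is wanted, I would instead perform a discrete summation by parts inside the stochastic integral,
\[
\sum_{i=l+1}^{N-l}a_i\,d\nabla^*\hat w_{l,i}=\sum_{i=l+1}^{N-l-1}(\nabla a_i)\,d\hat w_{l,i}+a_{l+1}\,d\hat w_{l,l}-a_{N-l}\,d\hat w_{l,N-l},
\]
and split $\nabla a_i=\partial_p\eta(\uhat_{l,i+1})\nabla\bar\varphi_i+\bar\varphi_i\nabla\partial_p\eta(\uhat_{l,i})$. The term carrying $\nabla\bar\varphi_i$ together with the two boundary terms --- for which $|\bar\varphi_{l+1}|,|\bar\varphi_{N-l}|\le C(l/N)^{1/2}\|\varphi\|_{H^1(Q_T)}$ because $\varphi$ vanishes on $\partial Q_T$, exactly as in the proof of \thref{lem:XsN} --- make up $Y_{m,N}$; using $\int_0^T\sum_i(\nabla\bar\varphi_i)^2dt\le N^{-1}\|\varphi\|_{H^1(Q_T)}^2$ and the $O(l^{-1})$ bracket of $\hat w_{l,i}$, their bracket is $O(\sigma/N^2)\|\varphi\|_{H^1(Q_T)}^2$, so $Y_{m,N}$ is of type $Y$ with $a_N=O((\sigma/N^2)^{1/2})\to0$. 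The term $\bar\varphi_i\nabla\partial_p\eta(\uhat_{l,i})$, expanded as $\partial^2_{rp}\eta\,\nabla\hat r_{l,i}+\partial^2_{pp}\eta\,\nabla\hat p_{l,i}$ with bounded second derivatives of $\eta$, makes up $Z_{m,N}$; its bracket is bounded by $C(\sigma/N)\|\varphi\|_{L^\infty(Q_T)}^2\,\mathbb E\big[\sum_i\int_0^T\big((\nabla\hat r_{l,i})^2+(\nabla\hat p_{l,i})^2\big)dt\big]\le CN^{-1}\|\varphi\|_{L^\infty(Q_T)}^2$, using the estimate $\mathbb E\big[\sigma\sum_i\int_0^T\big((\nabla\hat r_{l,i})^2+(\nabla\hat p_{l,i})^2\big)dt\big]\le C$ already invoked in the proof of \thref{lem:XsN}; hence $Z_{m,N}$ is of type $Z$ with $b_N=O(N^{-1/2})\to0$.

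I expect the only genuinely delicate point to be the bookkeeping for the quadratic variations of the smoothed increments: one must verify simultaneously that each bracket density is $O(l^{-3})$ (respectively $O(l^{-1})$), that the cross-brackets are supported on $|i-i'|\lesssim l$, and that the resulting powers of $\sigma$, $N$ and $l$ recombine --- through $l^2\asymp N^{1/2}\sigma$ together with $\sigma/N\to0$ --- into prefactors that vanish. Everything else reduces to Cauchy--Schwarz and the $L^2$- and Dirichlet-form bounds already used in the preceding lemmas.
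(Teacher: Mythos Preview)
Your proof is correct, and in fact your first approach is more economical than the paper's. The paper performs the summation by parts and obtains a genuine three-term decomposition $M_N=Y_{m,N}+Z_{m,N}+Z^b_{m,N}$, then estimates each piece separately: the $\nabla\bar\varphi_i$ term via the $H^1$ bound on $\varphi$, the $\nabla\partial_p\eta$ term via the two-block estimate $\mathbb E\big[\sigma\sum_i\int_0^T((\nabla\hat r_{l,i})^2+(\nabla\hat p_{l,i})^2)\,dt\big]\le C$, and the boundary term via the prefactor $\sqrt{\sigma/N}$. Your direct quadratic-variation computation bypasses all of this: the observation that $\nabla^*\hat w_{l,i}$ has bracket density $O(l^{-3})$ and correlation range $O(l)$ immediately yields $\mathbb E[\langle M_N\rangle_T]\le C(\sigma/l^2)\|\varphi\|_{L^\infty}^2$, and the choice $l^2\asymp N^{1/2}\sigma$ makes this vanish. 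So the whole of $M_N$ is already of type~$Z$ with $b_N\to 0$, and the decomposition is trivial.

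Your second, optional approach is essentially the paper's argument; the only difference is that you place the boundary terms in $Y_{m,N}$ (using $\varphi\in H^1_0(Q_T)$ to get $|\bar\varphi_{l+1}|,|\bar\varphi_{N-l}|\lesssim (l/N)^{1/2}\|\varphi\|_{H^1}$), whereas the paper places them in the type-$Z$ part, relying only on the factor $\sqrt{\sigma/N}$ and the $O(l^{-1})$ bracket of $\hat w_{l,i}$. Both placements work.
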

\begin{proof}
By a summation by parts we obtain
\begin{align}
M_N =  Y_{m,N}+ Z_{m,N}+Z^b_{m,N},
\end{align}
with
\begin{align}
Y_{m,N} &=- \sqrt{2 \frac{\sigma}{N}} \int_0^T  \sum_{i=l+1}^{N-l} (\nabla \bar\varphi_i) \partial_p \eta(\hat u_{l,i+1}) d \hat w_{l,i}
\\
Z_{m,N}&= - \sqrt{2 \frac{\sigma}{N}} \int_0^T  \sum_{i=l+1}^{N-l}  \bar\varphi_i (\nabla \partial_p \eta(\hat u_{l,i})) d \hat w_{l,i}
\\
Z^b_{m,N} &=\sqrt{2 \frac{\sigma}{N}} \int_0^T \left( \bar \varphi_{N-l+1} \partial_p \eta(\hat u_{N-l+1}) d \hat w_{l,N-l}- \bar \varphi_{l+1}\partial_p\eta(\hat u_{l,l+1}) d \hat w_{l,l} \right).
\end{align}
$Y_{m,N}$ is of type Y. In fact, we have
\begin{align}
\mathbb E \left[Y_{m,N}^2\right] & =   {2 \frac{\sigma}{N}}\mathbb E \left[ \left(\frac{1}{l}\sum_{|j|<l}  \int_0^T\sum_{i=l+1}^{N-l}  (\nabla \bar \varphi_i) \partial_p \eta(\hat u_{l,i+1}) d w_{i-j}\right)^2\right]
%\\
%& \le \sqrt{2 \frac{\sigma}{N}} \frac{1}{l}\sum_{|j|<l} \frac{l-|j|}{l} \left( \mathbb E \left [ \left(\int_0^T\sum_{i=l+1}^{N-l}  (\nabla \bar \varphi_i) \partial_p \eta(\hat u_{l,i+1}) d w_{i-j}\right)^2\right ] \right)^{1/2} \nonumber
\\
&\le  { \frac{2\sigma}{N}} \frac{1}{l}\sum_{|j|<l} \frac{l-|j|}{l}  \sum_{i=l+1}^{N-l} \mathbb E \left [ \int_0^T \left(  (\nabla \bar \varphi_i) \partial_p \eta(\hat u_{l,i+1}) \right)^2 \right ] dt  \nonumber
\\
& \le  { \frac{2\sigma}{N}} \|\partial_p \eta \|_\infty^2 \frac{1}{N^2} \int_0^T \sum_{i=l+1}^{N-l} N^2 (\nabla \bar \varphi_i)^2 dt  \nonumber
\\
& \le \|\varphi\|_{H^1(Q_T)}^2{ \frac{2\sigma}{N^2}} \|\partial_p \eta \|_\infty^2. \nonumber
\end{align}
Thanks to the coefficient $\sqrt{\sigma/N}$, the boundary term $Z^b_{m,N}$ is of type Z with $b_N \to 0$. Finally, we show that $Z_{m,N}$ is of type Z as well, by estimating
\begin{align}
\mathbb E \left[ Z_{m,N}^2 \right]& 
%\le \sqrt{2\frac{\sigma}{N}} \left( \mathbb E \left[ \int_0^T \left( \sum_{i=l+1}^{N-l} \bar \varphi_i \nabla \partial_p \eta(\hat u_{l,i}) \right)^2 dt \right] \right)^{1/2}
%\\
 \le C \|\varphi\|_{L^\infty(Q_T)}^2  \frac{\sigma}{N} \int_0^T \sum_{i=l+1}^{N-l}    \mathbb E \left[  (\nabla \hat r_{l,i})^2+(\nabla \hat p_{l,i})^2 \right] dt \nonumber
\\
& \le  \|\varphi\|_{L^\infty(Q_T)}^2  \frac{C}{N} \nonumber
\end{align}
\end{proof}
Similarly, we prove the following
\begin{lem}
Let $\tilde{ M}_N$ be defined in \eqref{eq:tildeMN}. Then it decomposes as $ M_N= \tilde Y_{ m,N}+\tilde Z_{ m,N}$ such that $\tilde Y_{m,N}$ is of type Y and $\tilde Z_{m,N}$ is of type Z in the sense of \thref{def:type}. Moreover, $b_N \to 0$ as $N \to \infty$.
\end{lem}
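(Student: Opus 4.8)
\emph{Proof sketch.} The plan is to transcribe, step by step, the argument just carried out for $M_N$, making only the substitutions: $\partial_p\eta$ replaced by $\partial_r\eta$, and the momentum block noise $\hat w_{l,i}$ replaced by the volume-stretch block noise $\hat{\tilde w}_{l,i}$. This should go through unchanged, since $(\eta,q)$ has bounded first and second derivatives and $\{\tilde w_i\}$ is, exactly like $\{w_i\}$, a family of independent Brownian motions entering through the same block average (hence with the same $1/l$-order quadratic variation). First I would perform a summation by parts in $i$ in \eqref{eq:tildeMN}, obtaining
\begin{align*}
\tilde M_N &= \tilde Y_{m,N}+\tilde Z_{m,N}+\tilde Z^b_{m,N},\\
\tilde Y_{m,N} &= -\sqrt{2\frac{\sigma}{N}}\int_0^T\sum_{i=l+1}^{N-l}(\nabla\bar\varphi_i)\,\partial_r\eta(\hat u_{l,i+1})\,d\hat{\tilde w}_{l,i},\\
\tilde Z_{m,N} &= -\sqrt{2\frac{\sigma}{N}}\int_0^T\sum_{i=l+1}^{N-l}\bar\varphi_i\,(\nabla\partial_r\eta(\hat u_{l,i}))\,d\hat{\tilde w}_{l,i},
\end{align*}
where $\tilde Z^b_{m,N}$ collects the two boundary contributions at $i=l$ and $i=N-l$, to be absorbed into $\tilde Z_{m,N}$ in the statement.

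Then I would treat the three pieces in turn. For $\tilde Y_{m,N}$: the \Ito isometry, the independence of the $\tilde w_i$, the boundedness of $\partial_r\eta$, and the bound $\int_0^T\frac1N\sum_i N^2(\nabla\bar\varphi_i)^2\,dt\le\|\varphi\|_{H^1(Q_T)}^2$ from \thref{lem:typeY} should give $\mathbb E[\tilde Y_{m,N}^2]\le C\|\partial_r\eta\|_\infty^2\,(\sigma/N^2)\,\|\varphi\|_{H^1(Q_T)}^2$, which vanishes because $\sigma/N^2=(\sigma/N)(1/N)\to0$ by \eqref{eq:siglim}; hence $\tilde Y_{m,N}$ is of type Y. For the boundary term $\tilde Z^b_{m,N}$: using $|\bar\varphi_\cdot|\le\|\varphi\|_{L^\infty(Q_T)}$, the boundedness of $\partial_r\eta$, and the $1/l$ order of the quadratic variation of $\hat{\tilde w}_{l,\cdot}$, the \Ito isometry should yield $\mathbb E[(\tilde Z^b_{m,N})^2]\le C\|\varphi\|_{L^\infty(Q_T)}^2\,\sigma/(Nl)\to0$, so this term is of type Z with $b_N\to0$. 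For $\tilde Z_{m,N}$: I would write $\nabla\partial_r\eta(\hat u_{l,i})=\partial^2_{rr}\eta(\tilde u_{l,i})\nabla\hat r_{l,i}+\partial^2_{rp}\eta(\tilde u_{l,i})\nabla\hat p_{l,i}$ for some intermediate point $\tilde u_{l,i}$, use that the second derivatives of $\eta$ are bounded, and combine the \Ito isometry with the two-block estimate \thref{cor:2block} (the uniform bound on $\mathbb E[\sigma\sum_{i=l+1}^{N-l}\int_0^T((\nabla\hat r_{l,i})^2+(\nabla\hat p_{l,i})^2)\,dt]$) to conclude $\mathbb E[\tilde Z_{m,N}^2]\le C\|\varphi\|_{L^\infty(Q_T)}^2/N\to0$; hence $\tilde Z_{m,N}$ is of type Z with $b_N\to0$.

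I do not expect a genuine obstacle: the statement is the exact stretch-variable twin of the $M_N$ lemma. The only two points worth checking carefully — the ones that make the substitution lossless — are that \thref{cor:2block} bounds the stretch gradients $\nabla\hat r_{l,i}$ with the same $\sigma$-weight as the momentum gradients $\nabla\hat p_{l,i}$ (the two-block estimate is symmetric in $r$ and $p$), and that the \Ito computations involving $\hat{\tilde w}_{l,i}$ reproduce verbatim those for $\hat w_{l,i}$, both block noises being built from independent Brownian families. Everything else amounts to copy-editing the preceding proof.
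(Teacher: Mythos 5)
Your proposal is correct and is exactly what the paper intends: the paper omits this proof entirely ("Similarly, we prove the following"), and your transcription of the $M_N$ argument — same summation by parts, \Ito isometry for the $(\nabla\bar\varphi_i)$ term, $\sqrt{\sigma/N}$ killing the boundary term, and \thref{cor:2block} (which is indeed symmetric in $\nabla\hat r_{l,i}$ and $\nabla\hat p_{l,i}$) for the $\nabla\partial_r\eta$ term — is the intended proof with $\partial_p\eta\mapsto\partial_r\eta$ and $\hat w_{l,i}\mapsto\hat{\tilde w}_{l,i}$. No gaps.
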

\begin{lem}
Let $Q_N$ be defined in \eqref{eq:QN}. Then it decomposes as $Q_N= Y_{q,N}+Z_{q,N}$ such that $Y_{q,N}$ is of type Y and $Z_{q,N}$ is of type Z in the sense of \thref{def:type}. Moreover, $b_N \to 0$ as $N \to \infty$.
\end{lem}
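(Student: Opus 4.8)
Write $Q_N=Q_N^{(1)}+Q_N^{(2)}$, where $Q_N^{(1)}=-\int_0^T\int_0^1\partial_x\varphi\, q(\uhat_N)\,dx\,dt$ is the honest entropy flux and $Q_N^{(2)}$ is minus the bracketed sum in \eqref{eq:QN}, i.e. exactly the spurious ``$q$--flux'' produced by the \Ito computation inside $X_{a,N}$ in \eqref{eq:XaN}. The first move is to make $Q_N^{(1)}$ discrete: for fixed $t$, $\uhat_N(t,\cdot)$ is constant and equal to $\uhat_{l,i}(t)$ on the cell $I_i=[\tfrac iN-\tfrac1{2N},\tfrac iN+\tfrac1{2N}]$ for $l+1\le i\le N-l$, and vanishes outside $\bigcup_i I_i$, where $q(\uhat_N)=q(0)=0$. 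Hence $-\int_0^1\partial_x\varphi\, q(\uhat_N)\,dx=-\sum_{i=l+1}^{N-l}q(\uhat_{l,i})\int_{I_i}\partial_x\varphi\,dx$, and $\int_{I_i}\partial_x\varphi\,dx=\varphi^\flat_i-\varphi^\flat_{i-1}$ with $\varphi^\flat_j:=\varphi(t,\tfrac jN+\tfrac1{2N})$. A summation by parts then gives $Q_N^{(1)}=\int_0^T\sum_{i=l+1}^{N-l-1}\varphi^\flat_i\,\nabla q(\uhat_{l,i})\,dt$ plus the two single boundary terms $q(\uhat_{l,l+1})\varphi^\flat_l$ and $-q(\uhat_{l,N-l})\varphi^\flat_{N-l}$ (integrated in $t$).

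Next I would Taylor--expand $\nabla q(\uhat_{l,i})=q(\uhat_{l,i+1})-q(\uhat_{l,i})=\partial_r q(\uhat_{l,i})\nabla\hat r_{l,i}+\partial_p q(\uhat_{l,i})\nabla\hat p_{l,i}+R_i$, with $|R_i|\le\tfrac12\|D^2q\|_\infty(|\nabla\hat r_{l,i}|^2+|\nabla\hat p_{l,i}|^2)$. Because only $\varphi\in L^\infty$ (not $\partial_x\varphi\in L^\infty$) is available, the remainder $\sum_i\varphi^\flat_i R_i$ is treated as in \thref{lem:typeZ}: $\|\varphi^\flat\|_\infty\le\|\varphi\|_\infty$ and, by the two--block estimate \thref{cor:2block}, $\mathbb E[\sum_i\int_0^T(|\nabla\hat r_{l,i}|^2+|\nabla\hat p_{l,i}|^2)dt]\le C\sigma^{-1}\to0$, so this part is of type Z with $b_N\to0$. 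What remains is to compare $\sum_i\varphi^\flat_i[\partial_r q(\uhat_{l,i})\nabla\hat r_{l,i}+\partial_p q(\uhat_{l,i})\nabla\hat p_{l,i}]$ with $Q_N^{(2)}=\sum_i\bar\varphi_i[\partial_r q(\uhat_{l,i})\nabla\hat r_{l,i}-\partial_p q(\uhat_{l,i})\nabla^*\hat p_{l,i}]$. For the $\partial_r q$--part the difference is $\sum_i(\varphi^\flat_i-\bar\varphi_i)\partial_r q(\uhat_{l,i})\nabla\hat r_{l,i}$; for the $\partial_p q$--part I use $\nabla^*\hat p_{l,i}=-\nabla\hat p_{l,i-1}$ and relabel so that both discrete gradients carry the same index, after which the coefficient of $\nabla\hat p_{l,i}$ is $\varphi^\flat_i\partial_p q(\uhat_{l,i})-\bar\varphi_{i+1}\partial_p q(\uhat_{l,i+1})$, which I split as $(\varphi^\flat_i-\bar\varphi_{i+1})\partial_p q(\uhat_{l,i+1})-\varphi^\flat_i\,\nabla\partial_p q(\uhat_{l,i})$.

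The ``$q$--difference'' pieces $\varphi^\flat_i\,\nabla\partial_p q(\uhat_{l,i})\,\nabla\hat p_{l,i}$ are, by boundedness of the second derivatives of $q$, again dominated by $\|\varphi\|_\infty(|\nabla\hat r_{l,i}|^2+|\nabla\hat p_{l,i}|^2)$, hence of type Z with $b_N\to0$ by \thref{cor:2block} exactly as above. For the ``$\varphi$--difference'' pieces one estimates $|\varphi^\flat_i-\bar\varphi_i|\le\int_{I_i}|\partial_x\varphi|\,dx$ and $|\varphi^\flat_i-\bar\varphi_{i+1}|\le\int_{I_{i+1}}|\partial_x\varphi|\,dx$, together with the Cauchy--Schwarz bound $\frac1N\sum_i N^2\big(\int_{I_i}|\partial_x\varphi|\,dx\big)^2\le\|\partial_x\varphi\|_{L^2(Q_T)}^2\le\|\varphi\|_{H^1(Q_T)}^2$; pairing these with $\nabla\hat r_{l,i}$ (resp. $\nabla\hat p_{l,i}$), whose averaged square has expectation $\le C(N\sigma)^{-1}\to0$, is precisely the mechanism in the proof of \thref{lem:typeY} and yields type Y. Finally, all the leftover single terms produced by the summations by parts and relabellings are localised at $i\simeq l$ and $i\simeq N-l$ (e.g. $q(\uhat_{l,l+1})\varphi^\flat_l$); they are handled just like the boundary terms $Z^b_{a,N}$ and $Y^b_{s,N}$ in the preceding lemmas, using that $\varphi$ vanishes on $\{x=0\}\cup\{x=1\}$, which gives $\big(\int_0^T|\varphi^\flat_l|^2dt\big)^{1/2}\le\sqrt{l/N}\,\|\varphi\|_{H^1(Q_T)}$, together with the $L^2$ bounds on block averages near the boundary coming from \thref{prop:energy} and \thref{cor:1block}; since $l/N\to0$ they are of type Y. Collecting, $Q_N=Y_{q,N}+Z_{q,N}$ with $a_N\to0$ and $b_N\to0$.

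The main obstacle is bookkeeping rather than a new idea: one must arrange the several summations by parts and reindexings so that every surviving coefficient built out of $\varphi$ is genuinely a local increment of $\varphi$ controlled in $L^2_x$ by $\|\varphi\|_{H^1(Q_T)}$ (never requiring $\|\partial_x\varphi\|_{L^\infty}$, which is not available along the subsequence), and one must check that the finitely many edge terms left over at $i\simeq l$ and $i\simeq N-l$ fit the type Y / type Z template through the same boundary estimates already used for the other pieces of \eqref{eq:entropyDec}.
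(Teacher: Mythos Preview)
Your proposal is correct and follows essentially the same strategy as the paper: discretise the flux integral via the cell-endpoint values $\varphi^\flat_i$ (the paper's $\tilde\varphi_i$), sum by parts, Taylor-expand $\nabla q(\uhat_{l,i})$, and compare term by term with the bracketed correction. The only organisational difference is in the $\partial_p q$--piece: you relabel $\nabla^*\hat p_{l,i}=-\nabla\hat p_{l,i-1}$ and shift the index, whereas the paper writes $\nabla^*\hat p_{l,i}+\nabla\hat p_{l,i}=\Delta\hat p_{l,i}$ and then invokes the summation-by-parts treatment of $X^1_{s,N}$ from \thref{lem:XsN} (without the prefactor $\sigma$); both routes produce the same type~Y and type~Z pieces with $b_N\to 0$. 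Your handling of the boundary contribution $q(\uhat_{l,l+1})\varphi^\flat_l$ as type~Y via $\big(\int_0^T|\varphi^\flat_l|^2dt\big)^{1/2}\lesssim\sqrt{l/N}\,\|\varphi\|_{H^1}$ is in fact sharper than the paper's one-line remark ``since $q$ is bounded, $Z^b_{q,N}$ is of type~Z'' (which, taken literally, only gives $b_N$ bounded), and is what actually delivers the stated $b_N\to 0$.
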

\begin{proof}
Let $\tilde \varphi_i(t) = \varphi \left(t, \dfrac{i}{N}+\dfrac{1}{2N} \right)$. Then
\begin{align}
- \int_0^T \int_0^1 \partial_x \varphi(t,x) q(\hat u_N(t,x)) dx dt =&- \int_0^T\sum_{i=l+1}^{N-l} \left( \int_0^1 \partial_x \varphi(t,x) 1_{N,i}(x) dx \right) q(\hat u_{l,i}) dt \nonumber
\\
 =& \int_0^T \sum_{i=l+1}^{N-l} (\nabla^* \tilde \varphi_i) q(\hat u_{l,i}) dt  
\\
 =&\int_0^T \sum_{i=l+1}^{N-l}  \tilde \varphi_i \nabla  q(\hat u_{l,i}) dt + \int_0^T \bigl(\tilde \varphi_l q(\hat u_{l,l+1})+ \nonumber
 \\
& -\tilde \varphi_{N-l}q(\hat u_{N-l+1}) \bigr)dt. \nonumber
\end{align}
Thus, since we may write, for some $\alpha \in (0,1)$,
\begin{align}
\nabla q(\hat u_{l,i}) = \partial_r \eta(\hat u^\alpha_{l,i}) \nabla \hat r_{l,i}+\partial_p\eta(\hat u^\alpha_{l,i}) \nabla \hat p_{l,i},
\end{align}
where $\hat u^\alpha_{l,i} = \alpha \hat u_{l,i}+(1-\alpha)\hat u_{l,i+1}$, we obtain
\begin{align}
Q_N = Q^r_{q,N}+Q^p_{q,N}+ Z^b_{q,N},
\end{align}
where
\begin{align}
Q^r_{q,N} &= \int_0^T \sum_{i=l+1}^{N-l} \left( \tilde \varphi_i \partial_r q(\hat u^\alpha_{l,i})\nabla \hat r_{l,i}- \bar \varphi_i \partial_r q(\hat u_{l,i}) \nabla \hat r_{l,i} \right)dt
\\
Q^{*p}_{q,N} &=\int_0^T \sum_{i=l+1}^{N-l} \left( \tilde \varphi_i \partial_p q(\hat u^\alpha_{l,i})\nabla \hat p_{l,i}+ \bar \varphi_i \partial_p q(\hat u_{l,i}) \nabla^* p_{l,i} \right)dt
\\
Z^b_{q,N} &=\int_0^T \left(\tilde \varphi_l q(\hat u_{l,l+1})-\tilde \varphi_{N-l}q(\hat u_{N-l+1}) \right)dt.
\end{align}
Since $q$ is bounded, $Z^b_{q,N}$ is at glance of type Z. Next, we write
\begin{align}
Q^r_{q,N} = Y^r_{q,N}+Z^r_{q,N},
\end{align}
where
\begin{align}
Y^r_{q,N} &= \int_0^T \sum_{i=l+1}^{N-l} ( \tilde \varphi_i-\bar \varphi_i)  \partial_r q(\hat u^\alpha_{l,i})\nabla \hat r_{l,i}dt
\\
Z^r_{q,N} &=\int_0^T \sum_{i=l+1}^{N-l} \bar \varphi_i \left(\partial_r q(\hat u^\alpha_{l,i})- \partial_r q(\hat u_{l,i})\right)\nabla \hat r_{l,i}dt.
\end{align}
In order to estimate $Y^r_{q,N}$ we estimate
\begin{align}
|\tilde \varphi_i - \bar \varphi_i |&=N \left|\int_0^1  1_{N,i}(x)\left( \varphi \left( \frac{i}{N}+\frac{1}{2N}\right) - \varphi(x) \right) dx \right|
\\
& \le N \int_0^1  1_{N,i}(x)\left | \int_0^1 1_{[x,\frac{i}{N}+\frac{1}{2N}]}(y) \partial_y\varphi(t,y)dy \right| dx \nonumber
\\
& \le N \int_0^1 1_{N,i}(x) \left( \frac{i}{N}+\frac{1}{2N} -x \right)^{1/2} \left( \int_0^1 |\partial_y\varphi(t,y)|^2 dy \right)^{1/2}dx \nonumber
\\
& = \frac{2}{3\sqrt N} \left( \int_0^1 |\partial_y\varphi(t,y)|^2 dy \right)^{1/2}. \nonumber
\end{align}
Thus, by Cauchy-Schwarz,
\begin{align}
|Y^r_{q,N}|& \le C \left( \int_0^T \sum_{i=l+1}^{N-l} (\tilde \varphi_i- \bar \varphi_i)^2 dt \right)^{1/2} \left( \int_0^T \sum_{i=l+1}^{N-l} (\nabla \hat r_{l,i})^2 dt \right)^{1/2}
\\
& \le C  \left( \int_0^T \int_0^1 |\partial_y\varphi(t,y)|^2dy dt \right)^{1/2} \left( \int_0^T \sum_{i=l+1}^{N-l} (\nabla \hat r_{l,i})^2 dt \right)^{1/2}  \nonumber
\\
& \le C \|\varphi \|_{H^1(Q_T)}\left( \int_0^T \sum_{i=l+1}^{N-l} (\nabla \hat r_{l,i})^2 dt \right)^{1/2}  \nonumber
\end{align}
and so $Y^r_{q,N}$ is of type Y. $Z^r_{q,N}$ is easily seen to be of type Z with $b_N \to 0$ since we can estimate, for some intermediate value $\tilde u_{l,i}$,
\begin{align}
\left| \partial_r q(\hat u^\alpha_{l,i}) - \partial_r q(\hat u_{l,i}) \right| & = |\partial_r q( \hat \alpha u_{l,i}+(1-\alpha) \hat u_{l,i+1})-\partial_rq(\hat u_{l,i})|
\\
& = | \partial^2_{rr} q(\tilde u_{l,i})(1-\alpha) \nabla \hat r_{l,i}+\partial^2_{rp} q(\tilde u_{l,i})(1-\alpha) \nabla \hat p_{l,i}| \nonumber
\\
& \le \|q''\|_\infty (1-\alpha) \left(|\nabla \hat r_{l,i}|+|\nabla \hat p_{l,i}| \right). \nonumber
\end{align}
In order to evaluate $Q^{*p}_{q,N}$ we write
\begin{align}
 Q^{*p}_{q,N} =  Q^p_{q,N}+ X_{q,N},
\end{align}
where
\begin{align}
 Q^p_{q,N} &= \int_0^T \sum_{i=l+1}^{N-l} \left( \tilde \varphi_i \partial_p q(\hat u^\alpha_{l,i})\nabla \hat p_{l,i}- \bar \varphi_i \partial_p q(\hat u_{l,i}) \nabla p_{l,i} \right)dt
\\
X_{q,N} &= \int_0^T \sum_{i=p+1}^{N-l} \bar \varphi_i \partial_p q(\hat u_{l,i}) \left( \nabla^*\hat p_{l,i}+\nabla \hat p_{l,i} \right) dt.
\end{align}
$ Q^p_{p,N}$ is evaluated exactly as $Q^r_{p,N}$ and so it writes as a sum of a type Y term and of a type Z term with $b_N \to \infty$. Finally, since
\begin{align}
\nabla^* \hat p_{l,i}+\nabla \hat p_{l,i} = \Delta \hat p_{l,i},
\end{align}
the term $X_{q,N}$ is entirely similar to the term $X^1_{s,N}$ of \thref{lem:XsN}, except  the latter has an extra factor $\sigma$. Thus, $X_{q,N}$ writes as a sum of a type Y term and a type Z term with $b_N \to 0$.
\end{proof}

\section{Clausius Inequality} \label{sec:clausius-inequality}
This section is devoted to proving the second law of Thermodynamics in the form of the Clausius inequality.  We recall here the variational formula for the relative entropy
\begin{align}\label{eq:varentr}
H_N(t) = \sup_\phi \left \{ \int \phi d \mu_t^N - \log \int \e \phi d \lambda_t^N \right\},
\end{align}
where the supremum is carried over all
measurable function $\phi$ such that  $\int \e \phi d \lambda_t^N < +\infty$.
\begin{lem}
Any solution $\tilde {\bf u}$ belongs almost surely to $L^\infty(0,T;L^2(0,1))$.
\end{lem}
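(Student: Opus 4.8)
The plan is to push the uniform-in-time energy bound to the limit and then promote the resulting a.e.-in-time $L^2$ control to a genuine $L^\infty(0,T;L^2(0,1))$ bound by exploiting the weak-in-time continuity that every weak solution in the sense of \thref{def:1} automatically enjoys. First I would note that the energy estimate \thref{prop:energy} is uniform in $t\in[0,T]$ (it ultimately rests on the bound $H_N(t)\le C(t)N$ of \thref{prop:relative}, with $C(\cdot)$ bounded on $[0,T]$). Combining it with the quadratic growth of $V$, so that $|\uv_i|^2\le C\big(\tfrac{p_i^2}{2}+V(r_i)\big)+C$, and with the Cauchy--Schwarz estimate \eqref{eq:5}, one gets $\sup_{t\in[0,T]}\E\big[\int_0^1|\uhat_N(t,x)|^2\,dx\big]\le C_T$, uniformly in $N$. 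Since $\utilde_n$ has the same law as $\uhat_{N_n}$ (Proposition \ref{prop:skoro}), $\utilde_n\weak\utilde$ in $L^2(Q_T)$ (Proposition \ref{prop:weakconv}), and $\mathbf f\mapsto\int_a^b\!\int_0^1|\mathbf f|^2\,dx\,dt$ is weakly lower semicontinuous on $L^2(Q_T)$, Fatou's lemma yields
\[
\E\Big[\int_a^b\!\int_0^1|\utilde(t,x)|^2\,dx\,dt\Big]\le (b-a)\,C_T\qquad\text{for all }0\le a<b\le T,
\]
so that $\Pbbtilde$-almost surely $\utilde(t,\cdot)\in L^2(0,1)$ for a.e.\ $t$, with $\E\big[\|\utilde(t,\cdot)\|_{L^2(0,1)}^2\big]\le C_T$ for a.e.\ $t$.

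Next I would use that $\utilde$ is $\Pbbtilde$-almost surely a weak solution (Proposition \ref{prop:approssimata} together with the reduction of the limit Young measure). Testing \eqref{eq:maineqr1} and \eqref{eq:maineqp1} against separated test functions $\chi(t)\theta(x)$ with $\theta\in C^2([0,1])$ and $\theta(1)=0$ (resp.\ $\theta(0)=0$) shows that $t\mapsto\langle\utilde(t,\cdot),\theta\rangle$ is absolutely continuous on $[0,T]$: its distributional time derivative equals $\langle\tilde p(t,\cdot),\theta'\rangle$ (resp.\ $-\langle\tau(\tilde r(t,\cdot)),\theta'\rangle$ plus the bounded boundary term $\theta(1)\bar\tau(t)$), which belongs to $L^2(0,T)$ because $\utilde$ and $\tau(\tilde r)$ belong to $L^2(Q_T)$. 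Ranging over a countable family of such test functions dense in $L^2(0,1)$ and redefining $\utilde$ on a Lebesgue-null set of times, I would combine this with the a.e.-in-time bound of the first step to obtain a representative $t\mapsto\utilde(t,\cdot)$ that is weakly continuous from $[0,T]$ into $L^2(0,1)$. Then a weakly continuous map on the compact interval $[0,T]$ has weakly bounded, hence (by the uniform boundedness principle) norm-bounded, range, which gives $\sup_{t\in[0,T]}\|\utilde(t,\cdot)\|_{L^2(0,1)}<\infty$ $\Pbbtilde$-almost surely; weak lower semicontinuity of the norm, $\|\utilde(t,\cdot)\|_{L^2}\le\liminf_{s\to t}\|\utilde(s,\cdot)\|_{L^2}$, together with the quantitative bound of the first step then pins this supremum down by $C_T$.

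The step I expect to be the main obstacle is precisely this junction of the two arguments: the first step produces only an in-expectation, a.e.-in-time bound, while the conclusion $\utilde\in L^\infty(0,T;L^2(0,1))$ is a pathwise, everywhere-in-time statement. The role of the second step is to forbid $\|\utilde(t,\cdot)\|_{L^2(0,1)}$ from blowing up on an exceptional set of times, but one has to check carefully that the representative built from countably many test functions is genuinely weakly continuous \emph{into $L^2(0,1)$} (and not merely continuous against a dense set of test functions, which would only control $\utilde$ in a negative-order space); it is here that the uniform energy estimate of the first step — rather than the mere weak-solution property — is indispensable. Everything else (the uniform energy bound itself, the lower-semicontinuity passage to the limit, the absolute continuity in time coming from the weak formulation) is routine given the results already established.
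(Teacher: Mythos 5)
There is a genuine gap, and it sits exactly at the junction you yourself flag. Your first step only yields an in-expectation bound, $\E\big[\|\utilde(t,\cdot)\|^2_{L^2(0,1)}\big]\le C_T$ for a.e.\ $t$: transferring the microscopic energy estimate through equality in law preserves expectations but loses any pathwise constant, and such a bound does not exclude $\operatorname{ess\,sup}_{t\in[0,T]}\|\utilde(t,\cdot)\|_{L^2(0,1)}=+\infty$ almost surely (think of $\|\utilde(t,\cdot)\|^2_{L^2}=|t-U|^{-1/2}$ with $U$ uniform on $[0,T]$: every fixed-$t$ expectation is bounded, the pathwise supremum is not). Note also that $L^\infty(0,T;L^2)$ only asks for a finite essential supremum, so the real issue is not upgrading from a.e.\ $t$ to every $t$ but producing a pathwise constant uniform in $t$. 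Your second step cannot supply it, because it is circular: to pass from continuity of $t\mapsto\langle\utilde(t,\cdot),\theta\rangle$ for a countable family of smooth $\theta$ to weak continuity into $L^2(0,1)$ (and only then invoke Banach--Steinhaus) you must already know that $\|\utilde(s,\cdot)\|_{L^2(0,1)}$ is bounded, pathwise and uniformly over times $s$ accumulating at each $t$; the absolute continuity extracted from \eqref{eq:maineqr1}--\eqref{eq:maineqp1} controls the functionals only through $\|\theta'\|_{L^2}$, i.e.\ in a negative-order norm. One can exhibit deterministic fields with $(r,p)\in L^2(Q_T)$, $t\mapsto\langle\uv(t,\cdot),\theta\rangle$ absolutely continuous for every smooth $\theta$ (high spatial oscillation with slowly diverging amplitude), and $\|\uv(t,\cdot)\|_{L^2}$ unbounded in $t$, so no soft argument of this kind can close the gap.

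The missing ingredient is precisely what the paper's own route provides: after the Skorohod representation (Proposition \ref{prop:skoro}) the norms $a_n=\normp[\uhat_{N_n}]{2}$ converge almost surely, giving a genuine pathwise constant $\sup_n\|\utilde_n\|_{L^2(Q_T)}\le C(\omega)$; the paper then uses the strong $L^p$ convergence of Lemma \ref{lem:lpstrong} to extract, pathwise, an a.e.\ pointwise convergent subsequence and applies Fatou in $x$ at (almost every) fixed $t$. If you wish to keep your strategy, you must replace step 1 by an argument yielding an almost-sure (not in-expectation) bound uniform in $t$, e.g.\ via exponential moments of the energy and a maximal/martingale inequality at the microscopic level; as written, the proposal does not prove the lemma.
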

\begin{proof}
Fix $1 \le p <2$. By Lemma \ref{lem:lpstrong} there exists a set $A$ of probability $1$ such that $\tilde {\bf u}_n \to \tilde{\bf u}$ in $L^p$-strong for for any $\omega \in A$. For any such $\omega$ we can find a subsequence $\{n_k^\omega\}$ such that $\tilde{\bf u}_{n_k^\omega}(t,x) \to \tilde {\bf u}(t,x)$ for almost all $t$ and $x$. In particular, for almost all $t$, the sequence $\tilde{\bf u}_{n_k^{\omega}}(t,x)$ converges for almost all $x$. Thus, by Fatou lemma and the remark following Proposition \ref{prop:skoro} we have
\begin{align}
\int_0^1 |\tilde{\bf u}(t,x)|^2 dx \le \liminf_{k\to \infty} \int_0^1 |\tilde {\bf u}_{n_k^\omega}(t,x)|^2 dx \le C
\end{align}
for almost all $t$.
\end{proof}
\begin{lem}
  For any limit point $\mathfrak Q$ of $\mathfrak Q_N$ 
  \begin{equation}
\int_0^T    \liminf_{N\to\infty}\frac{H_N(t)}{N} dt\ \ge\ \beta
    \mathbb E^{\mathfrak Q}\left(  \int_0^T\int_0^1 \left[\mathcal F(t,y) - \bar\tau(t) r(t,y)\right] dy dt \right) + \int_0^TG(\bar\tau(t))dt,
    \label{eq:entropy-macro}
  \end{equation}
  where
  \begin{equation}
    \label{eq:14}
    \mathcal F(t,y) := \frac{p(t,y)^2}{2} + F( r(t,y)). 
  \end{equation}
\end{lem}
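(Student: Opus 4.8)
The plan is to apply the variational characterisation \eqref{eq:varentr} of the relative entropy with a \emph{configuration-dependent} trial function — the logarithm of a local Gibbs state whose parameters are read off the empirical block averages. Fix $t$. Split $\{1,\dots,N\}$ into consecutive blocks $B_1,\dots,B_{\lfloor N/l\rfloor}$ of length $l=l(N)$, write $\bar r_k=\tfrac1l\sum_{i\in B_k}r_i$, $\bar p_k=\tfrac1l\sum_{i\in B_k}p_i$, pick sequences $\delta_l,\varepsilon_l\to0$ with $\delta_l l\to\infty$ (e.g.\ $\delta_l=\varepsilon_l=l^{-1/2}$), and set $\pi_k=(1-\varepsilon_l)\bar p_k$ and $\tau_k=\tau\big((1-\delta_l)\bar r_k+\delta_l\,\ell(\bar\tau(t))\big)$ — parameters shrunk slightly towards local equilibrium. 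Testing \eqref{eq:varentr} with
\begin{equation*}
  \phi_t(\rv,\pv)=\sum_k\sum_{i\in B_k}\Big[\beta(\tau_k-\bar\tau(t))r_i+\beta\pi_k p_i-\tfrac\beta2\pi_k^2-G(\tau_k)+G(\bar\tau(t))\Big]
\end{equation*}
gives $H_N(t)\ge\int\phi_t\,d\mu_t^N-\log\int e^{\phi_t}\,d\lambda_t^N$, and I would estimate the two terms separately, discarding the incomplete last block and the sites left out of the averaging (an $O(l/N)$ error).

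For the entropic term the key is that, for \emph{disjoint} blocks, $\sum_{i\in B_k}r_i=l\bar r_k$ and $\sum_{i\in B_k}p_i=l\bar p_k$ exactly, so the $k$-th contribution is $l[\beta(\tau_k-\bar\tau)\bar r_k-G(\tau_k)+G(\bar\tau)]+(1-\varepsilon_l^2)\,l\tfrac\beta2\bar p_k^2$. With $\Lambda(\ell):=F(\ell)-\bar\tau(t)\ell+\beta^{-1}G(\bar\tau(t))\ge0$ (convex, uniformly so since $\Lambda''=F''=\tau'$ is bounded above and below by positive constants, by strict hyperbolicity and \eqref{eq:Vconvex}), the Legendre identity $\beta F(\ell)=\beta\tau(\ell)\ell-G(\tau(\ell))$ together with the convexity and Lipschitzness of $\tau$ shows the first bracket is $\ge(1-o(1))\beta\Lambda(\bar r_k)$. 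Hence $\tfrac1N\int\phi_t\,d\mu_t^N\ge(1-o(1))\beta\,\mathbb E_{\mu_t^N}[\int_0^1\Lambda(\bar r_N(t,\cdot))]+(1-\varepsilon_l^2)\tfrac\beta2\mathbb E_{\mu_t^N}[\int_0^1\bar p_N(t,\cdot)^2]+o(1)$, where $(\bar r_N,\bar p_N)$ is the piecewise-constant block field, which shares its weak limit with $\uhat_N$ (cf.\ Proposition \ref{prop:replace}). Passing to the limit — using the convergence of $\mathfrak Q_N$ to $\mathfrak Q$ evaluated at the fixed time $t$ (limit points admit a weakly time-continuous $L^2$-valued representative), the uniform bound \eqref{eq:l2bound} for uniform integrability, the weak lower semicontinuity of the convex functionals $u\mapsto\int\Lambda(u)$ and $u\mapsto\int u^2$, and $\delta_l,\varepsilon_l\to0$ — yields, for a.e.\ $t$,
\begin{equation*}
  \liminf_{N\to\infty}\frac1N\int\phi_t\,d\mu_t^N\ \ge\ \beta\,\mathbb E^{\mathfrak Q}\Big[\int_0^1\big(\mathcal F(t,y)-\bar\tau(t)r(t,y)\big)\,dy\Big]+G(\bar\tau(t)).
\end{equation*}

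The main obstacle is to show $\tfrac1N\log\int e^{\phi_t}\,d\lambda_t^N\to0$, \emph{uniformly in $t$}. Disjointness of the blocks factorises the integral, $\int e^{\phi_t}d\lambda_t^N=\prod_k Z^{(k)}_l$, each $Z^{(k)}_l$ an $\mathbb R^{2l}$-integral against $l$ independent copies of $\lambda_{\beta,0,\bar\tau(t)}$. The momentum factor is the explicit Gaussian integral $\mathbb E\big[e^{\frac{1-\varepsilon_l^2}{2}\beta l\bar p^2}\big]=\varepsilon_l^{-1}$, contributing $\log(1/\varepsilon_l)$ per block. For the stretch factor, the slight shrink of $\tau_k$ is exactly what forces the per-block exponent below $(1-\delta_l')\,l\,\beta\Lambda(\bar r)+O(1)\cdot\Lambda(\bar r)$ for some $\delta_l'$ with $\delta_l'l\to\infty$, hence below $(1-\delta_l'')l\beta\Lambda(\bar r)$ with $\delta_l''l\to\infty$, so $Z^{(k)}_l$ is controlled by $\mathbb E_{\lambda_{\beta,0,\bar\tau(t)}}\big[e^{(1-\delta_l'')l\beta\Lambda(\bar r)}\big]$. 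Since $\beta\Lambda$ is \emph{exactly} the Cram\'er rate function of $\bar r$ under $\lambda_{\beta,0,\bar\tau(t)}$, the optimal tilt would diverge; the shrink is what saves it, and the estimate $\mathbb E[e^{(1-\delta_l'')l\beta\Lambda(\bar r)}]=e^{o(l)}$ (in fact $O(\mathrm{poly}(l))$) is an \emph{equivalence-of-ensembles} bound, which I would derive from a uniform sharp local limit theorem $\rho_{\bar r}(\ell)\le C\sqrt l\,e^{-l\beta\Lambda(\ell)}$ — valid because, by \eqref{eq:Vconvex}, the $r$-marginal of $\lambda_{\beta,0,\bar\tau(t)}$ is smooth, log-concave and sub-Gaussian, with constants uniform in $t$ as $\bar\tau$ is bounded. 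Then each $\log Z^{(k)}_l=o(l)$, so $\tfrac1N\log\int e^{\phi_t}d\lambda_t^N=\tfrac{\lfloor N/l\rfloor}{N}\,o(l)\to0$.

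Combining the two estimates gives, for a.e.\ $t$, $\liminf_N H_N(t)/N\ge\beta\mathbb E^{\mathfrak Q}[\int_0^1(\mathcal F(t,y)-\bar\tau(t)r(t,y))\,dy]+G(\bar\tau(t))$; integrating this inequality over $[0,T]$ and using Fubini (all quantities integrable since limit points of $\mathfrak Q_N$ are supported on $L^\infty(0,T;L^2(0,1))$ and $\bar\tau$ is bounded) yields \eqref{eq:entropy-macro}. Note that no problematic interchange of $\liminf$ and $\int_0^T dt$ occurs, because the pointwise-in-$t$ bound is established first and only then integrated. The whole scheme is soft — weak lower semicontinuity, the already-proven hydrodynamic convergence, Fatou — except for the equivalence-of-ensembles estimate of the third paragraph, which is where the real work lies.
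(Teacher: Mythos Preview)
Your argument can be made to work, but the paper takes a much simpler route. Rather than a configuration-dependent $\phi$ with parameters read from empirical block averages, it tests \eqref{eq:varentr} with \emph{deterministic} Lipschitz profiles $\tilde\tau(t,\cdot),\tilde p(t,\cdot)$ on $[0,1]$,
\[
\phi(\rv,\pv)=\sum_{i=1}^N\Big\{\beta\big(\tilde\tau(t,\tfrac iN)-\bar\tau(t)\big)r_i+\beta\,\tilde p(t,\tfrac iN)\,p_i-G\big(\tilde\tau(t,\tfrac iN)\big)+G(\bar\tau(t))-\tfrac\beta2\tilde p(t,\tfrac iN)^2\Big\}.
\]
This is exactly the log-density of one product Gibbs measure against another, so $\int e^\phi\,d\lambda_t^N=1$ identically --- the log term vanishes, and no shrinking, no equivalence-of-ensembles bound, no local CLT is needed. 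The remaining term $\tfrac1N\int\phi\,d\mu_t^N$ is \emph{linear} in the empirical fields, so the hydrodynamic limit applies directly; the free energy appears only \emph{after} the limit, by taking the supremum over $(\tilde\tau,\tilde p)$ and invoking Legendre duality \eqref{eq:freedef}. In short: the paper postpones the optimisation in the variational formula, you perform it microscopically, and the price is precisely the large-deviations estimate you correctly flag as the hard part. Your route also needs weak lower semicontinuity of the nonlinear functional $\int\Lambda(\bar r_N)$ under the block-field convergence, whereas the paper only ever passes to the limit in linear test functions.

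One quantitative slip in your sketch: shrinking the argument of $\tau$ by $\delta_l$ moves $\tau_k$ off the maximiser of the concave map $\tau'\mapsto\beta(\tau'-\bar\tau)\bar r-G(\tau')$ by $O(\delta_l)$, hence depresses its value only by $O(\delta_l^{2})$ (second order at a maximum), so with $\delta_l=l^{-1/2}$ you get $\delta_l''\,l=O(1)$ rather than $\to\infty$. This is still enough for your LCLT bound to yield $\log Z_l^{(k)}=O(\log l)$, but the stated scaling is off.
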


  \begin{proof}

    To get the lower bound for any fixed $t \ge 0$ choose two Lipschitz functions $\tilde \tau(t,\cdot)$ and $\tilde p(t,\cdot)$ on $[0,1]$ and define $\widetilde{\tau_i(t)} := \tilde\tau\left(t, \dfrac i N\right)$
    and $\widetilde{p_i(t)} := \tilde p\left(t, \dfrac i N \right)$. Then take $\phi$ in \eqref{eq:varentr} as
\begin{equation}
  \label{eq:13}
  \phi({\bf r}, {\bf p}) =  \sum_{i=1}^N \left\{\beta\left[(\widetilde{\tau_i(t)}- \bar\tau(t)) r_i +\widetilde{p_i(t)} p_i\right] -
  G( \widetilde{\tau_i(t)}) +G(\bar\tau(t)) - \beta\frac{\widetilde{p_i(t)}^2}{2} \right\}.
\end{equation}
Observe that $\phi$ is such that $\int \e \phi d \lambda_t^N =1$. Then using the results on the hydrodynamic limit, along a sub-sequence, we have, for almost all $t$,
 \begin{equation}
   \begin{split}
   \liminf_{N\to\infty} \frac{H_N(t)}{N} \ \ge \ \sup_{\tilde\tau(t,\cdot), \tilde p(t,\cdot) \in \operatorname{Lip}([0,1])} \mathbb E^{\frak Q} \int_0^1 dy \Big\{
   \left[ \beta \left(\tilde\tau(t,y) -\bar\tau(t)\right) r(t,y) + \tilde p(t,y) p(t,y)\right]\\
   - G(\tilde \tau(t,y)) + G(\bar\tau(t)) - \frac{\beta}{2} \tilde p(t,y)^2 \Big\}\\
   = \mathbb E^{\frak Q} \int_0^1 dy \Big\{ \sup_{\tau \in \mathbb R} \left[\beta \tau r(t,y)  - G( \tau)\right] - \beta\bar\tau(t) r(t,y)
   + \sup_{p' \in \mathbb R} \left[p' p(t,y) - \frac{\beta}{2} p'^2\right]\Big\} + G(\bar\tau(t))\\
 = \beta\mathbb E^{\frak Q} \int_0^1 dy \left[\mathcal F(t,y) - \beta \bar\tau(t) r(t,y) \right] + G( \bar\tau(t)).
\end{split}
    \label{eq:entropy-macro1a}
  \end{equation}
The conclusion then follows after a time integration.
\end{proof}
\begin{theorem}[Clausius inequality]
\begin{equation}
  \label{eq:17}
  \begin{split}
   \mathbb E^{\mathfrak Q}\left(\int_0^T  \int_0^1 [\mathcal F(t,y)  -  \mathcal F(0,y)] dy  \right)
    \le  \mathbb E^{\mathfrak Q}\left( \int_0^TW(t)dt \right).
  \end{split}
\end{equation}
where the macroscopic work is given by 
\begin{equation}
  \label{eq:18}
 W(t) := - \int_0^t \bar \tau'(s) \int_0^1 r(s,y)dy+ \bar \tau(t) \int_0^1 r(t,y)dy- \bar \tau(0) \int_0^1 r_0(y)dy.
\end{equation}
\end{theorem}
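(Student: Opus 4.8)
The plan is to integrate the microscopic entropy balance of \thref{prop:relative} over $[0,T]$, divide by $N$, and pass to the limit along the subsequence on which a given limit point $\mathfrak Q$ is attained. The left‑hand side will be bounded below by the macroscopic entropy estimate \eqref{eq:entropy-macro}; on the right‑hand side the initial entropy $H_N(0)/N$ is computed from the local‑equilibrium form of $f_0^N$, and the term containing $\sum_i r_i$ is handled through the hydrodynamic limit. Once $N\to\infty$ the free‑energy terms $G(\bar\tau)$ cancel between the two sides, and what remains rearranges into \eqref{eq:17}.

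Concretely, integrating \eqref{eq:entr0} over $t\in[0,T]$ and dividing by $N$ gives
\begin{equation*}
\frac1N\int_0^T H_N(t)\,dt \;\le\; T\,\frac{H_N(0)}{N} \;-\; \frac{\beta}{N}\int_0^T\!\!\int_0^t \bar\tau'(s)\,\mathbb E\Big[\sum_{i=1}^N r_i(s)\Big]\,ds\,dt \;+\; \beta\int_0^T\!\!\int_0^t \bar\tau'(s)\,\ell(\bar\tau(s))\,ds\,dt .
\end{equation*}
Since $H_N\ge 0$, Fatou's lemma gives $\liminf_N N^{-1}\int_0^T H_N(t)\,dt \ge \int_0^T \liminf_N N^{-1}H_N(t)\,dt$, and \eqref{eq:entropy-macro} bounds the latter below by $\beta\,\mathbb E^{\mathfrak Q}\big(\int_{Q_T}[\mathcal F(t,y)-\bar\tau(t) r(t,y)]\,dy\,dt\big) + \int_0^T G(\bar\tau(t))\,dt$.

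For the right‑hand side I would treat the three terms separately. (i) Because $f_0^N$ is a product of single‑site densities of $\lambda_{\beta,p_0(i/N),\tau(r_0(i/N))}$ relative to $\lambda_{\beta,0,\bar\tau(0)}$, the relative entropy factorizes; using $\int r\,d\lambda_{\beta,\bar p,\tau}=\ell(\tau)$, $\int p\,d\lambda_{\beta,\bar p,\tau}=\bar p$ and the Legendre identity $\beta F(\ell)=\beta\tau(\ell)\ell-G(\tau(\ell))$ one finds $H_N(0)=\sum_i\{\beta[\mathcal F(0,i/N)-\bar\tau(0) r_0(i/N)]+G(\bar\tau(0))\}$, so $H_N(0)/N\to\beta\int_0^1[\mathcal F(0,y)-\bar\tau(0) r_0(y)]\,dy+G(\bar\tau(0))$ as a Riemann sum. (ii) The quantity $\frac1N\int_0^T\int_0^t\bar\tau'(s)\,\mathbb E[\sum_i r_i(s)]\,ds\,dt$ equals $\mathbb E$ of the first (i.e. $r$) component of ${\pmb\zeta}_N(J_T)$ for the continuous test function $J_T(t,x):=\bar\tau'(t)(T-t)$; by \thref{prop:energy} this random variable is bounded in $L^2(\Omega)$ uniformly in $N$, hence uniformly integrable, so the weak convergence $\mathfrak Q_N\to\mathfrak Q$ together with \eqref{eq:conv} upgrades to convergence of the mean, with limit $\mathbb E^{\mathfrak Q}\big(\int_0^T\int_0^t\bar\tau'(s)\int_0^1 r(s,y)\,dy\,ds\,dt\big)$. (iii) From $\frac{d}{ds}G(\bar\tau(s))=G'(\bar\tau(s))\bar\tau'(s)=\beta\,\ell(\bar\tau(s))\bar\tau'(s)$ one gets $\beta\int_0^T\int_0^t\bar\tau'(s)\ell(\bar\tau(s))\,ds\,dt=\int_0^T G(\bar\tau(t))\,dt-T\,G(\bar\tau(0))$.

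Collecting (i)--(iii), the terms $\int_0^T G(\bar\tau(t))\,dt$ and the two $\pm T\,G(\bar\tau(0))$ contributions cancel, leaving
\begin{equation*}
\beta\,\mathbb E^{\mathfrak Q}\Big(\int_{Q_T}[\mathcal F(t,y)-\bar\tau(t) r(t,y)]\,dy\,dt\Big)\;\le\; \beta T\!\int_0^1[\mathcal F(0,y)-\bar\tau(0) r_0(y)]\,dy\;-\;\beta\,\mathbb E^{\mathfrak Q}\Big(\int_0^T\!\!\int_0^t\bar\tau'(s)\int_0^1 r(s,y)\,dy\,ds\,dt\Big).
\end{equation*}
Dividing by $\beta$, moving the $\bar\tau r$ term to the right, and using $T\int_0^1\bar\tau(0) r_0(y)\,dy=\int_0^T\bar\tau(0)\int_0^1 r_0(y)\,dy\,dt$ together with $\int_{Q_T}\bar\tau(t) r(t,y)\,dy\,dt=\int_0^T\bar\tau(t)\int_0^1 r(t,y)\,dy\,dt$, the right‑hand side becomes exactly $\mathbb E^{\mathfrak Q}\big(\int_0^T W(t)\,dt\big)$ with $W$ as in \eqref{eq:18}, which proves \eqref{eq:17}. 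I expect the delicate point to be the limit in step (ii): weak convergence of $\mathfrak Q_N$ only yields convergence in distribution of the tested functional, and passing to convergence of its expectation requires the $N$‑uniform $L^2(\Omega)$ bound coming from the energy estimate; one must also fix the subsequence realizing $\mathfrak Q$ before invoking Fatou and \eqref{eq:entropy-macro}, which holds along a subsequence and for a.e.\ $t$, and the evaluation of $\lim_N H_N(0)/N$ is precisely where the hypothesis on $f_0^N$ is used.
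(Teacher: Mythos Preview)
Your proof is correct and follows the same strategy as the paper's: both combine the microscopic entropy balance of \thref{prop:relative}, the lower bound \eqref{eq:entropy-macro}, the exact computation of $H_N(0)/N$ from the local-equilibrium initial datum, and the identity $G'(\bar\tau)=\beta\,\ell(\bar\tau)$. The only difference is the order of operations---you integrate over $t\in[0,T]$ first and then take $N\to\infty$, whereas the paper takes the limit for a.e.\ $t$ (using the pointwise bound established inside the proof of the lemma) and integrates at the end; your explicit justification of step~(ii) via uniform integrability is in fact more careful than the paper's treatment of the same passage.
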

\begin{proof}
By our assumptions on the initial conditions we have
  \begin{equation}
    \lim_{N\to\infty} \frac{H_N(0)}{N} \ =\ \beta  \int_0^1 \left[ \mathcal F(0,y) - \bar\tau(0) r_0(y)\right] dy
    + G(\bar\tau(0)).
    \label{eq:entropy-macro0}
  \end{equation}
This, together with the previous lemma, yields for almost all $t$,
  \begin{equation}
    \label{eq:15}
    \begin{split}
    \beta \mathbb E^{\mathfrak Q}\left(  \int_0^1 \mathcal F(t,y) dy  \right) - \beta  \int_0^1 \mathcal F(0,y) dy
    \le \liminf_{N\to\infty} \frac{H_N(t) - H_N(0)}{N} \\
    + \beta \mathbb E^{\mathfrak Q}\left(\int \left[\bar\tau(t) r(t,y) - \bar\tau(0) r_0(y) \right] dy\right)
    - G(\bar\tau(t)) + G(\bar\tau(0)).
  \end{split}
\end{equation}
Finally, from Proposition \ref{prop:relative} we have
\begin{align}
  \label{eq:16}
    \liminf_{N\to\infty}& \frac{H_N(t) - H_N(0)}{N} \le -\beta \int_0^t ds\;
    \mathbb E^{\mathfrak Q}\left( \bar\tau'(s) \int_0^1 r(s,y)  dy\right) 
    +\beta \int_0^t ds\;  \bar\tau'(s)\ell(\bar\tau(s)) \\
   & = \beta \mathbb E^{\mathfrak Q}\left( W(t) \right) \nonumber
    - \beta \mathbb E^{\mathfrak Q}\left(\int_0^1 \left[\bar\tau(t) r(t,y) - \bar\tau(0) r_0(y) \right] dy\right)
    +\beta \int_0^t ds\;  \bar\tau'(s)\ell(\bar\tau(s)) \nonumber\\
    &= \beta \mathbb E^{\mathfrak Q}\left( W(t) \right)
    - \beta \mathbb E^{\mathfrak Q}\left(\int_0^1 \left[\bar\tau(t) r(t,y) - \bar\tau(0) r_0(y) \right] dy\right)
    +G(\bar\tau(t)) - G(\bar\tau(0)),\nonumber
\end{align}
which, together with \eqref{eq:15} and after an integration in time gives the conclusion.
\end{proof}
\begin{oss}
Assume that the external tension varies smoothly from $\tau_0$ at $t=0$ to $\tau_1$ as $t \to \infty$. Assume also that the system is at equilibrium both at time zero and as $t \to \infty$:
\begin{align}
p_0(x) = 0, \quad \tau(r_0(x)) = \tau_0, \qquad \forall x \in [0,1]
\end{align}
\begin{align}
\lim_{t \to \infty} p(t,x) =0, \quad \lim_{t \to \infty} \tau (r(t,x)) = \tau_1 \qquad \text{a.e. } x \in [0,1].
\end{align}
Then, the following version of the Clausius inequality holds
\begin{align}
F( \tau_1) - F(\tau_0) \le E^{\mathfrak Q}(W),
\end{align}
where the total work $W$ is given by
\begin{align}
W := - \int_0^\infty \bar \tau'(s) \int_0^1 r(s,y)dy+\tau_1  \ell(\tau_1)-  \tau_0  \ell(\tau_0).
\end{align}
\end{oss}

\appendix
\section{Appendix}

\subsection{Microscopic estimates} \label{sec:estimates}
In the following we will denote, for any sequence $(a_i)_{i \in \mathbb N}$
and any $l \in \mathbb N$ the usual block averages
\begin{align}
\bar a_{l,i}:= \frac{1}{l} \sum_{j=1}^l a_{i-j+1},  \quad  i \ge l.
\end{align}
For $1\le m \le i \le N$, denote by
$d\mu_{m,i,t} \in \mathcal M\left(\mathbb R^{2m} \right)$
the projection of the probability measure $\mu_t^N$ on $\{r_{i-m+1}, \dots, r_i, p_{i-m+1}, \dots, p_i\}$.
This decompose in $d\mu_{m,i,t}(\cdot) = d\mu_{m,i,t}(\cdot|\bar r_{m,i}, \bar p_{m,i}) d\mu_{i,t}(\bar r_{m,i}, \bar p_{m,i})$,
where $\mu_{m,i,t}(\cdot|\bar r_{m,i}, \bar p_{m,i})$ is the measure $\mu_{m,i,t}$ 
conditioned to  $\bar r_{m,i}, \bar p_{m,i}$, while $d\mu_{i,t}(\bar r_{m,i}, \bar p_{m,i})$ is the marginal distribution
of $(\bar r_{m,i}, \bar p_{m,i})$ under $\mu^N_t$. 
% \begin{align}\label{eq:cond1}
% \bar r_{m,i} := \frac{1}{m} \sum_{j=1}^{m} r_{i-j} = \ell, \qquad \bar p_{m,i}:= \frac{1}{m} \sum_{j=1}^{m} p_{i-j} = \bar p.
% \end{align}

Correspondingly, from the measure $\lambda_t^N$, we define
$d\lambda_{m,i,t}(\cdot) = d\lambda_{m,i,t}(\cdot|\bar r_{m,i}, \bar p_{m,i}) d\lambda_{i,t}(\bar r_{m,i}, \bar p_{m,i})$.
% Denote also by $\bar \lambda^{\ell, \bar p}_{m,i}$ the measure analogously obtained from
% $\lambda^N_t$.

For every value of $\ell, \bar p$, we can choose the corresponding regular conditional probabilities
$\bar \mu^{\ell, \bar p}_{m,i,t} = \mu_{m,i,t}(\cdot|\bar r_{m,i}= \ell, \bar p_{m,i}= \bar p)$
and $\bar \lambda^{\ell, \bar p}_{m,i,t} =  \lambda_{m,i,t}(\cdot|\bar r_{m,i}= \ell, \bar p_{m,i}= \bar p)$.
These are probability measures
on $\mathbb R^{2m}$ supported on the $2(m-1)$ hyperplane $\Sigma_m(\ell,\bar p)$ defined by
$\frac{1}{m} \sum_{j=0}^{m-1} r_{j} = \ell, \frac{1}{m} \sum_{j=0}^{m-1} p_{j} = \bar p$.
Observe that $\bar \lambda^{\ell, \bar p}_{m,i,t}$ does not depend on $\taubar(t)$ nor on $i$.

% ({\marchesani{citazione}), and thus we shall assume that  $\bar \lambda^{\ell, \bar p}_{m,i}$
%is obtained conditioning the centered Gibbs measure $\lambda^N_{\beta,0,0}$.
%\begin{align}
%{ \bar \lambda^{\ell, \bar p}_{m,i} =e^{\beta m\taubar(t)\ell +m G(\beta,0)-mG(\beta, \taubar(t))} \hat \lambda^{\ell,\bar p}_{m,i}},
%\end{align}
%where $\hat \lambda^{\ell,\bar p}_{m,i}$ is the conditional measure 
%obtained from the $\lambda^N_t$ corresponding to $\taubar(t)=0$.

Since the potential $V$ is uniformly convex% , the density of the measure $\lambda^N_{\beta,0,0}$ is log-concave. The same applies to the conditional measures $\bar \lambda^{\ell, \bar p}_{m,i}$. Thus
, the Bakry-Emery criterion applies and we have the following logarithmic Sobolev inequality (LSI)
%away from the boudary (i.e. for $m < i < N$):
\begin{align} \label{eq:LSI}
  \int_{\Sigma_m(\ell,\bar p)} g^2 \log g^2 d \bar \lambda^{\ell, \bar p}_{m} \le
  C_{lsi} m^2 \sum_{j=1}^{m-1} \int_{\Sigma_m(\ell,\bar p)}
  \left[ (D_{j} g)^2+ (\tilde D_{j}g)^2 \right] d \bar \lambda^{\ell, \bar p}_{m}
\end{align}
for any smooth $g$ on $\mathbb R^{2m}$ such that $\int_{\Sigma_m(\ell,\bar p)} g^2 d \bar \lambda^{\ell, \bar p}_{m}=1$.
% probability density (with respect to $ \bar \lambda^{\ell, \bar p}_{m,i}$) $g^2$ and where
Here $C_{lsi}$ is a universal constant depending on the interaction $V$ only.
In particular \eqref{eq:LSI} holds for
$g^2 = \bar f_{m,i,t}^{\ell, \bar p} = d \bar \mu_{m,i,t}^{\ell, \bar p}/d \lambda_{m}^{\ell, \bar p}$.
Denote $d\mu_{m,i,t}(\ell, \bar p)$ the marginal distribution of  $\bar r_{m,i}, \bar p_{m,i}$ of $\mu_t$.

\begin{lem} \thlabel{lem:LSI}
Let $m < i < N$. Then there exists a constant $C$ such that, for any $\ell, \bar p \in \mathbb R$,
\begin{align} \label{eq:LSIlem}
%H\left[ \bar \mu_{m,i}^{\ell, \bar p} | \bar \lambda_{m,i}^{\ell, \bar p} \right] = 
  \int_{\mathbb R^2} d\mu_{m,i,t}(\bar r_{m,i}, \bar p_{m,i})
  \int_{\Sigma_m(\bar r_{m,i}, \bar p_{m,i})} \bar f_{m,i,t}^{\bar r_{m,i}, \bar p_{m,i}} \log \bar f_{m,i,t}^{\bar r_{m,i}, \bar p_{m,i}}
  d \lambda_{m}^{\bar r_{m,i}, \bar p_{m,i}} \\
  \le C_{lsi} m^2 \sum_{j=1}^{m-1}\int_{\mathbb R^{2N}}
  \frac{ \left(D_{i-j}f_t^N\right)^2+\left(\tilde D_{i-j}f_t^N\right)^2}{ f_t^N} d\lambda_t^N  \nonumber
\end{align}
\end{lem}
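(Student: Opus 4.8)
The plan is to apply the logarithmic Sobolev inequality \eqref{eq:LSI} fibrewise on each hyperplane $\Sigma_m(\ell,\bar p)$, integrate the resulting inequality over the law of the block averages, and then bound the conditional Dirichlet forms that appear by the full Dirichlet form of $f_t^N$ via the standard contraction of Dirichlet forms under marginalization.

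\textbf{Step 1.} For each fixed $(\ell,\bar p)$ I would apply \eqref{eq:LSI} with $g^2=\bar f_{m,i,t}^{\ell,\bar p}$, which is admissible since $\int_{\Sigma_m(\ell,\bar p)}g^2\,d\bar\lambda^{\ell,\bar p}_m=1$ by construction. Writing $(D_jg)^2=\tfrac14(D_jg^2)^2/g^2$ (and similarly for $\tilde D_j$) and then integrating against the marginal law of $(\bar r_{m,i},\bar p_{m,i})$ under $\mu_t^N$ — the measure written $d\mu_{m,i,t}(\ell,\bar p)$ in \eqref{eq:LSIlem}, which in the disintegration notation of the excerpt is $d\mu_{i,t}$ — the left-hand side becomes exactly the left-hand side of \eqref{eq:LSIlem}, while the right-hand side becomes (with a harmless factor $\tfrac14$ that I will absorb into $C_{lsi}$)
\begin{align*}
C_{lsi}\,m^2\sum_{j=1}^{m-1}\int d\mu_{m,i,t}(\ell,\bar p)\int_{\Sigma_m(\ell,\bar p)}\frac{(D_j\bar f_{m,i,t}^{\ell,\bar p})^2+(\tilde D_j\bar f_{m,i,t}^{\ell,\bar p})^2}{\bar f_{m,i,t}^{\ell,\bar p}}\,d\bar\lambda^{\ell,\bar p}_m .
\end{align*}
In the global labelling the operators $D_j,\tilde D_j$ acting on the block $\{r_{i-m+1},\dots,r_i\}$ are $D_{i-m+j},\tilde D_{i-m+j}$, and as $j$ runs over $1,\dots,m-1$ this is just a relabelling of $D_{i-j},\tilde D_{i-j}$; since $m<i<N$, all of these are bulk operators (not the boundary ones $D_0,\tilde D_N$).

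\textbf{Step 2.} Next I would identify the conditional Dirichlet form with a fibrewise restriction of the Dirichlet form of the $2m$-marginal. Writing $f_{m,i,t}=d\mu_{m,i,t}/d\lambda_{m,i,t}$ for the density of the projection of $\mu_t^N$ onto the $2m$ block coordinates, the disintegrations of $\mu_{m,i,t}$ and $\lambda_{m,i,t}$ give $f_{m,i,t}=\bar f_{m,i,t}^{\ell,\bar p}\cdot(d\mu_{i,t}/d\lambda_{i,t})(\ell,\bar p)$ with $\ell=\bar r_{m,i}$, $\bar p=\bar p_{m,i}$. Since each $D_j$ and $\tilde D_j$ is tangent to the hyperplanes $\Sigma_m(\ell,\bar p)$ — it annihilates $\bar r_{m,i}$ and $\bar p_{m,i}$, hence the factor $(d\mu_{i,t}/d\lambda_{i,t})(\ell,\bar p)$ — one gets $D_jf_{m,i,t}=(D_j\bar f_{m,i,t}^{\ell,\bar p})(d\mu_{i,t}/d\lambda_{i,t})(\ell,\bar p)$, and likewise for $\tilde D_j$. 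Integrating $(D_jf_{m,i,t})^2/f_{m,i,t}$ against $d\lambda_{m,i,t}$ and disintegrating once more yields
\begin{align*}
\int\frac{(D_jf_{m,i,t})^2}{f_{m,i,t}}\,d\lambda_{m,i,t}=\int d\mu_{m,i,t}(\ell,\bar p)\int_{\Sigma_m(\ell,\bar p)}\frac{(D_j\bar f_{m,i,t}^{\ell,\bar p})^2}{\bar f_{m,i,t}^{\ell,\bar p}}\,d\bar\lambda^{\ell,\bar p}_m
\end{align*}
and the analogue with $\tilde D_j$, so the bound from Step 1 becomes $C_{lsi}m^2\sum_{j=1}^{m-1}\int[(D_jf_{m,i,t})^2+(\tilde D_jf_{m,i,t})^2]/f_{m,i,t}\,d\lambda_{m,i,t}$.

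\textbf{Step 3 and the main obstacle.} Finally I would use that $\lambda_t^N$ is a product measure, so that $f_{m,i,t}$ is the integral of $f_t^N$ against the one-site canonical measures at the sites outside the block; since $D_{i-m+j}$ and $\tilde D_{i-m+j}$ act only on coordinates inside the block, they commute with this integration, and a Cauchy--Schwarz/Jensen argument gives $\int(D_{i-m+j}f_{m,i,t})^2/f_{m,i,t}\,d\lambda_{m,i,t}\le\int(D_{i-m+j}f_t^N)^2/f_t^N\,d\lambda_t^N$, and similarly for $\tilde D_{i-m+j}$. Combining the three steps and relabelling $i-m+j\leftrightarrow i-j$ as $j$ runs over $1,\dots,m-1$ gives \eqref{eq:LSIlem}. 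I expect the only genuinely delicate point to be the bookkeeping in Step 2 — verifying carefully that the difference operators are tangent to the constraint hyperplanes, so that the conditional density's Dirichlet form really is the fibrewise restriction of the Dirichlet form of the $2m$-marginal; the fibrewise LSI of Step 1 and the contraction of the Dirichlet form under marginalization in Step 3 are standard.
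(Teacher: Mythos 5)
Your proposal is correct and follows essentially the same route as the paper: apply the fibrewise LSI \eqref{eq:LSI} to $g^2=\bar f_{m,i,t}^{\ell,\bar p}$, use the decomposition $f_{m,i,t}=\bar f_{m,i,t}^{\ell,\bar p}\,\bar f_{i,t}(\bar r_{m,i},\bar p_{m,i})$ together with the tangency of the bulk operators $D_{i-j},\tilde D_{i-j}$ to the hyperplanes $\Sigma_m(\ell,\bar p)$ to identify the integrated conditional Dirichlet form with that of the $2m$-marginal, and then pass to the full Dirichlet form of $f_t^N$ by convexity (the paper's ``Jensen's inequality'' step). Your Steps 1--3 match the paper's proof, with the only differences being that you make explicit the harmless factor $\tfrac14$ from $(D_j\sqrt{f})^2=(D_jf)^2/(4f)$ and the tangency bookkeeping, both of which the paper leaves implicit.
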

\begin{proof}
Let $f_{m,i,t} = \dfrac{d\mu_{m,i,t}}{d\lambda_{m,i,t}}$. It decomposes as
\begin{equation}
  \label{eq:7}
  f_{m,i,t}(r_{i-m}, p_{i-m}, \dots, r_{i}, p_{i}) = \bar f_{m,i,t}^{\bar r_{m,i}, \bar p_{m,i}}(r_{i-m}, p_{i-m}, \dots, r_{i}, p_{i})
  \bar f_{i,t}(\bar r_{m,i}, \bar p_{m,i}),
\end{equation}
where $\bar f_{i,t}=\dfrac{d\mu_{i,t}}{d\lambda_{i,t}}$. 

By \eqref{eq:LSI}, the left hand side of \eqref{eq:LSIlem} is less or equal to
\begin{align}
 C_{lsi} m^2 &\int_{\mathbb R^2} d\mu_{i,t}(\bar r_{m,i}, \bar p_{m,i})
  \sum_{j=1}^{m-1}  \int
  \frac{ (D_{i-j} \bar f_{m,i,t}^{\bar r_{m,i}, \bar p_{m,i}})^2+ (\tilde D_{i-j}\bar f_{m,i,t}^{\bar r_{m,i}, \bar p_{m,i}})^2}
  {\bar f_{m,i,t}^{\bar r_{m,i}, \bar p_{m,i}} } d \bar \lambda_{m,i,t}^{\bar r_{i,m}, \bar p_{i,m}}\\
  = C_{lsi} m^2& \int_{\mathbb R^2} f_{i,t}(\bar r_{m,i}, \bar p_{m,i}) d\lambda_{i,t}(\bar r_{m,i}, \bar p_{m,i})
  \sum_{j=1}^{m-1}  \int
  \frac{ (D_{i-j} \bar f_{m,i,t}^{\bar r_{m,i}, \bar p_{m,i}})^2+ (\tilde D_{i-j}\bar f_{m,i,t}^{\bar r_{m,i}, \bar p_{m,i}})^2}
  {\bar f_{m,i,t}^{\bar r_{m,i}, \bar p_{m,i}} } d \bar \lambda_{m,i,t}^{\bar r_{i,m}, \bar p_{i,m}} \nonumber 
  \\
  = C_{lsi} m^2 &\int_{\mathbb R^2} d\lambda_{i,t}(\bar r_{m,i}, \bar p_{m,i})
  \sum_{j=1}^{m-1}  \int
  \frac{ (D_{i-j} f_{m,i,t})^2+ (\tilde D_{i-j} f_{m,i,t})^2}
  { f_{m,i,t} } d \bar \lambda_{m,i,t}^{\bar r_{i,m}, \bar p_{i,m}}\nonumber
    \\
  = C_{lsi} m^2 
  &\sum_{j=1}^{m-1}  \int
  \frac{ (D_{i-j} f_{m,i,t})^2+ (\tilde D_{i-j} f_{m,i,t})^2}
  { f_{m,i,t} } d \lambda_{m,i,t}, \nonumber
\end{align}
and \eqref{eq:LSIlem} follows by Jensen's inequality.
\end{proof}

\begin{prop}[One-block estimate - interior] \thlabel{prop:1block}
There exists $l_0 \in \mathbb N$ such that, for $l_0 < l \le N$, we have
\begin{equation} \label{eq:oneblock}
\sum_{i=l+1}^{N-1} \int_0^t \int \left( \bar V'_{l,i} - \tau \left(\bar r_{l,i} \right) \right)^2 d \mu_s^N ds \le C \left( \frac{N}{l} t+ \frac{l^2}{\sigma} \right)
\le C(T) \frac{l^2}{\sigma}.
\end{equation}
\end{prop}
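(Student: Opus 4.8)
The natural plan is to run the Guo--Papanicolaou--Varadhan one-block argument in the form adapted to this model by Fritz \cite{Fritz1}. One localises the estimate on a block of $l$ consecutive sites, conditions on the empirical averages $(\bar r_{l,i},\bar p_{l,i})$, and pays for this conditioning with two separate quantities: an exponential concentration (equivalence of ensembles) estimate for the block average $\bar V'_{l,i}$ under the conditioned canonical measure $\bar\lambda_l^{\ell,\bar p}$, and the conditional relative entropy, which \thref{lem:LSI} converts into the Dirichlet forms $\mathcal D_N,\tilde{\mathcal D}_N$ already controlled by \thref{prop:relative}.

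Concretely, for $l<i<N$, since $\bar r_{l,i}\equiv\ell$ on the hyperplane $\Sigma_l(\ell,\bar p)$, I would first decompose
\[
\int\bigl(\bar V'_{l,i}-\tau(\bar r_{l,i})\bigr)^2\,d\mu_s^N
=\int d\mu_{l,i,s}(\ell,\bar p)\int_{\Sigma_l(\ell,\bar p)}\bigl(\bar V'_{l,i}-\tau(\ell)\bigr)^2\,\bar f_{l,i,s}^{\ell,\bar p}\,d\bar\lambda_l^{\ell,\bar p},
\]
and then apply the entropy inequality to the inner integral with a multiplier proportional to $l$:
\[
\int_{\Sigma_l(\ell,\bar p)}\bigl(\bar V'_{l,i}-\tau(\ell)\bigr)^2\,\bar f_{l,i,s}^{\ell,\bar p}\,d\bar\lambda_l^{\ell,\bar p}
\ \le\ \frac{1}{\gamma l}\log\int e^{\gamma l\,(\bar V'_{l,i}-\tau(\ell))^2}\,d\bar\lambda_l^{\ell,\bar p}
\ +\ \frac{1}{\gamma l}\,H\bigl(\bar f_{l,i,s}^{\ell,\bar p}\mid\bar\lambda_l^{\ell,\bar p}\bigr).
\]
The decisive input is then the claim that there are $\gamma>0$ and $l_0$, depending only on $V$ and $\beta$, with
\[
C_0:=\sup_{l\ge l_0}\ \sup_{\ell,\bar p}\ \log\int e^{\gamma l\,(\bar V'_{l,i}-\tau(\ell))^2}\,d\bar\lambda_l^{\ell,\bar p}<\infty.
\]

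This last bound is a quantitative equivalence of ensembles at the fluctuation scale. Under the product canonical law $\lambda_{\beta,\bar p,\tau(\ell)}^{\otimes l}$ the variables $V'(r_k)$ are i.i.d.\ with mean $\tau(\ell)$, variance bounded uniformly in $\ell$ (Brascamp--Lieb, using $c_1\le V''$) and with uniform finite exponential moments (using that $V'$ grows at most linearly and $V''\ge c_1$), so $\bar V'_l$ is sub-Gaussian at scale $l^{-1/2}$ about $\tau(\ell)$; conditioning on the \emph{typical} value $\{\bar r_l=\ell\}$ preserves this by a non-lattice local central limit theorem whose uniformity in $(\ell,\bar p)$ rests on the asymptotic flatness \eqref{eq:AN} of $V''$. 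I would import this estimate from \cite{Fritz1} or Appendix~A of \cite{marchesani2018hydrodynamic} rather than reprove it; note in particular that the stated logarithmic Sobolev inequality \eqref{eq:LSI} is by itself not enough here, since it only yields an $O(l)$ bound on the variance of the block average.

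For the entropy term, \thref{lem:LSI} with $m=l$ bounds $\int d\mu_{l,i,s}(\ell,\bar p)\,H(\bar f_{l,i,s}^{\ell,\bar p}\mid\bar\lambda_l^{\ell,\bar p})$ by $C_{lsi}l^2\sum_{j=1}^{l-1}\int\frac{(D_{i-j}f_s^N)^2+(\tilde D_{i-j}f_s^N)^2}{f_s^N}\,d\lambda_s^N$; summing over $l<i<N$, each difference index $D_k,\tilde D_k$ is counted at most $l$ times, so $\sum_i\int d\mu_{l,i,s}(\ell,\bar p)H(\cdots)\le C_{lsi}l^3\,(\mathcal D_N(s)+\tilde{\mathcal D}_N(s))$. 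Collecting the two contributions and integrating over $s\in[0,t]$ gives
\[
\sum_{i=l+1}^{N-1}\int_0^t\!\!\int\bigl(\bar V'_{l,i}-\tau(\bar r_{l,i})\bigr)^2\,d\mu_s^N\,ds
\ \le\ \frac{C_0}{\gamma}\,\frac{N}{l}\,t\ +\ \frac{C_{lsi}}{\gamma}\,l^2\int_0^t\bigl(\mathcal D_N(s)+\tilde{\mathcal D}_N(s)\bigr)\,ds,
\]
and \thref{prop:relative} yields $\int_0^t(\mathcal D_N+\tilde{\mathcal D}_N)\,ds\le\beta C(t)/\sigma$, which is exactly the asserted bound $C\bigl(\frac{N}{l}t+\frac{l^2}{\sigma}\bigr)$. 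The final reduction to $C(T)\frac{l^2}{\sigma}$ uses the choice $l=[N^{1/4}\sigma^{1/2}]$ together with $\lim_N N/\sigma^2=0$: then $\tfrac{Nt/l}{l^2/\sigma}=\tfrac{Nt\sigma}{l^3}\asymp t\,N^{1/4}\sigma^{-1/2}\to 0$. The step I expect to be the real obstacle is precisely the uniform exponential concentration of $\bar V'_{l,i}$ about $\tau(\bar r_{l,i})$ displayed above; the rest is bookkeeping with \thref{lem:LSI} and \thref{prop:relative}.
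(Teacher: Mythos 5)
Your proposal is correct and follows essentially the same route as the paper: conditioning on $(\bar r_{l,i},\bar p_{l,i})$, the entropy inequality with multiplier of order $l$, \thref{lem:LSI} summed over $i$ to produce $C_{lsi}l^3(\mathcal D_N+\tilde{\mathcal D}_N)$, the Dirichlet-form bound from \thref{prop:relative}, the uniform exponential moment bound for $\bar V'_{l,i}-\tau(\ell)$ under $\bar\lambda_l^{\ell,\bar p}$, and the same final bookkeeping with $l=[N^{1/4}\sigma^{1/2}]$. The only (minor) difference is that you import that exponential bound as a black box, whereas the paper proves it by comparing $\bar\lambda_l^{\ell,\bar p}$ with the product measure $\lambda^l_{\beta,\bar p,\tau(\ell)}$ for $l>l_0$ and then using the Gaussian linearization $\e{\alpha l X^2}=\mathbb E_\xi\bigl[\e{\xi\sqrt{2\alpha l}\,X}\bigr]$ together with a one-site exponential estimate cited from \cite{marchesani2018hydrodynamic}, choosing $\alpha=\beta/(4c_2)$.
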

\begin{proof}
Fix $\alpha >0$. By the entropy inequality and  \thref{lem:LSI}:
\begin{align} \label{eq:1block1}
&\sum_{i=l+1}^{N-1}\alpha l \int_0^t   ds \int \left( \bar V'_{l,i} - \tau \left(\bar r_{l,i} \right) \right)^2 d \mu_s^N
\\
  =&\sum_{i=l+1}^{N-1} \alpha l \int_0^t ds \int  d\mu_{i,s}(\bar r_{l,i}, \bar p_{l,i})
    \int \left( \bar V'_{l,i} - \tau \left(\bar r_{l,i} \right) \right)^2 d\bar \mu_{l,i,s}^{\bar r_{l,i}, \bar p_{l,i}} \nonumber
 \\
% & \le \sum_{i=l+1}^{N-1}\int_0^t H[\bar \mu_{l,i}^{\ell,\bar p} | \lambda_{l,i}^{\ell,\bar p}]  d \mu_s^N ds+\sum_{i=l+1}^{N-1} \int_0^t \int \log \left(\int \e{\alpha \left( \bar V'_{l,i}- \tau(\bar r_{l,i})\right)^2} d \barli[\lambda] \right)d \mu_s^Nds \label{eq:1blockinter}
%  \\
\le& l^3 C_{lsi}\int_0^t (\mathcal D_N(s)+\widetilde {\mathcal D}_N(s)) d s 
        +\sum_{i=l+1}^{N-1} \int_0^t ds \int d\mu_{i,s}(\bar r_{l,i}, \bar p_{l,i})
        \log\left(  \int \e{\alpha l \left( \bar V'_{l,i}- \tau(\bar r_{l,i})\right)^2} d \bar \lambda_{l,i,t}^{\bar r_{l,i}, \bar p_{l,i}} \right) \nonumber
 \\
 \le& C\frac{l^3}{\sigma}+  \sum_{i=l+1}^{N-1} \int_0^t ds \int d\mu_{i,s}(\bar r_{l,i}, \bar p_{l,i})
        \log\left(  \int \e{\alpha l \left( \bar V'_{l,i}- \tau(\bar r_{l,i})\right)^2} d \bar \lambda_{l}^{\bar r_{l,i}, \bar p_{l,i}} \right) \nonumber
\end{align}
where we have used the bound on the time integral of the Dirichlet form % and 
% the fact that $\bar r_{l,i}=\ell$ when integrating with respect to $\barli[\lambda]$
.

We prove now that for $\alpha < (4c_0)^{-1}$ we have
\begin{equation}
  \label{eq:8}
  \sup_{\ell,\bar p \in \R} \int \e{\alpha l \left( \bar V'_{l,i}- \tau(\ell)\right)^2} d \bar \lambda_{l}^{\ell, \bar p} \le C,
\end{equation}
and \eqref{eq:oneblock} will follow.

We take $l > l_0$ so that
\begin{equation}
  \label{eq:9}
  \int \e{\alpha l \left( \bar V'_{l,i}- \tau(\ell)\right)^2} d \bar \lambda_{l}^{\ell, \bar p} \le C
  \int \e{\alpha l \left( \bar V'_{l,i}- \tau(\ell)\right)^2} d \lambda^l_{\beta, \bar p, \tau(\ell)}.
\end{equation}

% \begin{equation}\label{eq:grannum}
% \int e^\varphi d \barli[\lambda] \le C \int e^\varphi d \lambda^N_{\beta, \bar p, \tau(\ell)}
% \end{equation}
% for any  {integrable} function $\varphi$. Hence, we obtain
% \begin{align}
% \sum_{i=l+1}^{N-1} \int_0^t\int \log\left(  \int\e{\alpha \left( \bar V'_{l,i}- \tau(\ell)\right)^2} d \barli[\lambda]\right)d \mu_s^Nds & \le  \sum_{i=l+1}^{N-1} \int_0^t \int \left( \log \int C \e{\alpha \left( \bar V'_{l,i}- \tau(\ell)\right)^2} d  \lambda^N_{\beta,\bar p,\tau(\ell)}\right) d \mu_s^N ds
% \end{align}
% We are left to estimate the expectation with respect to $\lambda^N_{\beta, \bar p,\tau(\ell)}$. In order to do so we

Let us introduce a normally distributed random variable $\xi \sim \mathcal N(0,1)$ so that we can use the identity
\begin{align}
  \e{\alpha l \left( \bar V'_{l,i}- \tau(\ell)\right)^2}
  = \mathbb E_\xi \left[ \e {\xi \sqrt{2 \alpha l}\left(\bar V'_{l,i}-\tau(\ell) \right) } \right]
\end{align}
in order to write
\begin{align}
  \int \e{\alpha l \left( \bar V'_{l,i}- \tau(\ell)\right)^2} d \lambda^l_{\beta, \bar p,\tau(\ell)} &
 = \mathbb E_\xi \left[ \int \e {\xi \sqrt{2 \alpha l}\left(\bar V'_{l,i}-\tau(\ell) \right) }d \lambda^l_{\beta, \bar p,\tau(\ell)} \right] 
  \\ & = \mathbb E_\xi \left[\e{-\tau(\ell) \xi \sqrt{2\alpha l}} \left(\int \e{  \frac{\xi \sqrt{2\alpha l}}{l}V'(r) }
       d \lambda_{\beta, \bar p,\tau(\ell)} \right)^l \right],
\end{align}
It is easy to show (cf Appendix A of \cite{marchesani2018hydrodynamic}) that,
\begin{equation}
  \int \e {\xi \sqrt{2\alpha l^{-1}} V'(r_1) } d \lambda^N_{\beta, \bar p,\tau(\ell)} \le
  \e  {\frac{c_2\alpha}{\beta}\xi^2+ \frac{\tau(\ell) \sqrt{2\alpha l}}{l}\xi}.
\end{equation}
Hence, we obtain
\begin{equation}
  \int \e{\alpha \left( \bar V'_{l,i}- \tau(\ell)\right)^2} d  \lambda^l_{\beta, \bar p,\tau(\ell)}
  \le  \mathbb E_\xi \left[ \e{ \frac{c_2 \alpha}{\beta} \xi^2} \right],
\end{equation}
and the right hand side is independent of $\ell$ and $\bar p$.
% Putting everything together yields
% \begin{align}  
% \sum_{i=l+1}^{N-1}\alpha  \int_0^t \int \left( \bar V'_{l,i} - \tau \left(\bar r_{l,i} \right) \right)^2 d \mu_s^N & \le C\frac{ l^3}{\sigma} + N\int_0^t \int  \log \left(C  \mathbb E_\xi \left[ \e{\frac{c_2\alpha}{l} \xi^2} \right]\right) d \mu_s^N ds
% \\
% & =  C\frac{ l^3}{\sigma} + Nt  \log \left(C  \mathbb E_\xi \left[ \e{\frac{c_2\alpha}{l} \xi^2} \right]\right). \label{eq:1blocco}
% \end{align}
Taking $\alpha = \beta/(4c_2)$ the expectation with respect to $\xi$ is finite.% , and so
% \begin{equation}
% Nt  \log \left(C  \mathbb E_\xi \left[ \e{\frac{c_2\alpha}{l} \xi^2} \right]\right) = Ct N.
% \end{equation}
%Moreover, since $\partial_\tau G(\beta, \tau)= \beta \ell(\beta, \tau)$, we have
%\begin{align}
%\beta \taubar(t)\ell + G(\beta,0)-G(\beta, \taubar(t)) = \beta \taubar(t) \left( \ell- \ell_0(\beta) \right)- \frac{1}{2}\partial^2_{\tau \tau} G(\beta, \theta(t)) \taubar(t)^2 
%\end{align}
%for some $\theta(t)$ between $0$ and $\taubar(t)$ and where $\ell_0(\beta) = \ell(\beta,0)$. Thus, since $G(\beta, \tau)$ is convex with respect to $\tau$, and taking $\ell = \ell_0(\beta)$ we have
%\begin{align}
%Nlt\left[ \beta \taubar(t)\ell + G(\beta,0)-G(\beta, \taubar(t)) \right] \le  0
%\end{align}
% The conclusion the follows from \eqref{eq:1blocco} upon 
% %choosing $\ell = \ell_0(\beta)$ and 
% dividing by $\alpha = \beta/(4c_2)$.
\end{proof}
If we do not perform the summation over $i$ in \eqref{eq:1block1} and we bound the right-hand side of \eqref{eq:LSIlem} by the Dirichlet forms we obtain the following
\begin{cor} \thlabel{prop:11block}
There exists $l_0 \in \mathbb N$ such that, for $l_0 < l < i < N$, we have
\begin{equation} \label{eq:oneblock}
 \int_0^t \int \left( \bar V'_{l,i} - \tau \left(\bar r_{l,i} \right) \right)^2 d \mu_s^N ds \le C \left( \frac{1}{l} t+ \frac{l}{\sigma} \right)
\le C(T) \frac{l}{\sigma}.
\end{equation}
\end{cor}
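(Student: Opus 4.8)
The plan is to re-run the proof of Proposition~\thref{prop:1block} with the single index $i$ held fixed instead of summed over $l+1\le i\le N-1$; as the comment preceding the statement points out, this is exactly where the power of $l$ drops from $3$ to $2$. Fix $l_0<l<i<N$ and take $\alpha=\beta/(4c_2)$, the same value as in that proof. Applying the entropy inequality to the conditional measures $\bar\mu_{l,i,s}^{\bar r_{l,i},\bar p_{l,i}}$ relative to $\bar\lambda_l^{\bar r_{l,i},\bar p_{l,i}}$, integrating against the marginal $d\mu_{i,s}(\bar r_{l,i},\bar p_{l,i})$, and using Lemma~\thref{lem:LSI} to control the resulting conditional relative entropy — that is, repeating \eqref{eq:1block1} without the sum over $i$ — one obtains, for each $s\in[0,t]$,
\[
\alpha l\int\left(\bar V'_{l,i}-\tau(\bar r_{l,i})\right)^2 d\mu_s^N \;\le\; C_{lsi}\,l^2\sum_{j=1}^{l-1}\int\frac{(D_{i-j}f_s^N)^2+(\tilde D_{i-j}f_s^N)^2}{f_s^N}\,d\lambda_s^N \;+\; \int d\mu_{i,s}(\bar r_{l,i},\bar p_{l,i})\,\log\!\left(\int \e{\alpha l\left(\bar V'_{l,i}-\tau(\bar r_{l,i})\right)^2}d\bar\lambda_l^{\bar r_{l,i},\bar p_{l,i}}\right).
\]

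The key point is then that, since $l<i<N$, the index window $\{i-l+1,\dots,i-1\}$ sits inside $\{1,\dots,N-1\}$, so $\sum_{j=1}^{l-1}(D_{i-j}f_s^N)^2$ is a sub-sum of the bracket defining $\mathcal D_N(s)$ in \eqref{eq:DN}, and similarly for $\tilde D$; hence the first term on the right is at most $C_{lsi}\,l^2\bigl(\mathcal D_N(s)+\widetilde{\mathcal D}_N(s)\bigr)$, \emph{with no extra combinatorial factor $l$} — this is the sole difference from Proposition~\thref{prop:1block}, where the overlap of windows forced a cube. Integrating in $s$ and inserting the bound $N\sigma\beta^{-1}\int_0^t(\mathcal D_N(s)+\widetilde{\mathcal D}_N(s))\,ds\le C(t)N$ of Proposition~\thref{prop:relative} controls the time integral of the first term by $C\,l^2/\sigma$. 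For the second term one uses the uniform bound \eqref{eq:8}, already proved inside Proposition~\thref{prop:1block} for this $\alpha$ (after the usual passage to the grand-canonical measure, valid for $l>l_0$), namely $\sup_{\ell,\bar p}\int \e{\alpha l(\bar V'_{l,i}-\tau(\ell))^2}d\bar\lambda_l^{\ell,\bar p}\le C$; since $d\mu_{i,s}$ is a probability measure, the time integral of the second term is $\le C\,t$. Dividing by $\alpha l$ yields
\[
\int_0^t\int\left(\bar V'_{l,i}-\tau(\bar r_{l,i})\right)^2 d\mu_s^N\,ds \;\le\; C\left(\frac{l}{\sigma}+\frac{t}{l}\right).
\]

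It remains to absorb $t/l$ into $l/\sigma$. Since $l=l(N)=[N^{1/4}\sigma(N)^{1/2}]$, the quantity $l^2/\sigma$ is of order $N^{1/2}$, so for $t\le T$ we have $t/l\le T/l=(l/\sigma)(T\sigma/l^2)\le l/\sigma$ once $T\sigma\le l^2$, which holds for all $N$ large; enlarging $l_0$ accordingly, the estimate becomes $\le C(T)\,l/\sigma$ for all admissible $l$, as claimed. I do not expect a genuine obstacle here: the entropy inequality, the logarithmic Sobolev input of Lemma~\thref{lem:LSI}, and the Gaussian change-of-variables computation behind \eqref{eq:8} are all reused verbatim from Proposition~\thref{prop:1block}, and the only new ingredient is the elementary bookkeeping remark that bounding a single window by the full Dirichlet forms costs $l^2$ rather than $l^3$.
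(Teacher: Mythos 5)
Your proposal is correct and follows essentially the same route as the paper, which proves this corollary precisely by rerunning the argument of Proposition \thref{prop:1block} without the summation over $i$ and bounding the single-window right-hand side of \eqref{eq:LSIlem} by the full Dirichlet forms, so that the factor $l^3$ drops to $l^2$ and division by $\alpha l$ gives $C(t/l + l/\sigma)$. Your final absorption of $t/l$ into $C(T)l/\sigma$ via the choice $l=[N^{1/4}\sigma^{1/2}]$ matches the paper's implicit convention for the second inequality.
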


\begin{prop}[One-and-a-half-block estimate] \thlabel{prop:1.5block}
Let $ a_i \in \{ p_i,  V'(r_i)\}$. Then, for any fixed $ l \le k \le N-l$, we have
\begin{align}
\int_0^t\int (\bar a_{l,k+l}-\bar a_{l,k})^2 d \mu_s^Nds \le
 C \left(\frac{1}{l}t + \frac{l}{\sigma} \right)
 \le C(T) \frac{l}{\sigma}.
\end{align}
\end{prop}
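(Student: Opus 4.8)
The plan is to repeat the scheme used for Proposition~\thref{prop:1block} and its corollary~\thref{prop:11block}, replacing the single block of size $l$ by the block of size $2l$ sitting on the sites $\{k-l+1,\dots,k+l\}$. Since $(\bar a_{l,k+l}-\bar a_{l,k})^2$ depends only on the coordinates $(r_j,p_j)_{k-l+1\le j\le k+l}$, integrating it against $\mu_s^N$ is the same as integrating against the marginal $\mu_{2l,k+l,s}$; I disintegrate this marginal over the joint block average $(\bar r_{2l,k+l},\bar p_{2l,k+l})=(\ell,\bar p)$, so that all the work takes place on the hyperplane $\Sigma_{2l}(\ell,\bar p)$ with the conditional measures $\bar\mu_{2l,k+l,s}^{\ell,\bar p}$ and $\bar\lambda_{2l}^{\ell,\bar p}$. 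The key point is that, conditioned on $(\ell,\bar p)$, the difference $\bar a_{l,k+l}-\bar a_{l,k}$ of the two half--block averages has fluctuations of size $O(l^{-1/2})$, so the relevant quantity to put in the exponent is $\alpha l(\bar a_{l,k+l}-\bar a_{l,k})^2$, exactly as $\alpha l(\bar V'_{l,i}-\tau(\bar r_{l,i}))^2$ appears in~\thref{prop:1block}.

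Concretely, on each fibre I use the entropy inequality
\begin{align*}
l\int(\bar a_{l,k+l}-\bar a_{l,k})^2\,d\bar\mu_{2l,k+l,s}^{\ell,\bar p}
&\le \frac1\alpha\int\bar f_{2l,k+l,s}^{\ell,\bar p}\log\bar f_{2l,k+l,s}^{\ell,\bar p}\,d\bar\lambda_{2l}^{\ell,\bar p}\\
&\quad +\frac1\alpha\log\int\e{\alpha l(\bar a_{l,k+l}-\bar a_{l,k})^2}\,d\bar\lambda_{2l}^{\ell,\bar p},
\end{align*}
then average over the marginal of $(\ell,\bar p)$ and integrate in time. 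The entropy term is handled by Lemma~\thref{lem:LSI} with $m=2l$, $i=k+l$: it is bounded by $\alpha^{-1}C_{lsi}(2l)^2\int_0^t\sum_{j=1}^{2l-1}\int\frac{(D_{k+l-j}f_s^N)^2+(\tilde D_{k+l-j}f_s^N)^2}{f_s^N}\,d\lambda_s^N\,ds\le\alpha^{-1}C_{lsi}(2l)^2\int_0^t(\mathcal D_N(s)+\widetilde{\mathcal D}_N(s))\,ds$, which is $\le Cl^2/\sigma$ by the Dirichlet--form bound~\eqref{eq:entr00}. Granting that the exponential moment $\sup_{\ell,\bar p}\int\e{\alpha l(\bar a_{l,k+l}-\bar a_{l,k})^2}\,d\bar\lambda_{2l}^{\ell,\bar p}\le C$ uniformly in $k$ and in $l>l_0$ for $\alpha$ small enough, the second term contributes $\le\alpha^{-1}Ct$; dividing by $l$ gives $C(\sigma^{-1}l+l^{-1}t)$, and since $l^{-1}t$ is negligible against $\sigma^{-1}l$ for our choice of $l$ and $\sigma$ (recall $\sigma/l^2\to 0$), this is $\le C(T)\sigma^{-1}l$.

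The remaining ingredient, the uniform exponential--moment bound, is obtained just as in~\eqref{eq:8}--\eqref{eq:9}. For $l>l_0$ the equivalence of ensembles lets one replace $\bar\lambda_{2l}^{\ell,\bar p}$ by $C\,\lambda^{2l}_{\beta,\bar p,\tau(\ell)}$, under which $\bar a_{l,k+l}-\bar a_{l,k}=l^{-1}\sum_{j=1}^l(a_{k+j}-a_{k-l+j})$ is an average of the $l$ independent, mean--zero pairs $(a_{k+j}-a_{k-l+j})_{j=1}^l$. Linearising the square with an auxiliary $\xi\sim\mathcal N(0,1)$ (so $\e{\alpha l X^2}=\E_\xi[\e{\xi\sqrt{2\alpha l}\,X}]$) and factorising over $j$, each factor is a product of two one--site exponential moments $\int\e{\pm\theta V'(r)}\,d\lambda_{\beta,\bar p,\tau(\ell)}\le\e{\tfrac{C}{2}\theta^2\pm\tau(\ell)\theta}$ (a consequence of the uniform convexity~\eqref{eq:Vconvex}; for $a=p$ it is the explicit Gaussian computation), with $\theta=\xi\sqrt{2\alpha/l}$. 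The $\pm\tau(\ell)\theta$ terms cancel in each pair, so the $l$--fold product is at most $\e{lC\theta^2}=\e{2C\alpha\xi^2}$, and the $\xi$--expectation is finite precisely when $\alpha<(4C)^{-1}$; the bound is independent of $\ell,\bar p,k$ and $l$.

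I expect the exponential--moment estimate to be the only genuinely delicate point, although it closely mirrors the computation already performed in Proposition~\thref{prop:1block}. A minor issue is that Lemma~\thref{lem:LSI} is phrased for strictly interior blocks ($m<i<N$), so for the extremal values $k=l$ and $k=N-l$ one must instead use the logarithmic Sobolev inequality for the block abutting a boundary site, with one of $D_{k+l-j},\tilde D_{k+l-j}$ replaced by the boundary operator $D_0$ or $\tilde D_N$; this leaves all the above estimates unchanged.
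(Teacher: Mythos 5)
Your argument is correct and delivers the stated rate, but it follows a genuinely different route from the paper. You transplant the whole one-block machinery to a block of size $2l$: entropy inequality on the fibres of the conditioning on $(\bar r_{2l,k+l},\bar p_{2l,k+l})$, the logarithmic Sobolev bound of \thref{lem:LSI} (giving the $l^2/\sigma$ term via \eqref{eq:entr00}), an equivalence-of-ensembles replacement of $\bar\lambda_{2l}^{\ell,\bar p}$ by the product measure $\lambda^{2l}_{\beta,\bar p,\tau(\ell)}$, and the Gaussian linearisation of $\alpha l(\bar a_{l,k+l}-\bar a_{l,k})^2$, where your observation that the $\pm\tau(\ell)\theta$ terms cancel pairwise is exactly what makes the exponential moment uniform in $\ell,\bar p$. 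The paper instead avoids all of this: it uses the identity $V'(r_{i+l})-V'(r_i)=-\beta^{-1}(\partial_{r_{i+l}}-\partial_{r_i})\log$-density of $\lambda^N_s$, writes the square as $\tfrac1l\sum_{i=k-l+1}^k(V'(r_{i+l})-V'(r_i))(\bar V'_{l,k+l}-\bar V'_{l,k})$, integrates by parts, and after Cauchy--Schwarz and telescoping the long gradient of $f^N_s$ over the $l$ intermediate bonds obtains the pathwise bound $C(1/l+l\,\tilde{\mathcal D}_N(s))$, which integrated in time gives the claim directly (the $a_i=p_i$ case being the analogous Gaussian integration by parts in $p$). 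The comparison: the paper's proof is more elementary, needs no restriction $l>l_0$, no conditioning, no microcanonical-to-canonical exponential-moment comparison (a step which, also in \thref{prop:1block}, is only asserted for $l>l_0$), and no care at $k=l$ or $k=N-l$; your proof is heavier but more robust, since it never uses the special gradient structure $V'=-\beta^{-1}\partial_r\log\lambda$ and would apply to any block observable with uniform canonical exponential moments and conditional mean zero. Your boundary caveat is in fact harmless: for $k=l$ or $k=N-l$ the LSI on the hyperplane only involves the bond operators interior to the block, all of which already appear in $\mathcal D_N$ and $\tilde{\mathcal D}_N$, so no boundary generator is needed. The extra hypothesis $l>l_0$ that your route imports is immaterial for the application, since $l=[N^{1/4}\sigma^{1/2}]\to\infty$.
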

\begin{proof}
We consider $a_i = V'(r_i)$, as the case $a_i=p_i$ is analogous. Thanks to the identity
\begin{align}
V'(r_i)-V'(r_j) = - \beta^{-1} \left( \frac{\partial \lambda_s^N}{\partial r_i}-\frac{\partial \lambda_s^N}{\partial r_j}\right)
\end{align}
we compute
\begin{align}
\int (\bar V'_{l,k+l}-\bar V'_{l,k})^2 f_s^N d \lambda_s^N = & \frac{1}{l} \sum_{i=k-l+1}^k\int(V'(r_{i+l})-V'(r_i))(\bar V'_{l,k+l}-\bar V'_{l,k}) f_s^N d \lambda_s^N
\\
 =& \frac{\beta^{-1}}{l^2} \sum_{i=k-l+1}^k \int (V''(r_{i+l})+V''(r_i)) d \mu_s^N + \nonumber
\\
&+ \int (\bar V'_{l,k+l}-\bar V'_{l,k}) \left( \frac{\beta^{-1}}{l} \sum_{i=k-l+1}^k \left( \frac{\partial f_s^N}{\partial r_{i+l}}-\frac{\partial f_s^N}{\partial r_i}\right) \right) d \lambda_s^N. \nonumber
\end{align}
By using Cauchy-Schwartz inequality on the last term and the fact that $V''$ is bounded, we obtain
\begin{align}
\int (\bar V'_{l,k+l}-\bar V'_{l,k})^2 f_s^N d \lambda_s^N & \le C \left(\frac{1}{l} + \frac{1}{l} \int \frac{1}{f_s^N}\sum_{i=k-l+1}^k \left( \frac{\partial f_s^N}{\partial r_{i+l}}-\frac{\partial f_s^N}{\partial r_i} \right)^2  d\lambda_s^N\right) \nonumber
\\
& = C \left( \frac{1}{l} + \frac{1}{l}\int \frac{1}{f_s^N} \sum_{i=k-l+1}^k \left( \sum_{j=i}^{i+l-1} \left( \frac{\partial f_s^N}{\partial r_{j+1}}-\frac{\partial f_s^N}{\partial r_j}\right) \right)^2d \lambda_s^N \right) \nonumber
\\
& \le C \left( \frac{1}{l} + \int \frac{1}{f_s^N}\sum_{i=k-l+1}^k \sum_{j=i}^{i+l-1}\left( \frac{\partial f_s^N}{\partial r_{j+1}}-\frac{\partial f_s^N}{\partial r_j}\right)^2d \lambda_s^N \right) \label{eq:1.5block}
\\
& \le C \left( \frac{1}{l}+\int \frac{1}{f_s^N} \sum_{i=k-l+1}^k \sum_{j=1}^{N-1} \left( \frac{\partial f_s^N}{\partial r_{j+1}}-\frac{\partial f_s^N}{\partial r_j}\right)^2 d \lambda_s^N \right) \nonumber
\\
& \le C \left(\frac{1}{l}+ l \tilde{\mathcal D}_N(s) \right). \nonumber
\end{align}
The conclusion then follows after an integration in time.
\end{proof}

\begin{prop}[Two-block estimate] \thlabel{prop:2block}
Let $ a_i \in \{ p_i, V'(r_i), r_i \}$ and $l_0$ be as in \thref{prop:1block}. Then, for $l_0 < l \le  N$, we have
\begin{align}
\sum_{i=l+1}^{N-l} \int_0^t \left( \bar a_{l,i+l}- \bar a_{l,i} \right)^2 d \mu_s^N ds \le C \left( \frac{N}{l} t+ \frac{l^2}{\sigma} \right)
\le C(T) \frac{l^2}{\sigma}.
\end{align}
\end{prop}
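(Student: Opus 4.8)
The plan is to bootstrap the statement from the one-and-a-half-block estimate \thref{prop:1.5block} for the two cases $a_i \in \{p_i, V'(r_i)\}$, and then to obtain the remaining case $a_i = r_i$ by passing through the tension $\tau$ with the help of the one-block estimate \thref{prop:1block}.

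First I would treat $a_i \in \{p_i, V'(r_i)\}$. The point is to redo the computation that led to \eqref{eq:1.5block} but keeping the sum over $i$ from the outset instead of looking at a single index. Writing $\bar a_{l,i+l}-\bar a_{l,i} = \frac1l\sum_{k=i-l+1}^{i}(a_{k+l}-a_k)$ and using, on one of the two factors, the identity $a_{k+l}-a_k = -\beta^{-1}\bigl(\partial_{x_{k+l}}-\partial_{x_k}\bigr)\log\lambda_s^N$ (with $x=p$, resp.\ $x=r$), exactly the same Cauchy--Schwarz steps as in the proof of \thref{prop:1.5block} give, for each fixed $i$,
\[
\int (\bar a_{l,i+l}-\bar a_{l,i})^2 f_s^N\, d\lambda_s^N \le C\left(\frac1l + \int \frac{1}{f_s^N}\sum_{k=i-l+1}^{i}\sum_{j=k}^{k+l-1}\Bigl(\partial_{x_{j+1}} f_s^N-\partial_{x_j} f_s^N\Bigr)^2 d\lambda_s^N\right).
\]
Summing over $i$ from $l+1$ to $N-l$, the first term contributes $C\,N/l$, while in the triple sum $\sum_i\sum_{k=i-l+1}^{i}\sum_{j=k}^{k+l-1}$ a fixed $j$ is counted at most $l^2$ times (for each of the $\le l$ admissible $k$ there are $\le l$ admissible $i$). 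Hence the triple sum is bounded by $l^2\sum_j(\partial_{x_{j+1}} f_s^N-\partial_{x_j} f_s^N)^2$, so
\[
\sum_{i=l+1}^{N-l}\int (\bar a_{l,i+l}-\bar a_{l,i})^2 f_s^N\, d\lambda_s^N \le C\left(\frac Nl + l^2\,\tilde{\mathcal D}_N(s)\right),
\]
with $\mathcal D_N(s)$ replacing $\tilde{\mathcal D}_N(s)$ when $a_i=p_i$. Integrating in $s\in[0,t]$ and invoking the Dirichlet-form bound $\int_0^t(\mathcal D_N+\tilde{\mathcal D}_N)\,ds\le C(t)/\sigma$ from \eqref{eq:entr00} yields $\sum_i\int_0^t(\bar a_{l,i+l}-\bar a_{l,i})^2 d\mu_s^N\,ds \le C\bigl(\tfrac Nl t+\tfrac{l^2}{\sigma}\bigr)$; that this is $\le C(T)\tfrac{l^2}{\sigma}$ follows exactly as in \thref{prop:1block}, since the choice $l=[N^{1/4}\sigma^{1/2}]$ makes $N\sigma/l^3\to0$.

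For $a_i=r_i$ I would combine the two previous ingredients. Since $\tau'\ge c_2^{-1}>0$, the mean value theorem gives $|\bar r_{l,i+l}-\bar r_{l,i}|\le c_2\,|\tau(\bar r_{l,i+l})-\tau(\bar r_{l,i})|$, hence
\[
(\bar r_{l,i+l}-\bar r_{l,i})^2 \le 3c_2^2\Bigl[(\tau(\bar r_{l,i+l})-\bar V'_{l,i+l})^2 + (\bar V'_{l,i+l}-\bar V'_{l,i})^2 + (\bar V'_{l,i}-\tau(\bar r_{l,i}))^2\Bigr].
\]
Summing over $i$ and integrating in time, the middle term is controlled by the two-block estimate for $V'$ just established, while the first and third are controlled by \thref{prop:1block}; all three are $O\bigl(\tfrac Nl t+\tfrac{l^2}{\sigma}\bigr)$, which is the assertion. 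The minor index mismatch at $i=N-l$ (where $i+l=N$ lies just outside the range of \thref{prop:1block}) is handled by the single-index bound \thref{prop:11block}, which is more than enough for one extra term.

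The only genuinely delicate point is the combinatorial bookkeeping in the first step: a naive summation of \thref{prop:1.5block} over the $\sim N$ values of $i$ would cost an extra factor $N$ and leave the useless bound $N l^2\,\tilde{\mathcal D}_N$. The whole proof hinges on observing that each microscopic gradient $(\partial_{x_{j+1}} f_s^N-\partial_{x_j} f_s^N)^2$ reappears with multiplicity at most $l^2$, not $N l$, in the summed triple sum; everything else is routine.
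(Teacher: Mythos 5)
Your proof is correct and follows essentially the same route as the paper: you redo the computation of \eqref{eq:1.5block} summed over $i$, count that each squared gradient $(\partial_{r_{j+1}}f_s^N-\partial_{r_j}f_s^N)^2$ appears with multiplicity at most $l^2$ so the sum is bounded by $C(N/l+l^2\tilde{\mathcal D}_N(s))$, and then integrate in time using \eqref{eq:entr00}; the case $a_i=r_i$ is reduced, exactly as in the paper, to the $V'$ case plus \thref{prop:1block} via $\tau'\ge c_2^{-1}$ and the triangle inequality. Your extra remark on the boundary index $i+l=N$ is a detail the paper itself glosses over, and does not change the argument.
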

\begin{proof}
We prove the statement for $ a_i = V'(r_i)$. From \eqref{eq:1.5block} we have
\begin{align}
\sum_{i=l+1}^{N-l} \int (\bar V'_{l,i+l}-\bar V'_{l,i})^2 f_s d \lambda_s^N & \le C \left( \frac{N}{l} +\int \frac{1}{f_s^N} \sum_{i=l}^{N-l} \sum_{k=i-l+1}^i \sum_{j=k}^{k+l-1} \left( \frac{\partial f_s^N}{\partial r_{j+1}} -\frac{\partial f_s^N}{\partial r_j} \right)^2 d \lambda_s^N\right) \nonumber
\\
& \le C \left( \frac{N}{l} +l \int \frac{1}{f_s^N} \sum_{i=l}^{N-l} \sum_{j=i-l+1}^{i+l-1}\left( \frac{\partial f_s^N}{\partial r_{j+1}} -\frac{\partial f_s^N}{\partial r_j} \right)^2 d \lambda_s^N \right) \nonumber
\\
& \le C \left( \frac{N}{l} +l^2 \int \frac{1}{f_s^N} \sum_{i=l}^{N-l}\left( \frac{\partial f_s^N}{\partial r_{i+1}} -\frac{\partial f_s^N}{\partial r_i} \right)^2 d \lambda_s^N \right)
\\
& \le C \left( \frac{N}{l} + l^2 \tilde{\mathcal D}_N(s) \right). \nonumber
\end{align}
The conclusion then follows after a time integration. The proof for $a_i= p_i$ is analogous. Finally, since $\tau'$ is bounded from below by a positive constant, we have
\begin{align}
(\bar r_{l,i+l}-\bar r_{l,i})^2 &\le C \left( \tau(\bar r_{l,i+l})-\tau(\bar r_{l,i})\right)^2
\\
& \le C \left[ \left( \tau(\bar r_{l,i+l})-\bar V'_{l,i+l}\right)^2 + \left( \bar V'_{l,i}-\tau(\bar r_{l,i})\right)^2 + \left(\bar V'_{l,i+l}-\bar V'_{l,i}\right)^2 \right] \nonumber
\end{align}
and the statement for $a_i = r_i$ follows from the fist part of the proof and \thref{prop:1block}.
\end{proof}

We conclude this section by showing the connection between the averages $\hat a_{l,i}$ and $\bar a_{l,i}$.
\begin{lem} \thlabel{lem:2blockhat}
For any sequence $(a_i)_{i \in \mathbb N}$, any $l \in \mathbb N$ and any $i \ge l$, we have
\begin{align}
\nabla \hat a_{l,i}= \frac{1}{l}(\bar a_{l,i+l}-\bar a_{l,i})
\end{align}
\end{lem}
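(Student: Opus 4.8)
The plan is to recognise that the triangular (Bartlett) weights $\frac{l-|j|}{l^{2}}$ appearing in the smoothed average $\hat a_{l,i}$ are nothing but the discrete autoconvolution of the flat weights $\frac1l$ appearing in the block average $\bar a_{l,i}$. Hence $\hat a_{l,i}$ is itself a flat average of $l$ consecutive block averages, and the discrete gradient of such an average telescopes into a difference of the two outermost terms. So the proof reduces to one elementary identity plus a one-line telescoping.

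Concretely, I would first establish that, for $i\ge l$,
\[
\hat a_{l,i}\;=\;\frac1l\sum_{k=0}^{l-1}\bar a_{l,i+k}.
\]
To prove this, expand the right-hand side as $\frac{1}{l^{2}}\sum_{k=0}^{l-1}\sum_{j=1}^{l}a_{i+k-j+1}$ and set $m=k-j+1$. As $(k,j)$ runs over $\{0,\dots,l-1\}\times\{1,\dots,l\}$ the index $m$ runs over $\{-(l-1),\dots,l-1\}$, and for each such $m$ the number of pairs $(k,j)$ with $k-j+1=m$ is exactly $l-|m|$ (checking separately the cases $m\ge 0$ and $m<0$, where the binding constraints on $j$ are respectively $1\le j\le l-m$ and $1-m\le j\le l$). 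Therefore the right-hand side equals $\frac{1}{l^{2}}\sum_{|m|<l}(l-|m|)a_{i+m}$, which is precisely $\hat a_{l,i}=\frac{1}{l^{2}}\sum_{|j|<l}(l-|j|)a_{i-j}$ after the relabelling $j=-m$.

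Given this identity, the conclusion is immediate: the sums in $\hat a_{l,i+1}=\frac1l\sum_{k=0}^{l-1}\bar a_{l,i+1+k}$ and $\hat a_{l,i}=\frac1l\sum_{k=0}^{l-1}\bar a_{l,i+k}$ differ only by their last and first term, so
\[
\nabla\hat a_{l,i}=\hat a_{l,i+1}-\hat a_{l,i}=\frac1l\Big(\bar a_{l,i+l}-\bar a_{l,i}\Big).
\]
The only point needing mild care is the counting of index pairs in the identity above (the split into $m\ge0$ and $m<0$); this is the main, though entirely routine, obstacle. If one prefers to avoid the auxiliary identity altogether, one can differentiate the triangular weights directly: shifting the summation index in $\hat a_{l,i+1}$ and subtracting $\hat a_{l,i}$ leaves on the common range $\{-(l-1),\dots,l-2\}$ the coefficient $|j|-|j+1|$, which equals $-1$ when $j\ge0$ and $+1$ when $j\le-1$, together with the two boundary contributions $a_{i+l}$ and $-a_{i-l+1}$; collecting these sums again yields $\frac1l(\bar a_{l,i+l}-\bar a_{l,i})$.
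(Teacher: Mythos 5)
Your proof is correct, and it takes a genuinely different route from the paper. The paper argues by induction on $l$: it verifies the case $l=1$, then splits the triangular sum for $l+1$ into a rescaled copy of the $l$-sum plus a telescoping remainder and invokes the inductive hypothesis. You instead prove the structural identity
\begin{equation*}
\hat a_{l,i}=\frac1l\sum_{k=0}^{l-1}\bar a_{l,i+k},
\end{equation*}
i.e.\ that the Bartlett weights $\tfrac{l-|j|}{l^2}$ are the autoconvolution of the flat weights $\tfrac1l$, and your index-pair counting (the split $m\ge 0$ / $m<0$ giving $l-|m|$ pairs) is correct; the lemma then follows by a one-line telescoping, and the index constraints are consistent with $i\ge l$. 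Your alternative computation with the differenced coefficients $|j|-|j+1|$ also checks out. What your route buys is a non-inductive, more conceptual proof that exhibits $\hat a_{l,i}$ explicitly as a flat average of $l$ consecutive block averages, which makes the identity transparent and could be reused elsewhere; the paper's induction is slightly shorter to verify line by line but leaves the underlying convolution structure implicit. Either argument is complete and acceptable.
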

\begin{proof}
We prove the statement by induction over $l$. The statement for $l=1$ is obvious, since both $\hat a_{1,i+1}-\hat a_{1,i}$ and $\bar a_{1,i+1}-\bar a_{1,i}$ are equal to $a_{i+1} -a_i$.

Assume now the statement is true for some $l \ge 1$. We prove it holds for $l+1$ as well. We have
\begin{align}
\hat a_{l+1,i+1}-\hat a_{l+1,i}& =  \frac{1}{l+1} \sum_{|j| <l+1} \frac{l+1-|j|}{l+1}a_{i+1-j}-\frac{1}{l+1} \sum_{|j| <l+1} \frac{l+1-|j|}{l+1}a_{i-j}
\\
&= \frac{1}{(l+1)^2} \sum_{|j| <l+1} (l+1-|j|)(a_{i+1-j}-a_{i-j}) \nonumber
\\
&= \frac{l^2}{(l+1)^2} \frac{1}{l} \sum_{|j|<l} \frac{l-|j|}{l}(a_{i+1-j}-a_{i-j}) +\frac{1}{(l+1)^2} \sum_{|j| < l+1} (a_{i+1-j}-a_{i-j}). \nonumber
\end{align}
For the first summation we can use the inductive hypothesis, while the second summation is telescopic. Therefore we obtain
\begin{align}
\hat a_{l+1,i+1}-\hat a_{l+1,i} &= \frac{1}{(l+1)^2} \sum_{j=1}^l (a_{i+l-j+1}-a_{i-j+1}) + \frac{1}{(l+1)^2} (a_{i+l+1}- a_{i-l})
\\
&= \frac{1}{(l+1)^2} \sum_{j=1}^{l+1} (a_{i+l-j+1}-a_{i-j+1}) \nonumber
\\
&= \frac{1}{l+1}(\bar a_{l+1,i+l+1}-\bar a_{l+1,i}). \nonumber
\end{align}
\end{proof}
Combining \thref{prop:2block} and \thref{lem:2blockhat} we get the following
\begin{cor} \thlabel{cor:2block}
Let  $a_i \in \{ p_i, V'(r_i), r_i\}$ and $l_0$ be as in \thref{prop:1block}. Then, for $l_0 < l<i   <N-l+1$,
\begin{align}
\sum_{j=l+1}^{N-l} \int_0^t\int \left(\nabla \hat a_{l,j} \right)^2 d \mu_s^N ds &\le C \left( \frac{N}{l^3} t+ \frac{1}{\sigma} \right)
\le C(T) \frac{1}{\sigma}
\\
 \int_0^t\int \left(\nabla \hat a_{l,i} \right)^2 d \mu_s^N ds& \le C \left( \frac{1}{l^3} t+ \frac{1}{l\sigma} \right)
\le C(T) \frac{1}{l\sigma}.
\end{align}

We now show that the two averages we defined are equivalent in the limit.
\begin{prop}[One-block comparison]
Let $a_i \in \{p_i, V'(r_i)\}$. Then, for any $l \le i \le N-l$, we have
\begin{align}
\int_0^t\int  \left( \bar a_{l,i}- \hat a_{l,i} \right)^2 d \mu_s^N ds \le C \left( \frac{1}{l} t+ \frac{l}{\sigma} \right)
\le C(T) \frac{l}{\sigma}.
\end{align}
\end{prop}
\begin{proof}
We prove the statement for $a_i = V'(r_i)$, the proof for $a_i = p_i$ being analogous. We can write
\begin{align}
\bar a_{l,i}-\hat a_{l,i} = \frac{1}{l} \sum_{j=0}^{l-1} \frac{j}{l} \left( a_{i-j}-a_{i-j+l} \right).
\end{align}
Thus,
\begin{align}
&\int (\bar V'_{l,i}-\hat V'_{l,i})^2 d \mu_s^N  = \int\frac{1}{l} \sum_{j=0}^{l-1} \frac{j}{l} \left(V'(r_{i-j})-V'(r_{i-j+l})\right) (\bar V'_{l,i}-\hat V'_{l,i}) f_s^N d \lambda_s^N
\\
 =& \beta^{-1} \frac{1}{l^2} \int \sum_{j=0}^{l-1} \sum_{k=0}^{l-1} \frac{jk}{l^2}\left( \left( \frac{\partial}{\partial r_{i-j}}-\frac{\partial}{\partial r_{i-j+l}} \right)  \left(V'(r_{i-k})-V'(r_{i-k+l}) \right)\right)f_s^N d\lambda_s^N+ \nonumber
\\
& \nonumber + \beta^{-1} \int (\bar V'_{l,i}-\hat V'_{l,i}) \sqrt{f_s^N} \frac{1}{l} \sum_{j=0}^{l-1} \frac{j}{l} \frac{1}{\sqrt{f_s^N}} \left(  \frac{\partial f_s^N}{\partial r_{i-j}}-\frac{\partial f_s^N}{\partial r_{i-j+l}} \right) d \lambda_s^N \nonumber
\\
 \le& \beta^{-1} \frac{1}{l^2} \int \sum_{j=0}^{l-1} \frac{j^2}{l^2} \left( V''(r_{i-j})+V''(r_{i-j+l})\right) f_s^N d \lambda_s^N+ \nonumber
\\
&+ \frac{1}{2} \int (\bar V'_{l,i}-\hat V'_{l,i})^2 f_s^N d \lambda_s^N + \frac{\beta^{-2}}{2} \int \frac{1}{l} \sum_{j=0}^{l-1} \frac{j^2}{l^2} \frac{1}{f_s^N} \left( \frac{\partial f_s^N}{\partial r_{i-j}}-\frac{\partial f_s^N}{\partial r_{i-j+l}} \right)^2 d \lambda_s^N. \nonumber
\end{align}
Thus, we obtain
\begin{align}
\int (\bar V'_{l,i}-\hat V'_{l,i})^2 d \mu_s^N \le 2\beta^{-1} \frac{\|V''\|_\infty}{l} + \beta^{-2} \int \frac{1}{l} \sum_{j=0}^{l-1} \frac{1}{f_s^N} \left( \frac{\partial f_s^N}{\partial r_{i-j}}-\frac{\partial f_s^N}{\partial r_{i-j+l}} \right)^2 d \lambda_s^N
\end{align}
and the conclusion follows as in the proof of \thref{prop:1.5block}.
\end{proof}

%\begin{cor}
%For any integers $k,l$ with $0 \le k \le l$ and $l \le h \le N-h$ we have
%\begin{align}
%&\int_0^t \int \left( \hat p_{l,l+k}(s)\right)^2 d \mu_s^N ds \le C \left( \frac{1}{l}t+\frac{l}{\sigma}\right)
%\le C(T) \frac{l}{\sigma}
%\\
%& \int_0^t \int \left( \hat V'_{l,N-k}(s)-\bar \tau(s)\right)^2 d \mu_s^N ds \le C \left( \frac{1}{l}t+\frac{l}{\sigma}\right)
%\le C(T) \frac{l}{\sigma}
%\\
%& \int_0^t \int \left( \hat p_{N-h}(s)\right)^2 d \mu_s^N ds \le C \left( \frac{1}{l}t+\frac{N}{\sigma}\right)
%\le C(T) \frac{N}{\sigma}
%\\
%& \int_0^t \int \left(\hat V'_{l,k+l}(s)\right)^2 d \mu_s^N ds \le C \left( t+\frac{N}{\sigma}\right)
%\le C(T) \frac{N}{\sigma}.
%\end{align}
%\end{cor}

\begin{prop}[Block average comparison]\thlabel{prop:averages}
Let  $a_i \in \{p_i, V'(r_i), r_i\}$ and $l_0$ as in \thref{prop:1block}. Then, for $l_0 < l \le N$ we have
\begin{align}
\sum_{i=l+1}^{N-l} \int_0^t \int \left( \bar a_{l,i}- \hat a_{l,i} \right)^2 d \mu_s^N ds \le C \left( \frac{N}{l} t+ \frac{l^2}{\sigma} \right)
\le C(T) \frac{l^2}{\sigma}.
\end{align}
\end{prop}
\begin{proof}
The statement for $a_i \in \{p_i, V'(r_i)\}$ is obtained by the previous proposition after summing over $i$ as in the proof of \thref{prop:2block}.

In order to prove the statement for $a_i= r_i$ we write
\begin{align}
\bar r_{l,i}-\hat r_{l,i} = \frac{1}{l} \sum_{|j|<l} c_j r_{i-j},
\end{align}
where $|c_j|<1$ and $\sum_{|j|<l}c_j =0$. Let $\alpha >0$ to be chosen later and follow the proof of \thref{prop:1block} in order to obtain
\begin{align} \label{eq:rav1}
\alpha \sum_{i=l+1}^{N-l} \int_0^t \int \left( \bar r_{l,i}- \hat r_{l,i} \right)^2 d \mu_s^N ds &\le C\frac{l^3}{\sigma}+  \sum_{i=l+1}^{N-l} \int_0^t\int \log\left(  \int\e{\alpha \left(\bar r_{l,i}-\hat r_{l,i}\right)^2} d\hat \lambda_{2l-1,i+l-1}^{\ell, \bar p}\right)d \mu_s^Nds.
\end{align}
We introduce a normally distributed $\xi \sim \mathcal N(0,1)$ and write, for large enough $l$,
\begin{align}
 \int\e{\alpha \left(\bar r_{l,i}-\hat r_{l,i}\right)^2} d\hat \lambda_{2l-1,i+l-1}^{\ell, \bar p} & \le C  \int\e{\alpha \left(\bar r_{l,i}-\hat r_{l,i}\right)^2} d\hat \lambda^N_{\beta,\bar p,\tau(\ell)}
\\ 
& =C \mathbb E_\xi \left[  \int\e{\sqrt{2\alpha} \xi \left(\bar r_{l,i}-\hat r_{l,i}\right)} d\hat \lambda^N_{\beta,\bar p,\tau(\ell)}\right] \nonumber
\\
& = C \mathbb E_\xi \left[  \int\e{\sqrt{2\alpha} \xi\frac{1}{l} \sum_{|j|<l} c_j r_{i-j} }d\hat \lambda^N_{\beta,\bar p,\tau(\ell)}\right] \nonumber
\\
& = C \mathbb E_\xi \left[\prod_{|j|<l}  \e{G\left( \tau(\ell)+\frac{\sqrt{2\alpha}\xi}{l}c_j\right)-G( \tau(\ell)) } \right] \nonumber
\\
& = C \mathbb E_\xi \left[ \e{ \sum_{|j|<l }  \left(  G'(\tau(\ell)) \frac{\sqrt{2\alpha}\xi}{l}c_j +  G''(\tilde \tau) \frac{2 \alpha \xi^2}{l^2}c_j^2 \right)}\right] \nonumber
\end{align}
for some intermediate value $\tilde \tau$. But since $\sum_{|j|<l} c_j=0$, $|c_j|<1$ and $ G''$ is bounded, we can estimate
\begin{align} \label{eq:rav2}
 \int\e{\alpha \left(\bar r_{l,i}-\hat r_{l,i}\right)^2} d\hat \lambda_{2l-1,i+l-1}^{\ell, \bar p} & \le C \mathbb E_\xi \left[ \e{ \frac{6 \alpha \|G''\|_\infty }{l}\xi^2} \right]  = C \frac{1}{\sqrt{1-\dfrac{12 \alpha \| G''\|_\infty}{l}}},
\end{align}
provided $\dfrac{6 \alpha \|G''\|_\infty}{l} < \dfrac{1}{2}$. Note that the right-hand side of \eqref{eq:rav2} does not depend on $\ell$ and $\bar p$. Thus, combining \eqref{eq:rav1} and \eqref{eq:rav2} and choosing $\alpha < \dfrac{1}{12\| G''\|_\infty}l$ leads to the conclusion.
\end{proof}
\begin{cor} \thlabel{prop:11block}
There exists $l_0 \in \mathbb N$ such that, for $l_0 < l < i < N-l+1$, we have
\begin{equation}
 \int_0^t \int \left( \bar r_{l,i} - \hat r_{l,i}  \right)^2 d \mu_s^N ds \le C \left( \frac{1}{l} t+ \frac{l}{\sigma} \right)
\le C(T) \frac{l}{\sigma}.
\end{equation}
\end{cor}

\begin{cor} \thlabel{cor:1block}
Let $l_0$ be as in \thref{prop:1block}. Then, for $l_0 < l < i <  N-l+1$ we have
\begin{align}
\sum_{j=l+1}^{N-l}\int_0^t \int \left(\hat V'_{l,j} - \tau(\hat r_{l,j}) \right)^2 d \mu_s^Nds &\le C \left( \frac{N}{l}t + \frac{l^2}{\sigma} \right)
\le C(T) \frac{l^2}{\sigma}
\\
\int_0^t \int \left(\hat V'_{l,i} - \tau(\hat r_{l,i}) \right)^2 d \mu_s^Nds &\le C \left( \frac{1}{l}t + \frac{l}{\sigma} \right) \nonumber
\le C(T) \frac{l}{\sigma}
\end{align}
\end{cor}
\begin{proof}
Follows from  \thref{prop:1block}, \thref{prop:averages} and the inequality
\begin{align}
\left(\hat V'_{l,i} - \tau(\hat r_{l,i}) \right)^2 & \le 3 \left[\left(\hat V'_{l,i}- \bar V'_{l,i} \right)^2 + \left( \bar V'_{l,i}-\tau(\bar r_{l,i})\right)^2 + \left( \tau (\bar r_{l,i})-\tau(\hat r_{l,i})\right)^2 \right]
\\
&  \le C \left[\left(\hat V'_{l,i}- \bar V'_{l,i} \right)^2 + \left( \bar V'_{l,i}-\tau(\bar r_{l,i})\right)^2 + \left(  \bar r_{l,i}- \hat r_{l,i}\right)^2 \right] \nonumber
\end{align}
\end{proof}

\end{cor}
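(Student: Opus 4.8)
The plan is to obtain \thref{cor:2block} directly from the two-block estimate \thref{prop:2block}, the one-and-a-half-block estimate \thref{prop:1.5block}, and the algebraic identity of \thref{lem:2blockhat}, which expresses the smoothed gradient as a rescaled difference of ordinary block averages at distance $l$:
\begin{align*}
\left(\nabla \hat a_{l,j}\right)^2 = \frac{1}{l^2}\left(\bar a_{l,j+l} - \bar a_{l,j}\right)^2 .
\end{align*}
After multiplying by $l^2$, every claim about $\int_0^t\int (\nabla \hat a_{l,j})^2\,d\mu_s^N\,ds$ becomes a claim about $\int_0^t\int (\bar a_{l,j+l} - \bar a_{l,j})^2\,d\mu_s^N\,ds$ that has already been established, so the corollary is essentially a bookkeeping step.

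For the summed bound I would sum the identity over $l+1 \le j \le N-l$, integrate against $d\mu_s^N\,ds$, and apply \thref{prop:2block}, which already contains the final estimate:
\begin{align*}
\sum_{j=l+1}^{N-l}\int_0^t\int \left(\nabla \hat a_{l,j}\right)^2 d\mu_s^N\,ds
= \frac{1}{l^2}\sum_{j=l+1}^{N-l}\int_0^t\int \left(\bar a_{l,j+l} - \bar a_{l,j}\right)^2 d\mu_s^N\,ds
\le \frac{C}{l^2}\left(\frac{N}{l}t + \frac{l^2}{\sigma}\right)
= C\left(\frac{N}{l^3}t + \frac{1}{\sigma}\right),
\end{align*}
the bound $\le C(T)/\sigma$ following from $t\le T$ together with the second inequality of \thref{prop:2block}. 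For the single-site bound I would use exactly the same identity but invoke \thref{prop:1.5block} in place of \thref{prop:2block}; for $a_i \in \{p_i, V'(r_i)\}$ this gives at once
\begin{align*}
\int_0^t\int \left(\nabla \hat a_{l,i}\right)^2 d\mu_s^N\,ds
\le \frac{C}{l^2}\left(\frac{t}{l} + \frac{l}{\sigma}\right)
= C\left(\frac{t}{l^3} + \frac{1}{l\sigma}\right) \le C(T)\frac{1}{l\sigma}.
\end{align*}

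The only point requiring an extra line is the case $a_i = r_i$ in the single-site estimate, since \thref{prop:1.5block} is stated only for $p_i$ and $V'(r_i)$: there I would use that $\tau'$ is bounded below by a positive constant to write $(\bar r_{l,i+l} - \bar r_{l,i})^2 \le C\,(\tau(\bar r_{l,i+l}) - \tau(\bar r_{l,i}))^2$ and then split, exactly as at the end of the proof of \thref{prop:2block}, into the three terms $(\tau(\bar r_{l,i+l}) - \bar V'_{l,i+l})^2$, $(\bar V'_{l,i} - \tau(\bar r_{l,i}))^2$, and $(\bar V'_{l,i+l} - \bar V'_{l,i})^2$; the first two are controlled by the one-block estimate \thref{prop:11block} and the third by \thref{prop:1.5block}, each with the bound $C(t/l + l/\sigma)$, after which division by $l^2$ closes the argument. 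This last sub-case is the only mild obstacle; everything else is immediate once \thref{lem:2blockhat} is applied.
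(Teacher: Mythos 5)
Your proposal is correct and follows essentially the same route the paper takes: the paper derives \thref{cor:2block} precisely by combining the identity of \thref{lem:2blockhat} with \thref{prop:2block} for the summed bound and with the single-site estimates (\thref{prop:1.5block} and the unsummed one-block corollary) for the pointwise bound. Your extra step reducing the single-site $r_i$ case to the $V'$ case via the lower bound on $\tau'$ is exactly the argument used at the end of the proof of \thref{prop:2block}, so you have simply made explicit a detail the paper leaves implicit.
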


\section*{Acknowledgments} 
This work has been partially supported by the grants ANR-15-CE40-0020-01 LSD 
of the French National Research Agency.

\addcontentsline{toc}{chapter}{References}

\renewcommand{\bibname}{References}
\nocite{*}
\bibliography{bibliografia_new}
\bibliographystyle{plain}

\noindent
{Stefano Olla\\
CEREMADE, UMR-CNRS, Universit\'e de Paris Dauphine, PSL Research University}\\
{\footnotesize Place du Mar\'echal De Lattre De Tassigny, 75016 Paris, France}\\
{\footnotesize \tt olla@ceremade.dauphine.fr}\\
\\
{Stefano Marchesani\\
GSSI, \\
{\footnotesize Viale F. Crispi 7, 67100 L'Aquila, Italy}}\\
{\footnotesize \tt stefano.marchesani@gssi.it}\\

\end{document}